\newtheorem{theorem}{Theorem}[section]
\newtheorem{lemma}[theorem]{Lemma}
\newenvironment{proof}{{\sc Proof. }}{\hfill$\Box$\vspace{0.1in}}
\title{An Approximation Algorithm for Maximum Internal Spanning Tree}
\author{
Zhi-Zhong Chen\thanks{ Corresponding author. 
Division of Information System Design, 
Tokyo Denki University, 
Hatoyama, Saitama 350-0394, Japan. 
Email: \mbox{{\tt zzchen@mail.dendai.ac.jp}}}
\and 
Youta Harada\thanks{
Division of Information System Design, 
Tokyo Denki University, 
Hatoyama, Saitama 350-0394, Japan.}
\and
Lusheng Wang\thanks{Department of Computer Science,
City University of Hong Kong, 
Tat Chee Avenue, Kowloon, Hong Kong SAR.
Email: {\tt cswangl@cityu.edu.hk}}
}
\date{}
\begin{document}
\maketitle
\begin{abstract}
Given a graph $G$, the {\em maximum internal spanning tree problem} (MIST for short) 
asks for computing a spanning tree $T$ of $G$ such that the number of internal vertices 
in $T$ is maximized. MIST has possible applications in the design of cost-efficient 
communication networks and water supply networks and hence has been extensively studied 
in the literature. MIST is NP-hard and hence a number of 
polynomial-time approximation algorithms have been designed for MIST in the literature. 
The previously best polynomial-time approximation algorithm for MIST achieves a ratio of 
$\frac{3}{4}$. In this paper, we first design a simpler algorithm that achieves the same 
ratio and the same time complexity as the previous best. We then refine the algorithm into 
a new approximation algorithm that achieves a better ratio (namely, $\frac{13}{17}$) with 
the same time complexity. Our new algorithm explores much deeper structure of the problem 
than the previous best. The discovered structure may be used to design 
even better approximation or parameterized algorithms for the problem in the future. 
\end{abstract}

{\bf Keywords:} 
Approximation algorithms, spanning trees, path-cycle covers.

\section{Introduction}\label{sec:intro}
The {\em maximum internal spanning tree problem} (MIST for short) requires the computation of 
a spanning tree $T$ in a given graph $G$ such that the number of internal vertices in $T$ is 
maximized. MIST has possible applications in the design of cost-efficient communication networks 
\cite{SW08} and water supply networks \cite{BFGL13}. Unfortunately, MIST is clearly NP-hard 
because the problem of finding a Hamiltonian path in a given graph is NP-hard \cite{GJ79} and 
can be easily reduced to MIST. MIST is in fact APX-hard \cite{LZW14} and hence does not admit 
a polynomial-time approximation scheme. 

Since MIST is APX-hard, it is of interest to design polynomial-time approximation algorithms 
for it that achieve a constant ratio as close to~1 as possible. 
Indeed, Prieto and Sliper \cite{PS03} presented a polynomial-time approximation algorithm 
for MIST achieving a ratio of $\frac{1}{2}$. Their algorithm is based on local search. 
By slightly modifying Prieto and Sliper's algorithm, Salamon and Wiener \cite{SW08} then obtained 
a faster (linear-time) approximation algorithm achieving the same ratio. Salamon and Wiener 
\cite{SW08} also considered two special cases of MISP. More specifically, they \cite{SW08} designed 
a polynomial-time approximation algorithm for the special case of MIST restricted to claw-free 
graphs that achieves a ratio of $\frac{2}{3}$, and also designed a polynomial-time approximation 
algorithm for the special case of MIST restricted to cubic graphs that achieves a ratio of 
$\frac{5}{6}$. Salamon \cite{Sal09} later proved that the approximation algorithm in \cite{SW08} 
indeed achieves a performance ratio of $\frac{3}{r+1}$ for the special case of MIST restricted 
to $r$-regular graphs. Based on local optimization, Salamon \cite{Sal09b} further came up with an 
$O(n^4)$-time approximation algorithm for the special of MIST restricted to graphs without leaves 
that achieves a ratio of $\frac{4}{7}$. The algorithm in \cite{Sal09b} was subsequently simplified 
and re-analyzed by Knauer and Spoerhase \cite{KJ09} so that it runs faster (in cubic time) and 
achieves a better ratio (namely, $\frac{3}{5}$) for (the general) MIST. Li {\em et al.} 
\cite{LCW14} even went further by showing that a deeper local search than those in \cite{KJ09} 
and \cite{Sal09b} can achieve a ratio of $\frac{2}{3}$ for MIST. Recently, Li and Zhu~\cite{LZW14} 
came up with a polynomial-time approximation algorithm for MIST that achieves a ratio of 
$\frac{3}{4}$. Unlike the other previously known approximation algorithms for MIST, the algorithm 
in \cite{LZW14} is based on a simple but crucial observation that the maximum number of internal 
vertices in a spanning tree of a graph $G$ can be bounded from above by the maximum number of edges 
in a triangle-free path-cycle cover of $G$. 

In the weighted version of MIST (WMIST for short), each vertex of the given graph $G$ has 
a nonnegative weight and the objective is to find a spanning tree $T$ of $G$ such that 
the total weight of internal vertices in $T$ is maximized. Salamon \cite{Sal09b} designed 
an $O(n^4)$-time approximation for WMIST that achieves a ratio of $\frac{1}{2\Delta-3}$, 
where $\Delta$ is the maximum degree of a vertex in the input graph. Salamon \cite{Sal09b} 
also considered the special case of WMIST restricted to claw-free graphs without leaves, 
and designed an $O(n^4)$-time approximation algorithm for the special case that achieves 
a ratio of $\frac{1}{2}$. Subsequently, Knauer and Spoerhase \cite{KJ09} proposed a 
polynomial-time approximation algorithm for (the general) WMIST that achieves a ratio 
of $\frac{1}{3} - \epsilon$ for any constant $\epsilon > 0$. 

In the parameterized version of MIST (PMIST for short), we are asked to decide whether 
a given graph $G$ has a spanning tree with at least a given number $k$ of internal vertices. 
PMIST and its special cases and variants have also been extensively studied in the literature 
\cite{BFGL13,CRG+10,FLGS12,FGST13,LJF16,LWCC15,PS03,Pre03,PS05}. The best known kernel for 
PMIST is of size $2k$ and it leads to the fastest known algorithm for PMIST with running time 
$O(4^kn^{O(1)})$ \cite{LWCC15}.

In this paper, we first give a new approximation algorithm for MIST that is simpler than 
the one in \cite{LZW14} but achieves the same approximation ratio and time complexity. 
In more details, the time complexity is dominated by that of computing a maximum triangle-free 
path-cycle cover in a graph.  We then show that the algorithm can be refined into a new 
approximation algorithm for MIST that has the same time complexity as the algorithm 
in~\cite{LZW14} but achieves a better ratio (namely, $\frac{13}{17}$). 
To obtain our algorithm, we use three new main ideas. The first main idea 
is to bound the maximum number of internal vertices in a spanning tree of a graph $G$ by the 
maximum number of edges in a {\em special} (rather than general) triangle-free path-cycle cover 
of $G$. Roughly speaking, we can figure out that certain vertices in $G$ must be leaves 
in an optimal spanning tree of $G$, and hence we can require that the degrees of these 
vertices be at most~1 when computing a maximum triangle-free path-cycle cover ${\cal C}$ 
of $G$. In this sense, ${\cal C}$ is special and can have significantly fewer edges than 
a maximum (general) triangle-free path-cycle cover of $G$, and hence gives us a tighter 
upper bound. The second idea is to carefully modify ${\cal C}$ into a spanning tree $T$ 
by local improvement. Unfortunately, we can not always guarantee that the number of 
internal vertices in $T$ is at least $\frac{13}{17}$ times the number of edges in ${\cal C}$. 
Our third idea is to show that if this unfortunate case occurs, then an optimal spanning 
tree of $G$ cannot have so many internal vertices. These ideas may be used to design 
even better approximation or parameterized algorithms for MIST in the future. 

The remainder of this paper is organized as follows. Section~\ref{sec:def} gives basic 
definitions that will be used in the remainder of the paper. Section~\ref{sec:simple} 
presents a simple approximation algorithm for MIST that achieves a ratio of $\frac{3}{4}$. 
The subsequent sections are devoted to refining the algorithm so that it achieves a better
ratio.

\section{Basic Definitions}\label{sec:def}
Throughout this chapter, a graph means a simple undirected graph (i.e., it has neither 
parallel edges nor self-loops). 

Let $G$ be a graph. We denote the vertex set of $G$ by $V(G)$, and denote the edge set of $G$ 
by $E(G)$. For a subset $U$ of $V(G)$, $G-U$ denotes the graph obtained from $G$ by removing 
the vertices in $U$ (together with the edges incident to them), while $G[U]$ denotes 
$G - (V(G)\setminus U)$. We call $G[U]$ the {\em subgraph of $G$ induced by $U$}. 
For a subset $F$ of $E(G)$, $G-F$ denotes the graph obtained from $G$ by removing the edges in $F$. 
An edge $e$ of $G$ is a {\em bridge} of $G$ if 
$G - \{e\}$ has more connected components than $G$, and is a {\em non-bridge} otherwise. 
A vertex $v$ of $G$ is a {\em cut-point} if $G - \{v\}$ has more connected components than $G$. 

Let $v$ be a vertex of $G$. The {\em neighborhood} of $v$ in $G$, denoted by $N_G(v)$, 
is $\{u~|~\{v,u\}\in E(G)\}$. The {\em degree} of $v$ in $G$, denoted by $d_G(v)$, is 
$|N_G(v)|$. If $d_G(v) = 0$, then $v$ is an {\em isolated} vertex of $G$. 
If $d_G(v) \le 1$, then $v$ is a {\em leaf} of $G$; otherwise, $v$ is a {\em 
non-leaf} of $G$. We use $L(G)$ to denote the set of leaves in $G$. 

Let $H$ be a subgraph of $G$. $N_G(H)$ denotes $\bigcup_{v\in V(H)} N_G(v) \setminus V(H)$. 
A {\em port} of $H$ is a $u \in V(H)$ with $N_G(u) \setminus V(H) \ne \emptyset$. 
When $H$ is a path, $H$ is {\em dead} if neither endpoint of $H$ is a port of $H$, 
while $H$ is {\em alive} otherwise. 
$H$ and another subgraph $H'$ of $G$ are {\em adjacent} in $G$ if $V(H) \cap V(H') = \emptyset$ 
but $N_G(H) \cap V(H') \ne \emptyset$ (or equivalently, $N_G(H') \cap V(H) \ne \emptyset$). 

A {\em cycle} in $G$ is a connected subgraph of $G$ in which each vertex is of degree~2. 
A {\em path} in $G$ is either a single vertex of $G$ or a connected subgraph of $G$ in which 
exactly two vertices are of degree~1 and the others are of degree~2. A vertex $v$ of a path 
$P$ in $G$ is an {\em endpoint} of $P$ if $d_P(v) \le 1$, and is an {\em internal vertex} of 
$P$ if $d_P(v) = 2$. The {\em length} of a cycle or path $C$ is the number of edges in $C$ 
and is denoted by $|C|$. A {\em $k$-cycle} is a cycle of length~$k$, while a {\em $k$-path} 
is a path of length $k$. A {\em tree} (respectively, {\em cycle}) {\em component} of $G$ is 
a connected component of $G$ that is a tree (respectively, cycle). In particular, if a tree 
component $T$ of $G$ is indeed a path (respectively, $k$-path), then we call $T$ a {\em path} 
(respectively, {\em $k$-path}) {\em component} of $G$. 

A {\em tree-cycle cover} (TCC for short) of $G$ is a subgraph $H$ of $G$ such that $V(H) 
= V(G)$ and each connected component of $H$ is a tree or cycle. Let $H$ be a TCC of $G$. 
$H$ is a {\em Hamiltonian path} (respectively, {\em cycle}) of $G$ if $H$ is a path 
(respectively, cycle), and is a {\em spanning tree} of $G$ if $H$ is a tree. 
$H$ is a {\em path-cycle cover} (PCC for short) of $G$ if each tree component of $H$ is 
a path. $H$ is a {\em path cover} of $G$ if $H$ has only path components. 
A {\em triangle-free} TCC (TFTCC for short) of $G$ is a TCC without 3-cycles. Similarly, 
a {\em triangle-free} PCC (TFPCC for short) of $G$ is a PCC without 3-cycles. A TFPCC of $G$ 
is {\em maximum} if its number of edges is maximized over all TFPCCs of $G$. For convenience, 
let $t(n,m)$ denote the time complexity of computing a maximum TFPCC in a graph with 
$n$ vertices and $m$ edges. It is known that $t(n,m) = O(n^2m^2)$~\cite{Har84}. 

Suppose that $G$ is connected. The {\em weight} of a spanning tree $T$ of $G$, denoted by $w(T)$, 
is the number of non-leaves in $T$. We use $opt(G)$ to denote the maximum weight of a spanning 
tree of $G$. An {\em optimal spanning tree} (OST for short) of $G$ is a spanning tree $T$ of $G$ 
with $w(T) = opt(G)$.

\section{A Simple 0.75-Approximation Algorithm}\label{sec:simple}
Throughout the remainder of this paper, $G$ means a connected graph for which we want to find 
an OST. Moreover, $T$ denotes an OST of $G$. For convenience, let $n = |V(G)|$ and $m = |E(G)|$.

\subsection{Reduction Rules}\label{subsec:rule}
We want to make $G$ smaller (say, by deleting one or more vertices or edges from $G$) 
without decreasing $opt(G)$. For this purpose, we define two {\em strongly safe} operations 
on $G$ below. Here, an operation on $G$ is {\em strongly safe} if performing it on $G$ does not 
change $opt(G)$. 

\begin{description}
\item[Operation 1.] If $|V(G)| > 3$ and $E(G)$ contains two edges $\{u_1,v\}$ and $\{u_2,v\}$ 
	such that both $u_1$ and $u_2$ are leaves of $G$, then delete $u_2$. 
\item[Operation 2.] If for a non-bridge $e = \{u_1,u_2\}$ of $G$, $G - \{u_i\}$ has a connected 
	component $K_i$ with $u_{3-i} \not\in V(K_i)$ for each $i\in\{1,2\}$, then delete $e$. 
	({\em Comment:} When $|V(K_1)| = |V(K_2)| = 1$, 
	Li and Zhu~\cite{LZW14} showed that Operation~2 is strongly safe.) 
\end{description} 

\begin{center}
\begin{figure}[htpb]
\centerline{\includegraphics[scale=.9]{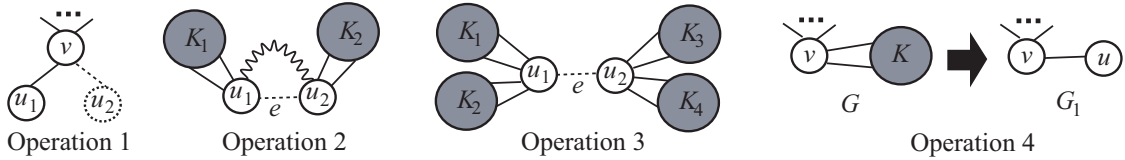}}
\caption{Operations 1 through 4, where the wavy curve is a path and each dotted edge and 
the vertex enclosed by a dotted circle will be deleted.}
\label{fig:op1-4}
\end{figure}
\end{center}

\begin{lemma}\label{lem:op1}{\rm \cite{LZW14}}
Operation 1 is strongly safe.
\end{lemma}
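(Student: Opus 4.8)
The plan is to show that Operation~1 does not change $opt(G)$ by arguing each inequality separately. Since Operation~1 only deletes the vertex $u_2$, which is a leaf of $G$ (so $d_G(u_2)=1$ with $N_G(u_2)=\{v\}$), let $G' = G - \{u_2\}$. I need to prove $opt(G') \le opt(G)$ and $opt(G) \le opt(G')$.

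For $opt(G') \le opt(G)$, I would take an OST $T'$ of $G'$ and extend it to a spanning tree $T$ of $G$ by adding the edge $\{u_2,v\}$. In $T$, the vertex $u_2$ becomes a leaf, so it does not contribute to $w(T)$, but I must check that $v$ is a non-leaf in $T$. The key observation is that $u_1$ is also a leaf of $G$ with $N_G(u_1)=\{v\}$, so in $T'$ (and hence in $T$) the only edge at $u_1$ is $\{u_1,v\}$, meaning $d_{T}(v) \ge d_{T'}(v) + 1 \ge 1 + 1 = 2$ once we note $T'$ must contain $\{u_1,v\}$ to span $u_1$. Thus $v$ is internal in $T$ whether or not it was internal in $T'$, and every vertex that was a non-leaf in $T'$ is still a non-leaf in $T$ (adding a pendant edge at $v$ can only increase degrees). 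One has to handle the small subtlety that $v$ might have been a leaf of $G'$: but then $v$ is internal in $T$ while contributing nothing to $w(T')$, so $w(T) \ge w(T') + 1 > w(T')$, which is even better. Hence $opt(G) \ge w(T) \ge w(T') = opt(G')$.

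For the reverse inequality $opt(G) \le opt(G')$, I would start with an OST $T$ of $G$ and produce a spanning tree $T'$ of $G'$ with $w(T') \ge w(T)$. Since $u_2$ is a leaf of $G$, $d_T(u_2)=1$ and $u_2$ is a leaf of $T$, so simply deleting $u_2$ from $T$ gives a spanning tree $T'$ of $G'$. Deleting a leaf can decrease $w$ by at most one, and only if $v$ was internal in $T$ solely because of $u_2$ and $u_1$ — i.e., $d_T(v)=2$ with $N_T(v)=\{u_1,u_2\}$. But in that case, after deleting $u_2$, $v$ still has the edge $\{u_1,v\}$, so $d_{T'}(v) = 1$ and $v$ becomes a leaf of $T'$, potentially losing one from the count. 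The fix is the condition $|V(G)|>3$: I would argue that if $N_T(v)=\{u_1,u_2\}$, then $T$ is the $2$-path $u_1$–$v$–$u_2$, forcing $|V(G)|=3$, a contradiction. Hence under the hypothesis $d_T(v)\ge 3$, so $d_{T'}(v)\ge 2$, $v$ stays internal, no vertex loses its non-leaf status, and $w(T') \ge w(T) = opt(G)$, giving $opt(G') \ge opt(G)$.

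The main obstacle is the careful bookkeeping around the vertex $v$: one must ensure that in both directions the only vertex whose leaf/non-leaf status can change is $v$, and that the hypotheses ($u_1$ a leaf of $G$, and $|V(G)|>3$) are exactly what is needed to pin down its status. In particular the role of $u_1$ is essential — it guarantees that the edge $\{u_1,v\}$ is forced into every spanning tree, so that $v$'s degree is propped up independently of $u_2$ — and the role of $|V(G)|>3$ is to rule out the degenerate $2$-path in the reverse direction. Once these two facts are isolated, the rest is the routine verification that attaching or removing a pendant edge changes no other degree in a way that matters. Combining the two inequalities yields $opt(G)=opt(G')$, so Operation~1 is strongly safe.
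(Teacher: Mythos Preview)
The paper does not supply its own proof of this lemma; it simply cites \cite{LZW14}. Your argument is correct and is the standard one: both leaves $u_1,u_2$ force the edges $\{u_1,v\},\{u_2,v\}$ into every spanning tree of $G$, so deleting $u_2$ drops $d_T(v)$ by exactly one, and the hypothesis $|V(G)|>3$ rules out $d_T(v)=2$. One minor simplification: in the direction $opt(G')\le opt(G)$ you do not actually need to verify that $v$ is internal in $T$; attaching a pendant edge to any spanning tree $T'$ never turns a non-leaf into a leaf, so $w(T)\ge w(T')$ is immediate regardless of the status of $v$.
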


\begin{lemma}\label{lem:op2}
Operation 2 is strongly safe.
\end{lemma}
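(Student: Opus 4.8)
The plan is to show that Operation~2 does not change $opt(G)$ by proving two inequalities: $opt(G-\{e\}) \le opt(G)$ and $opt(G-\{e\}) \ge opt(G)$. The first is immediate, since every spanning tree of $G-\{e\}$ is also a spanning tree of $G$ with the same set of non-leaves, so $opt(G-\{e\}) \le opt(G)$. The real content is the reverse inequality, and for this I would start from an OST $T$ of $G$ and, if $e = \{u_1,u_2\} \in E(T)$, transform $T$ into a spanning tree $T'$ of $G-\{e\}$ (equivalently, a spanning tree of $G$ not using $e$) with $w(T') \ge w(T)$.

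First I would dispose of the easy case: if $e \notin E(T)$, then $T$ itself is already a spanning tree of $G-\{e\}$ and we are done. So assume $e \in E(T)$. Removing $e$ from $T$ splits $T$ into two subtrees $T_1 \ni u_1$ and $T_2 \ni u_2$. The key structural fact to exploit is the hypothesis: $G-\{u_1\}$ has a connected component $K_1$ with $u_2 \notin V(K_1)$, and $G-\{u_2\}$ has a connected component $K_2$ with $u_1 \notin V(K_2)$. Since $e$ is a non-bridge, $G-\{e\}$ is still connected, so there is some edge $f = \{x,y\} \in E(G)\setminus\{e\}$ with $x \in V(T_1)$ and $y \in V(T_2)$; then $T' = (T - \{e\}) + \{f\}$ is a spanning tree of $G$ avoiding $e$. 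The work is to choose $f$ so that $w(T') \ge w(T)$. The only vertices whose leaf/non-leaf status can change between $T$ and $T'$ are $u_1, u_2, x, y$: passing from $T$ to $T'$ decreases $d_T(u_1)$ and $d_T(u_2)$ by one and increases $d_{T}(x)$ and $d_T(y)$ by one. So I want to pick $f$ so that any non-leaf demoted to a leaf among $\{u_1,u_2\}$ is compensated by a leaf promoted to a non-leaf among $\{x,y\}$.

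The main obstacle, and where the component hypothesis does its job, is ruling out the bad configuration in which $u_1$ (or $u_2$) has degree exactly $2$ in $T$ — so it becomes a leaf in $T'$ — while no reconnecting edge $f$ touches a $T$-leaf. I would argue as follows. Suppose $d_T(u_1) = 2$; let $u_1'$ be the other $T$-neighbor of $u_1$, lying in $V(T_1)$. Consider the component $K_1$ of $G - \{u_1\}$ with $u_2 \notin V(K_1)$. Because $e$ is a non-bridge, $T_2 \setminus \{u_1\}$-reachability forces $V(T_2) \cap V(K_1) = \emptyset$, so $K_1 \subseteq T_1 - \{u_1\}$; in particular $K_1$ is one of the connected components of $T_1 - \{u_1\}$, namely the one not containing $u_1'$ (if $u_1'$ itself lies in $K_1$ we instead get a path-rerouting inside $T_1$). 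Every edge of $G$ leaving $V(K_1)$ must go through $u_1$, since $K_1$ is a whole component of $G-\{u_1\}$. This pins down where reconnecting edges can and cannot be, and — crucially — it lets me find, inside $T_1$, a "free" leaf or a degree-$\ge 3$ vertex near $u_1$ that I can rewire onto the path to $u_2$ without losing weight; symmetrically on the $u_2$ side. I would organize the argument into a small number of cases according to $d_T(u_1), d_T(u_2) \in \{1,2\}$ versus $\ge 3$ (if both are $\ge 3$, then no weight is lost no matter which $f$ we take, and any $f$ works), handling the remaining cases by the component-based rerouting just sketched, and in each case exhibit the explicit $f$ and verify $w(T') \ge w(T)$ by the four-vertex local count. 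Combining with $opt(G-\{e\}) \le opt(G)$ yields $opt(G-\{e\}) = opt(G)$, i.e., Operation~2 is strongly safe. $\Box$
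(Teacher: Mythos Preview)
Your overall strategy---swap $e$ for a crossing edge $f$ and account for the at-most-four vertices whose degrees change---is the right one, and it is also what the paper does. But your execution of the hard case has a concrete error and then dissolves into hand-waving, so the proof is not complete.

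The error: when $d_T(u_1)=2$, the tree $T_1$ has $u_1$ as a leaf (its only $T_1$-neighbor is $u_1'$), so $T_1-\{u_1\}$ has \emph{exactly one} connected component, not several. That single component must equal $V(K_1)$: it contains $K_1$ (as you argued), and since it is connected in $G-\{u_1\}$ and meets $K_1$, it is contained in $K_1$. In particular $u_1'\in K_1$; your ``the component not containing $u_1'$'' does not exist. Once you have $V(T_1)\setminus\{u_1\}=V(K_1)$, the correct consequence is that \emph{every} $T_1$--$T_2$ crossing edge is incident to $u_1$ (edges out of $K_1$ in $G$ can only hit $u_1$). Symmetrically, if $d_T(u_2)=2$ then every crossing edge is incident to $u_2$. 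Hence both degrees cannot be $2$: that would force the unique crossing edge to be $\{u_1,u_2\}=e$, contradicting that $e$ is a non-bridge. Also note $d_T(u_i)\ge 2$ always, since $u_i$ is a cut-point of $G$ (it separates $K_i$ from $u_{3-i}$); your case list should not include $d_T(u_i)=1$.

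With these two observations the argument finishes cleanly: WLOG $d_T(u_1)\ge 3$. If $d_T(u_2)\ge 3$, any crossing $f\neq e$ works. If $d_T(u_2)=2$, every crossing $f\neq e$ is of the form $\{x,u_2\}$ with $x\in V(T_1)\setminus\{u_1\}$; swapping gives $d_{T'}(u_1)\ge 2$, $d_{T'}(u_2)=2$, $d_{T'}(x)=d_T(x)+1$, so $w(T')\ge w(T)$. The paper reaches the same conclusion by a slightly different route: it exhibits a component $K_3$ of $G-\{u_1,u_2\}$ adjacent to both $u_1$ and $u_2$ (this is where the non-bridge hypothesis enters), observes that one of $u_1,u_2$ must have a $T$-neighbor in $K_3$ (giving $d_T\ge 3$ on that side), and then reconnects through $K_3$. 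Your ``free leaf / rewiring'' paragraph should be replaced by one of these two arguments; as written it does not establish anything.
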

\begin{proof}
If $e \not\in E(T)$, we are done. So, assume that $e \in E(T)$. Obviously, at least one vertex 
$v_1$ of $K_1$ is adjacent to $u_1$ in $T$ because $T$ is connected. So, $\{u_2,v_1\} \subseteq 
N_T(u_1)$. Similarly, $\{u_1,v_2\} \subseteq N_T(u_2)$ for some vertex $v_2$ of $K_2$. Moreover, 
since $e$ is a non-bridge of $G$, $G - \{u_1,u_2\}$ has a connected component $K_3$ (other than 
$K_1$ and $K_2$) with $\{u_1,u_2\} \subseteq N_G(K_3)$. Since $T$ is connected, $u_1$ or $u_2$ 
is adjacent to a vertex $v_3$ of $K_3$ in $T$. We assume that $v_3 \in N_T(u_1)$; the other 
case is similar. Then, after deleting $e$ from $T$, only $u_2$ may become a new leaf. If $u_2$ 
becomes a leaf in $T - \{e\}$, then all vertices of $K_3$ must belong to the component tree of 
$T - \{e\}$ containing $u_1$ and hence adding an arbitrary edge $\{u_2,v_4\}$ of $G$ with 
$v_4 \in V(K_3)$ to $T - \{e\}$ yields a new OST of $G$. So, we may assume that $u_2$ does not 
become a leaf in $T - \{e\}$. Then, since $e$ is a non-bridge of $G$, $G$ must have an edge 
$\{x_1,x_2\}$ such that for each $i \in \{1,2\}$, $x_i$ belongs to the component tree of 
$T - \{e\}$ containing $u_i$. Now, adding the edge $\{x_1,x_2\}$ to $T - \{e\}$ yields 
a new OST of $G$. 
\end{proof}

An operation on $G$ is {\em weakly safe} if performing it on $G$ yields one or more graphs 
$G_1$, \ldots, $G_k$ such that (1)~$|V(G)| \ge \sum_{i=1}^k |V(G_i)|$, 
$|E(G)| \ge \sum_{i=1}^k |E(G_i)|$, and $|V(G)| + |E(G)| > \sum_{i=1}^k |V(G_i)| 
+ \sum_{i=1}^k |E(G_i)|$, (2)~$opt(G) \le \sum_{i=1}^k opt(G_i) + c$ for some nonnegative 
integer $c$, and (3)~given a spanning tree $T_i$ for each $G_i$, a spanning tree $T$ of 
$G$ with $w(T) \ge \sum_{i=1}^k w(T_i) + c$ can be computed in linear time. 
Note that the last two conditions in the definition imply that 
$opt(G) = \sum_{i=1}^k opt(G_i) + c$. 

\begin{description}
\item[Operation 3.] If $G$ has a bridge $e = \{u_1,u_2\}$ such that for each $i\in\{1,2\}$, 
	$u_i$ is a cut-point in the connected component $G_i$ of $G - e$ with $u_i \in V(G_i)$, 
	then obtain $G_1$ and $G_2$ as the connected components of $G - e$. 
\item[Operation 4.] If $G$ has a cut-point $v$ such that one connected component $K$ of 
	$G - \{v\}$ has at least two but at most 8 vertices, then obtain $G_1$ from $G - V(K)$ 
	by adding a new vertex $u$ and a new edge $\{v,u\}$. 
\end{description} 

The number 8 in the definition of Operation~4 is not essential. It can be chosen at 
one's discretion as long as it is a constant. We here choose the number 8, because 
it will be the smallest number for the proofs of several lemmas in this paper to go through. 

\begin{lemma}\label{lem:op7}
Operation 3 is weakly safe.
\end{lemma}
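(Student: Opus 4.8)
The plan is to take $G_1$ and $G_2$ to be exactly the two connected components of $G-e$ and to verify the three defining conditions of weak safety with the constant $c=0$. The first condition is just a count: since $e$ is a bridge of the connected graph $G$, the sets $V(G_1)$ and $V(G_2)$ partition $V(G)$ and $E(G)$ is the disjoint union of $E(G_1)$, $E(G_2)$ and $\{e\}$, so $|V(G)|=|V(G_1)|+|V(G_2)|$, $|E(G)|=|E(G_1)|+|E(G_2)|+1$, and hence $|V(G)|+|E(G)|>\sum_i\bigl(|V(G_i)|+|E(G_i)|\bigr)$ while the two inequalities $|V(G)|\ge\sum_i|V(G_i)|$ and $|E(G)|\ge\sum_i|E(G_i)|$ hold trivially.

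The one place where an actual idea is needed is the following claim: because $u_i$ is a cut-point of $G_i$, it is a non-leaf in every spanning tree of $G_i$. To prove it I would argue that $G_i-\{u_i\}$ has at least two connected components, and that for any spanning tree $S$ of $G_i$ each component of the forest $S-\{u_i\}$ lies inside a single component of $G_i-\{u_i\}$; connectedness of $S$ then forces an $S$-edge from $u_i$ into each component of $G_i-\{u_i\}$, so $d_S(u_i)\ge 2$. (Implicit here, and worth stating, is that a cut-point of $G_i$ has degree at least $2$ there, so $|V(G_i)|\ge 3$ and the split into components is genuine.) In particular $u_i$ stays a non-leaf when the bridge $e$ is attached to it.

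With this in hand the weight accounting is mechanical. Since $e$ is a bridge, every spanning tree of $G$ uses $e$ and so decomposes uniquely as $S_1\cup S_2\cup\{e\}$ with $S_i$ a spanning tree of $G_i$; conversely $S_1\cup S_2\cup\{e\}$ is a spanning tree of $G$ for any spanning trees $S_1,S_2$ of $G_1,G_2$. Attaching $e$ only raises the degrees of $u_1$ and $u_2$, and both $u_1,u_2$ are already non-leaves in their respective $S_i$, so no vertex changes its leaf/non-leaf status; as $V(G_1)$ and $V(G_2)$ are disjoint this gives $w(S_1\cup S_2\cup\{e\})=w(S_1)+w(S_2)$. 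Taking $S_1\cup S_2\cup\{e\}$ to be an OST of $G$ yields $opt(G)\le opt(G_1)+opt(G_2)$ (the second condition), and, given spanning trees $T_1,T_2$ of $G_1,G_2$, outputting $T=T_1\cup T_2\cup\{e\}$ in linear time gives $w(T)=w(T_1)+w(T_2)\ge\sum_i w(T_i)+c$ (the third condition). The main obstacle is really just the cut-point observation; once it is established everything else is bookkeeping.
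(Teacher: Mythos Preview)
Your proof is correct and follows essentially the same approach as the paper's: both hinge on the observation that a cut-point $u_i$ of $G_i$ has degree at least~2 in every spanning tree of $G_i$, so that adding or removing the bridge $e$ never changes the leaf/non-leaf status of $u_1$ or $u_2$, giving $w(T)=w(T_1)+w(T_2)$ in both directions. Your version is somewhat more explicit in verifying condition~(1) of weak safety and in spelling out the bijection between spanning trees of $G$ and pairs $(S_1,S_2)$, but the substance is the same.
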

\begin{proof}
First, we want to show that $opt(G) \le opt(G_1) + opt(G_2)$. 
Consider an $i \in \{1,2\}$. Since $u_i$ is a cut-point in $G_i$, 
$d_T(u_i) \ge 3$. Thus, the degree of $u_i$ in $T - \{e\}$ is at least~2. So, one 
component tree of $T - \{e\}$ is a spanning tree of $G_1$, the other is a spanning tree 
of $G_2$, and their total weights equals $w(T)$. Thus, $opt(G) \le opt(G_1) + opt(G_2)$.

Next, suppose that for each $i\in\{1,2\}$, $T_i$ is a spanning tree of $G_i$.
Since $u_i$ is a cut-point in $G_i$, $d_{T_i}(u_i) \ge 2$. So, using $e$ to connect $T_1$ 
and $T_2$ into a single tree yields a spanning tree of $G$ whose weight is $w(T_1)+w(T_2)$. 
\end{proof}

\begin{lemma}\label{lem:op8}
Operation 4 is weakly safe.
\end{lemma}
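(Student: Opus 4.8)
The plan is to establish the three defining properties of weak safety for the single graph $G_1$ produced by Operation~4, using the constant
\[
c \;:=\; \max\bigl\{\, \bigl|\{\, x \in V(K) \;:\; d_S(x) \ge 2 \,\}\bigr| \;:\; S \text{ is a spanning tree of } G[V(K)\cup\{v\}] \,\bigr\},
\]
i.e.\ the largest number of non-leaves lying in $V(K)$ over all spanning trees of $H := G[V(K)\cup\{v\}]$. Since $K$ is connected and $G$ is connected (so $v$ is adjacent to $V(K)$), $H$ is connected and $c$ is well defined; and since $|V(H)| = |V(K)| + 1 \le 9$, both $c$ and a witnessing spanning tree $S^{*}$ of $H$ can be found by brute force in $O(1)$ time — this is exactly the point where the constant $8$ is needed, as otherwise computing $c$ would itself be an instance of MIST. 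Condition~(1) is then a direct count: $|V(G_1)| = |V(G)| - |V(K)| + 1 \le |V(G)| - 1$, and because $|V(K)| \ge 2$ and $G$ is connected there is at least one edge of $G$ inside $V(K)$ and at least one edge of $G$ between $v$ and $V(K)$, so $|E(G)| \ge |E(G - V(K))| + 2 = |E(G_1)| + 1$; thus all three inequalities of condition~(1) hold.

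The engine for conditions~(2) and~(3) is a structural decomposition of spanning trees along the cut-point $v$. Because $K$ is a connected component of $G - \{v\}$, every edge of $G$ incident to a vertex of $V(K)$ has its other endpoint in $V(K) \cup \{v\}$; hence for any spanning tree $F$ of $G$, an $F$-path leaving $V(K)$ must pass through $v$, so no $F$-path between two vertices outside $V(K)$ enters $V(K)$ and no $F$-path from a vertex of $V(K)$ to $v$ leaves $V(K)\cup\{v\}$. Consequently $F[V(K)\cup\{v\}]$ is a spanning tree of $H$, $F - V(K)$ is a spanning tree of $G - V(K)$, and these two trees meet in exactly the vertex $v$; conversely, gluing any spanning tree of $G - V(K)$ and any spanning tree of $H$ along $v$ produces a spanning tree of $G$. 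I will also use that $G - V(K)$ has at least two vertices (otherwise $v$ would not be a cut-point of $G$), so $v$ has degree at least $1$ in every spanning tree of $G - V(K)$, and that $u$ is a leaf in every spanning tree of $G_1$.

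For condition~(2), I take an OST $F$ of $G$ and set $T_1 := (F - V(K)) + \{v,u\}$, a spanning tree of $G_1$. Every vertex of $V(G)\setminus(V(K)\cup\{v\})$ has the same degree in $F$ and in $T_1$; $v$ is a non-leaf of $T_1$ (its degree there is $d_{F-V(K)}(v)+1 \ge 2$); and $u$ is a leaf of $T_1$. Since $F[V(K)\cup\{v\}]$ is a spanning tree of $H$, the number of non-leaves of $F$ lying in $V(K)$ is at most $c$, whence $w(F) = w(T_1) + \bigl|\{x\in V(K): d_F(x)\ge 2\}\bigr| \le w(T_1) + c \le opt(G_1) + c$. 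For condition~(3), given a spanning tree $T_1$ of $G_1$, I put $F' := T_1 - u$ (a spanning tree of $G - V(K)$ with $d_{F'}(v) = d_{T_1}(v) - 1 \ge 1$) and $T := F' \cup S^{*}$, obtainable from $T_1$ in linear time; this is a spanning tree of $G$, in which every vertex of $V(G)\setminus(V(K)\cup\{v\})$ keeps its $T_1$-degree, $v$ has degree $d_{F'}(v) + d_{S^{*}}(v) \ge 2$ and is a non-leaf, and the vertices of $V(K)$ contribute exactly $c$ non-leaves by the choice of $S^{*}$; comparing with $w(T_1)$ — whose only discrepancies from these counts are at $v$ (contributing at most $1$) and at $u$ (contributing $0$) — yields $w(T) \ge w(T_1) + c$. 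The only step needing real care is the structural decomposition above, together with keeping the degree bookkeeping at $v$ consistent on its two sides so that the contribution of $v$ cancels in both directions; the rest is routine counting.
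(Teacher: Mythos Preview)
Your proof is correct and follows essentially the same approach as the paper. Both arguments decompose a spanning tree of $G$ along the cut-point $v$ into a spanning tree of $G-V(K)$ and a spanning tree of $H=G[V(K)\cup\{v\}]$, and then re-attach a pendant at $v$. The only cosmetic difference is the packaging of the constant: the paper sets $c=opt(K')-1$ for the auxiliary graph $K'=H+u'+\{v,u'\}$, whereas you define $c$ directly as the maximum number of $V(K)$-non-leaves over spanning trees of $H$; since $u'$ is forced to be a leaf and $v$ a non-leaf in every spanning tree of $K'$, the two definitions coincide. You also spell out condition~(1), which the paper leaves implicit.
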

\begin{proof}
Let $K'$ be the graph obtained from $G[V(K)\cup\{v\}]$ by adding a new vertex $u'$ and 
a new edge $\{u',v\}$. Let 
$c = opt(K') - 1$. 

First, we want to show that $opt(G) \le opt(G_1) + c$. 
Since $v$ is a cut-point of $G$, $d_T(v) \ge 2$. Let $T_1$ be the spanning tree 
of $G_1$ obtained from $T - V(K)$ by adding $u$ and the edge $\{v,u\}$. Further let 
$T'$ be the spanning tree of $K'$ obtained from $T[\{v\} \cup V(K)]$ by adding $u'$ and 
edge $\{u',v\}$. Clearly, $w(T) = w(T_1) + w(T') - 1 \le opt(G_1) + c$. 
Thus, $opt(G) \le opt(G_1) + c$. 
 
Next, suppose that $T_1$ is a spanning tree of $G_1$. Let $T'$ be an OST of $K'$. We can 
obtain a spanning tree $\tilde{T}$ of $G$ from $T_1$ by first deleting $u$, next adding 
$T'[V(K)]$, and further adding new edges to connect $v$ to those vertices of $V(K)$ that 
are adjacent to $v$ in $T'$. Obviously, $u \in L(T_1)$, $u' \in L(T')$, $v \not\in L(T_1)$, 
$v \not\in L(T')$, the degree of each vertex $x$ of $T_1$ other than $v$ and $u$ in $\tilde{T}$ 
is $d_{T_1}(x)$, and the degree of each vertex $y$ of $T'$ other than $v$ and $u'$ in 
$\tilde{T}$ is $d_{T'}(y)$. Thus, $w(\tilde{T}) = w(T_1) + w(T') - 1 = w(T_1) + c$. 
\end{proof}

An operation on $G$ is {\em safe} if it is strongly or weakly safe on $G$.

\subsection{The Algorithm}\label{subsec:simpleAlgo}
As in \cite{LZW14}, the algorithm is based on a lemma which says that $G$ has a path cover 
${\cal P}$ such that $opt(G)$ is bounded from above by the number of edges in ${\cal P}$. 
We next state the lemma in a stronger form and give an extremely simple proof. 

\begin{lemma}{\rm \label{lem:LZW}}
Given a spanning tree $\tilde{T}$ of $G$, we can construct a path cover ${\cal P}$ of $G$ 
such that $|E({\cal P})| \ge w(\tilde{T})$ and $d_{\cal P}(v) \le 1$ for each leaf $v$ of 
$\tilde{T}$. 
\end{lemma}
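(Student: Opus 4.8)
The plan is to construct $\mathcal{P}$ directly from $\tilde{T}$ by repeatedly breaking up ``branching'' at internal vertices until no vertex has degree more than~2, while being careful never to reduce the degree of a vertex below~2 once it has been made an internal vertex, and never to touch leaves of $\tilde{T}$ at all. Concretely, I would root $\tilde{T}$ arbitrarily and process it, or — cleaner — argue by induction on the number of edges removed. At each step, if some vertex $v$ still has $d(v) \ge 3$ in the current forest, pick two edges $e_1 = \{v, x\}$ and $e_2 = \{v, y\}$ incident to $v$ such that removing one of them does not disconnect a subtree containing a leaf of $\tilde{T}$ whose degree would drop to~0 — more carefully, I want to remove edges so that the count of edges never drops and the leaf-degree constraint is preserved.

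The key observation making this work is the following simple accounting. Whenever I split the tree at a high-degree vertex $v$ by deleting one incident edge, I can immediately add back a different edge of $G$ to reconnect two of the resulting pieces into a path, \emph{provided} such an edge exists; but in general I only need the weaker statement that $|E(\mathcal{P})| \ge w(\tilde{T})$, not that $\mathcal{P}$ is a single path. So here is the cleaner route: for each non-leaf vertex $v$ of $\tilde{T}$ with $d_{\tilde{T}}(v) = d \ge 2$, we will ``spend'' exactly one unit of edge-count to keep $v$ internal, and the bound $w(\tilde{T}) = |\{v : d_{\tilde{T}}(v)\ge 2\}|$ must be matched by $|E(\mathcal{P})|$. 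I would make this precise by the standard fact that in any tree with $\ell$ leaves, the number of edges is $n-1$ and the number of internal vertices is $n-\ell$; if I can produce a path cover in which the number of path components is at most $\ell$ (the number of leaves of $\tilde{T}$), then $|E(\mathcal{P})| \ge n - \ell = w(\tilde{T})$, since a path cover on $n$ vertices with $p$ components has exactly $n - p$ edges. So the real target reduces to: \emph{construct a path cover $\mathcal{P} \subseteq E(\tilde{T}) \cup E(G)$ — in fact $\subseteq E(\tilde{T})$ suffices — with at most $|L(\tilde{T})|$ path components, in which every leaf of $\tilde{T}$ has degree $\le 1$.}

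To get that, I would decompose $\tilde{T}$ itself into at most $|L(\tilde{T})|$ edge-disjoint paths. This is classical: a tree with $\ell$ leaves decomposes into $\lceil \ell/2 \rceil \le \ell$ edge-disjoint paths whose endpoints are among the leaves (pair up the leaves and take the unique tree-paths between paired leaves, peeling them off greedily, or equivalently repeatedly remove a maximal leaf-to-leaf path). Taking these as the components of $\mathcal{P}$, every vertex has degree $\le 2$ in $\mathcal{P}$ so it is a genuine path cover; each leaf of $\tilde{T}$ is an endpoint of at most one of these paths, hence $d_{\mathcal{P}}(v)\le 1$; and the number of components is at most $\ell = |L(\tilde{T})|$, giving $|E(\mathcal{P})| = n - (\text{number of components}) \ge n - \ell = w(\tilde{T})$. (If $\tilde{T}$ is a single vertex or an edge the claim is trivial, and one should take care of the parity / the case of an odd number of leaves by allowing one path to start at an internal branch vertex, which only helps.)

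The main obstacle is getting the leaf-degree bound and the component count simultaneously, i.e.\ making sure the path decomposition really can be chosen with \emph{every} leaf as an endpoint of \emph{at most one} path and with at most $\ell$ pieces. The cleanest way to nail this down is the peeling argument: repeatedly take a longest path $P$ between two leaves of the current forest, remove its edges, and recurse on the remaining components; each removal uses up at least two leaves of $\tilde{T}$ as the endpoints of $P$ (so the induction on $\ell$ goes through with at most $\lceil \ell/2\rceil$ paths), and a leaf of $\tilde{T}$ is ``used'' only when it becomes an endpoint of exactly one removed path. I expect that checking this peeling does not strand any vertex outside all paths and does not violate $d_{\mathcal P}(v)\le 1$ for $\tilde T$-leaves is the only place requiring genuine care; everything else is the one-line edge-count identity $|E(\mathcal P)| = n - (\#\text{components})$.
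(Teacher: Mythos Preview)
Your reduction to ``produce a path cover $\mathcal{P}\subseteq E(\tilde T)$ with at most $\ell=|L(\tilde T)|$ components'' is correct and is in fact exactly what the paper's construction achieves, but your proposed construction does not deliver it. An \emph{edge-disjoint} decomposition of $\tilde T$ into leaf-to-leaf paths is not a path cover in the paper's sense: a path cover is a spanning subgraph in which every vertex has degree at most~$2$, whereas the union of your edge-disjoint paths is all of $E(\tilde T)$, so every branch vertex of $\tilde T$ still has degree $\ge 3$. (Concretely, take a spider with center $c$ and three legs of length~$2$: any edge-disjoint leaf-to-leaf decomposition uses $c$ in two different paths, and $c$ has degree~$3$ in the union.) Your counting identity $|E(\mathcal P)| = n - (\#\text{components})$ is valid only for a genuine vertex-disjoint path cover, and the $\lceil \ell/2\rceil$ paths in the edge-disjoint decomposition are not the connected components of any such subgraph. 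Converting the edge-disjoint decomposition into a vertex-disjoint one forces you to delete further edges at every branch vertex, and you have not argued that the component count stays at most $\ell$ after those deletions; the ``stranding'' and ``leaf-degree'' concerns you flag are not where the argument breaks.

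The paper's proof sidesteps all of this with a one-line construction: root $\tilde T$ at any non-leaf, and for each non-leaf $u$ delete all but one of the edges from $u$ to its children. Every vertex then has degree at most~$2$ (at most one edge up to its parent, at most one edge down to its retained child), every leaf of $\tilde T$ has degree at most~$1$ (it has no children), and the number of surviving edges is exactly the number of non-leaves, i.e.\ $w(\tilde T)$. This directly gives a path cover with $|E(\mathcal P)| = w(\tilde T)$ and the required leaf-degree bound, with no peeling or induction needed.
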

\begin{proof}
We simply construct ${\cal P}$ from $\tilde{T}$ by first rooting $\tilde{T}$ at an arbitrary 
non-leaf and then for each non-leaf $u$ of $\tilde{T}$, deleting all but one edge between $u$ 
and its children. 
\end{proof}

Now, the outline of the algorithm is as follows.

\begin{enumerate}
\item\label{step:ssafe}
	Whenever there is an $i\in\{1,2\}$ such that Operation~$i$ can be performed 
	on $G$, then perform Operation $i$ on $G$. 
\item\label{step:wsafe} 
	Whenever there is an $i\in\{3,4\}$ such that Operation~$i$ can be performed 
	on $G$, then perform the following steps:
  \begin{enumerate}
  \item Perform Operation $i$ on $G$. Let $G_1$, \ldots, $G_k$ be the resulting graphs. 
  \item\label{substep:recur} 
	For each $j\in\{1,\ldots,k\}$, compute a spanning tree $T_j$ of $G_j$ recursively.
  \item Combine $T_1$, \ldots, $T_k$ into a spanning tree $\tilde{T}$ of $G$ such that 
	$w(\tilde{T}) \ge \sum_{i=1}^k w(T_i) + c$. 
  \item Return $\tilde{T}$. 
  \end{enumerate} 
\item\label{step:small} 
	If $|V(G)| \le 8$, then compute and return an OST of $G$ in $O(1)$ time. 
\item\label{step:mtfpc} 
	Compute a maximum TFPCC ${\cal C}$ of $G$. 
	({\em Comment:} By Lemma~\ref{lem:LZW}, $opt(G) \le |E({\cal C})|$). 

\item\label{step:prep} 
	Perform a preprocessing on ${\cal C}$ without decreasing $|E({\cal C})|$. 
\item\label{step:trans} 
	Transform ${\cal C}$ into a spanning tree $\tilde{T}$ of $G$ such that 
	$w(\tilde{T}) \ge \frac{3}{4}|E({\cal C})|$. 
\item Return $\tilde{T}$. 
\end{enumerate}

Only Steps~\ref{step:prep} and~\ref{step:trans} are unclear. So, we detail them below.
First, Step~\ref{step:prep} is done by performing the next three operations until none 
of them is applicable. 

\begin{description}
\item[Operation 5.] 
	If ${\cal C}$ has a dead path component $P$ such that $2 \le |P| \le 4$ and 
	$G[V(P)]$ has an alive Hamiltonian path $Q$, then replace $P$ by $Q$. 
\item[Operation 6.] 
	If an endpoint $u$ of a path component $P$ of ${\cal C}$ is adjacent to a vertex $v$ 
	of a cycle $C$ of ${\cal C}$ in $G$, then combine $P$ and $C$ into a single path by 
	replacing one edge incident to $v$ in $C$ with the edge $\{u,v\}$. 
\item[Operation 7.]
	If an endpoint $u_1$ of a path component $P_1$ of ${\cal C}$ is adjacent to an internal 
	vertex $u_2$ of another path component $P_2$ in $G$ such that one edge $e'$ incident to 
	$u_2$ in $P_2$ satisfies that combining $P_1$ and $P_2$ by replacing $e'$ with the edge 
	$\{u_1,u_2\}$ yields two paths $Q_1$ and $Q_2$ with $\max\{|Q_1|, |Q_2|\} > 
	\max\{|P_1|, |P_2|\}$, then replace $P_1$ and $P_2$ by $Q_1$ and $Q_2$. 
	({\em Comment:} For each $i\in\{5,6,7\}$, Operation~$i$ does not change the maximality 
	of ${\cal C}$. So, due to the maximality of ${\cal C}$, no endpoint of a path component 
	$P_1$ of ${\cal C}$ is adjacent to an endpoint of another path component $P_2$ in $G$.)
\end{description}

\begin{center}
\begin{figure}[htpb]
\centerline{\includegraphics[scale=.9]{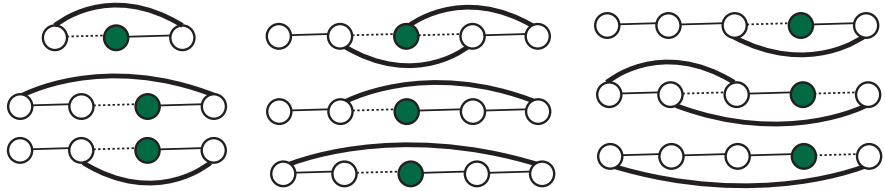}}
\caption{The possible cases of Operation 5, where each filled circle is a port, 
each dotted edge will be deleted, and each bold edge will be added.}
\label{fig:op5}
\end{figure}
\end{center}

\begin{center}
\begin{figure}[htpb]
\centerline{\includegraphics[scale=.9]{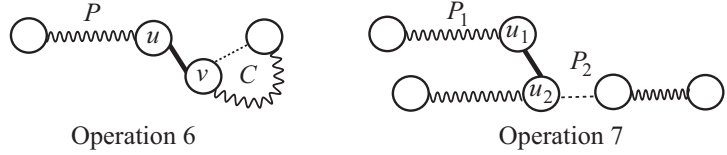}}
\caption{Operations 6 and 7, where each wavy line or curve is a path, 
each dotted edge will be deleted, and each bold edge will be added.}
\label{fig:op6-7}
\end{figure}
\end{center}

\begin{lemma}\label{lem:prep0}
Immediately after Step~\ref{step:prep}, the following statements hold:
\begin{enumerate}
\item\label{stat:q1}
	${\cal C}$ is a maximum TFPCC of $G$ and hence has at least $opt(G)$ edges. 
\item\label{stat:q3} 
	If a path component $P$ of ${\cal C}$ is of length at most~3, then $P$ is alive. 
\item\label{stat:q5} 
	If an endpoint $v$ of a path component $P$ of ${\cal C}$ is a port of $P$, then 
	each vertex in $N_G(v) \setminus V(P)$ is 
	an internal vertex of a path component $Q$ of ${\cal C}$ with $|Q| \ge 2|P| + 2$.
\end{enumerate}
\end{lemma}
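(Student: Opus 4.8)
The plan is to verify the three statements one at a time, since each is essentially a direct consequence of the fact that none of Operations~5--7 is applicable and that ${\cal C}$ remained a maximum TFPCC throughout Step~\ref{step:prep}.

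\medskip

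For statement~\ref{stat:q1}, I would first argue that each of Operations~5--7 preserves the property of being a maximum TFPCC: Operation~5 replaces a dead path by a Hamiltonian path on the same vertex set, so $|E({\cal C})|$ is unchanged and no 3-cycle is created; Operation~6 merges a path component and a cycle into one path, which (since the cycle contributed $|C|$ edges and the new path contributes $|P|+|C|$ edges on the combined vertex set, replacing one cycle edge by a new edge) leaves $|E({\cal C})|$ unchanged and removes a cycle rather than creating one; Operation~7 exchanges one edge for another and keeps the two components as paths, again with no change to $|E({\cal C})|$ and no new 3-cycle. Hence after Step~\ref{step:prep}, ${\cal C}$ is still a TFPCC with the same number of edges as the maximum TFPCC computed in Step~\ref{step:mtfpc}, so it is maximum, and by the Comment after Step~\ref{step:mtfpc} (which invokes Lemma~\ref{lem:LZW}) it has at least $opt(G)$ edges.

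\medskip

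For statement~\ref{stat:q3}, suppose toward a contradiction that ${\cal C}$ has a dead path component $P$ with $|P|\le 3$. Then $2\le|P|\le 4$ trivially holds (in fact $|P|\le 3$), and being dead means no endpoint of $P$ is a port. I would then note that $P$ itself is a Hamiltonian path of $G[V(P)]$; since $P$ is dead, I need a \emph{different} Hamiltonian path $Q$ of $G[V(P)]$ that is alive (i.e.\ for which some endpoint is a port of $Q$) in order to apply Operation~5. The point is that because ${\cal C}$ is a maximum TFPCC and $P$ is dead, $G[V(P)]$ cannot be merely the path $P$ with no other edges unless $V(G)=V(P)$; but in our setting $G$ is connected and (after Steps~\ref{step:small}) has more than $8$ vertices, so $P$ has a port somewhere, and since neither endpoint of $P$ is a port, the port must be an internal vertex of $P$. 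Having an edge from an internal vertex of $P$ to outside $V(P)$ together with the short length of $P$ lets one re-route $P$ into a Hamiltonian path $Q$ of $G[V(P)]$ whose endpoint is that former internal vertex, making $Q$ alive; this is exactly the hypothesis of Operation~5, which would therefore still be applicable — contradicting that Step~\ref{step:prep} has terminated. I expect the bookkeeping of the small number of cases ($|P|\in\{2,3\}$, and which internal vertex is the port) to be the only slightly fiddly part here, and this is where the constant ``$4$'' in Operation~5 (and hence the ``$8$'' elsewhere) is being used.

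\medskip

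For statement~\ref{stat:q5}, let $v$ be an endpoint of a path component $P$ that is a port, and let $w\in N_G(v)\setminus V(P)$. First I would rule out that $w$ is a vertex of a cycle of ${\cal C}$: if it were, Operation~6 would be applicable to $P$ and that cycle, contradiction. Next I rule out that $w$ is an endpoint of another path component: by the Comment after Operation~7, the maximality of ${\cal C}$ forbids an edge between endpoints of two path components. And $w$ cannot be an internal vertex of $P$ itself, nor can it lie on $P$ at all, since $w\notin V(P)$. So $w$ is an internal vertex of some path component $Q\ne P$. It remains to prove the length bound $|Q|\ge 2|P|+2$. Here I would apply the non-applicability of Operation~7 with $u_1=v$, $u_2=w$, $P_1=P$, $P_2=Q$: for \emph{each} of the two edges $e'$ incident to $w$ in $Q$, replacing $e'$ by $\{v,w\}$ splits into two paths $Q_1,Q_2$ with $\max\{|Q_1|,|Q_2|\}\le\max\{|P|,|Q|\}$. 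Writing $|Q|=a+b$ where $a,b\ge 1$ are the two ``sides'' of $w$ in $Q$, one of the two resulting pairs is $(|P|+1+a,\,b)$ and the other is $(|P|+1+b,\,a)$; forcing both maxima to be at most $\max\{|P|,|Q|\}$, and using $|P|+1+\max\{a,b\}>|P|$ together with $|P|+1+\max\{a,b\}\le|Q|=a+b$, yields $\min\{a,b\}\ge|P|+1$, hence $|Q|=a+b\ge 2(|P|+1)=2|P|+2$. The main obstacle across the whole lemma is statement~\ref{stat:q3}: one must be careful that the existence of an internal port of a short dead path really does let Operation~5 fire, and this is the place to be most attentive.
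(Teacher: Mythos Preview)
Your arguments for Statements~\ref{stat:q1} and~\ref{stat:q5} are correct and essentially match the paper's proof.

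Your argument for Statement~\ref{stat:q3}, however, has a genuine gap. You assert that once you know some \emph{internal} vertex $x$ of the dead short path $P$ is a port, ``the short length of $P$ lets one re-route $P$ into a Hamiltonian path $Q$ of $G[V(P)]$ whose endpoint is that former internal vertex.'' But such a re-routing requires $G[V(P)]$ to contain at least one edge not on $P$, and nothing you have invoked guarantees this. Concretely, take $|P|=2$ with $P=u_1\,x\,u_2$: a Hamiltonian path of $G[V(P)]$ with $x$ as an endpoint exists only if $\{u_1,u_2\}\in E(G)$. If $\{u_1,u_2\}\notin E(G)$ and both $u_1,u_2$ are leaves of $G$, then $G[V(P)]=P$, $P$ is dead, ${\cal C}$ is still a maximum TFPCC, and Operation~5 simply does not fire. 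Your side remark that ``$G[V(P)]$ cannot be merely the path $P$ \ldots\ because ${\cal C}$ is a maximum TFPCC'' is false: maximality of ${\cal C}$ constrains edges \emph{leaving} $V(P)$, not chords inside $V(P)$.

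The missing ingredient is that by the time Step~\ref{step:prep} runs, the graph $G$ has already been reduced by Operations~1--4 (Steps~\ref{step:ssafe}--\ref{step:small}). The paper's proof uses this explicitly: for $|P|=2$ with $\{u_1,u_2\}\notin E(G)$, both endpoints are leaves of $G$ sharing the neighbour $x$, so Operation~1 would apply---contradiction. For $|P|=3$, if only one internal vertex is a port then Operation~4 applies; if both are ports and $G[V(P)]=P$ then the middle edge of $P$ triggers Operation~2 or~3. Only when $G[V(P)]$ has a chord does the paper (like you) invoke Operation~5. So your ``fiddly bookkeeping'' is not merely fiddly: it requires the prior reduction rules on $G$, not just the non-applicability of Operations~5--7 on ${\cal C}$.
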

\begin{proof}
We prove the statements separately as follows.

{\em Statement~\ref{stat:q1}:} Immediately before Step~\ref{step:prep}, ${\cal C}$ has 
is a maximum TFPCC of $G$. Since Operations~5 through~7 keep ${\cal C}$ being 
a TFPCC without changing the number of edges in ${\cal C}$, Statement~\ref{stat:p1} holds. 

{\em Statement~\ref{stat:q3}:} Let $P$ be a path component of ${\cal C}$ with $|P|\le 3$. 
If $|P| \le 1$, then $P$ is alive because otherwise $G$ would be disconnected. 
So, $|P| = 2$ or 3. Let $u_1$ and $u_2$ be the endpoints of $P$. For a contradiction, 
assume that $P$ is dead. Then, since $G$ is connected, $P$ has at least 
one internal vertex $x$ adjacent to a vertex $x' \in V(G) \setminus V(P)$ in $G$. If 
$\{u_1,u_2\} \in E(G)$, then $G[V(P)]$ has a Hamiltonian path $Q$ in which $x$ is an 
endpoint, contradicting the fact that Operation~5 cannot be performed on ${\cal C}$. 
So, we assume that $\{u_1,u_2\} \not\in E(G)$. Now, if $|P| = 2$, then Operation~1 can 
be performed on $G$, a contradiction. Thus, we further assume that $|P| = 3$. 
Then, since Operation~4 cannot be performed on $G$, the other internal vertex $y$ 
(than $x$) of $P$ is adjacent to a vertex $y'\in V(G) \setminus V(P)$ in $G$. 
Now, if $G[V(P)]$ is not $P$ itself, then Operation~5 can be performed on ${\cal C}$, 
a contradiction; otherwise, Operation~2 or~3 can be performed on $G$, 
a contradiction. Note that it does not matter whether $x' = y'$ or not. 

{\em Statement~\ref{stat:q5}:} 
	Suppose that an endpoint $v$ of a path component $P$ of ${\cal C}$ is a port. 
	Consider an arbitrary $u \in N_G(v) \setminus V(P)$. Since Operation~6 is not 
	applicable on ${\cal C}$, $u$ appears in a path component $Q$ of ${\cal C}$. Then, 
	by the comment on Operation~7, $u$ is an internal vertex of $Q$. Let $u_1$ and $u_2$ be 
	the endpoints of $Q$. For each $i\in\{1,2\}$, let $Q_i$ be the path from $u$ to $u_i$ in $P$. 
	Then, $|Q| = |Q_1| + |Q_2|$. Moreover, since Operation~7 cannot be applied on ${\cal C}$, 
	$|P| + |Q_i| + 1 \le |Q|$ for each $i\in\{1,2\}$. Thus, $2|P| + 2 \le |Q|$.
\end{proof}

We next detail Step~\ref{step:trans}. First, for each path component $P$ of ${\cal C}$ with 
$1\le |P| \le 3$, we select one edge $e_P \in E(G)$ connecting an endpoint of $P$ to a vertex 
not in $P$, and add $e_P$ to an initially empty set $M$. 
Such $e_P$ exists by Statement~\ref{stat:q3} in Lemma~\ref{lem:prep0}. Moreover, 
by Statement~\ref{stat:q5} in Lemma~\ref{lem:prep0}, the endpoint of $e_P$ not in $P$ appears 
in a path component $Q$ of ${\cal C}$ with $|Q| \ge 4$. So, for two path components $P_1$ and 
$P_2$ in ${\cal C}$, $e_{P_1} \ne e_{P_2}$. Consider the graph $H$ obtained from ${\cal C}$ 
by adding the edges in $M$. Each connected component of $H$ is a cycle of length at least~4
or a tree. Suppose that we modify $H$ by performing the following three steps in turn:
\begin{itemize}
\item Whenever $H$ has two cycles $C_1$ and $C_2$ such that some edge $e=\{u_1,u_2\} \in E(G)$
satisfies $u_1\in V(C_1)$ and $u_2\in V(C_2)$, delete one edge of $C_1$ incident to $u_1$ 
from $H$, delete one edge of $C_2$ incident to $u_2$ from $H$, and add $e$ to $H$. 

\item Whenever $H$ has a cycle $C$, choose an edge $e=\{u,v\}\in E(G)$ with $u \in V(C)$ and 
$v\not\in V(C)$, delete one edge of $C$ incident to $u$ from $H$, and add $e$ to $H$. 

\item Whenever $H$ has two connected components $C_1$ and $C_2$ such that some edge 
$e=\{u_1,u_2\} \in E(G)$ satisfies $u_1\in V(C_1)$ and $u_2\in V(C_2)$, add $e$ to $H$. 
\end{itemize}

Step~\ref{step:trans} is done by obtaining $\tilde{T}$ as the final modified $H$. Obviously, 
for each cycle $C$ of ${\cal C}$, at least $|C| - 1\ge \frac{3}{4}|C|$ vertices of $C$ are 
internal vertices of $\tilde{T}$. Moreover, for each path component $P$ of ${\cal C}$ with 
$|P|\ge 4$, at least $|P| - 1 \ge \frac{3}{4}|P|$ vertices of $P$ are internal vertices of 
$\tilde{T}$. Furthermore, for each path component $P$ of ${\cal C}$ with $1\le |P|\le 3$, 
at least $|P|$ vertices of $P$ are internal vertices of $\tilde{T}$. So, $\tilde{T}$ has 
at least $\frac{3}{4}|E({\cal C})|$ internal vertices. Obviously, all steps of the algorithm 
excluding Steps~\ref{substep:recur} and~\ref{step:mtfpc} can be done in $O(|E(G)|^2)$ time. 
Now, we have the following theorem:

\begin{theorem}\label{th:main0}
The algorithm achieves an approximation ratio of $\frac{3}{4}$ and runs in $O(m^2) + t(n, m)$ time. 
\end{theorem}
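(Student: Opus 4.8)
The plan is to establish the two claims in Theorem~\ref{th:main0} separately: correctness of the approximation ratio, and the running time. Most of the work has in fact already been carried out in the running text preceding the statement, so the proof will mostly assemble the pieces and verify that the recursion bottoms out correctly.

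For the approximation ratio, I would argue by induction on $|V(G)| + |E(G)|$. The base cases are when no safe operation applies and $|V(G)| \le 8$ (Step~\ref{step:small}), where an OST is returned exactly. For the inductive step there are three cases. If a strongly safe operation (Operation~1 or~2) applies, then $opt(G)$ is unchanged by Lemmas~\ref{lem:op1} and~\ref{lem:op2}, and the measure $|V(G)|+|E(G)|$ strictly decreases, so the recursively computed tree already has weight at least $\frac34\,opt(G)$. If a weakly safe operation (Operation~3 or~4) applies, I would invoke Lemmas~\ref{lem:op7} and~\ref{lem:op8}: by the definition of weakly safe, $|V(G)|+|E(G)| > \sum_i |V(G_i)| + |E(G_i)|$, so induction gives $w(T_i) \ge \frac34\,opt(G_i)$ for each $i$; combining via condition~(3) yields a spanning tree $\tilde T$ of $G$ with $w(\tilde T) \ge \sum_i w(T_i) + c \ge \frac34\sum_i opt(G_i) + c \ge \frac34\bigl(\sum_i opt(G_i) + c\bigr) = \frac34\,opt(G)$, where the last inequality uses $c \ge 0$ and the last equality uses the remark that conditions~(2) and~(3) force $opt(G) = \sum_i opt(G_i) + c$. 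Finally, if none of Operations~1--4 applies and $|V(G)| > 8$, the algorithm reaches Step~\ref{step:mtfpc}; here $opt(G) \le |E({\cal C})|$ by Lemma~\ref{lem:LZW} applied to an OST $T$ of $G$ (the path cover it produces is a TFPCC, in fact a path cover has no cycles at all, so it is a candidate and ${\cal C}$ is at least as large), and Steps~\ref{step:prep}--\ref{step:trans} produce $\tilde T$ with $w(\tilde T) \ge \frac34 |E({\cal C})| \ge \frac34\,opt(G)$, exactly as shown in the text via the counting of internal vertices contributed by cycles of ${\cal C}$, long path components, and short path components together with the matching edges in $M$. This completes the induction.

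For the running time, I would again trace the recursion. Every application of a strongly safe operation strictly decreases $|V(G)|+|E(G)|$; every application of a weakly safe operation splits $G$ into pieces whose total size is strictly smaller and that are then handled recursively, and the combining step runs in linear time by condition~(3) of weak safety. Detecting whether any of Operations~1--4 applies, and performing it, takes $O(m^2)$ time (finding bridges, cut-points, and small components is polynomial; the constant bound~$8$ in Operation~4 keeps the component-search cheap). Hence the total time spent outside the single call to compute a maximum TFPCC in Step~\ref{step:mtfpc} is $O(m^2)$ summed over the whole recursion tree, since the sizes of the subproblems sum to at most $n+m$ at each level and the depth contributes only a polynomial factor already absorbed in $O(m^2)$; Steps~\ref{step:prep} and~\ref{step:trans}, i.e. Operations~5--7 and the final reconnection of $H$, are likewise $O(m^2)$ as noted. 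The one maximum-TFPCC computation costs $t(n,m)$. Summing gives $O(m^2) + t(n,m)$.

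The main obstacle is being careful that the recursion is invoked on genuinely smaller instances and that the "weakly safe'' bookkeeping composes correctly across the recursion — in particular that $opt(G) = \sum_i opt(G_i) + c$ (not merely $\le$), which is exactly the content of the remark after the definition of weakly safe, and that the additive constant $c$ is nonnegative so that scaling by $\frac34$ is legitimate. A secondary point worth a sentence is that Step~\ref{step:mtfpc} is reached only when $|V(G)| > 8$ and none of Operations~1--4 applies, so Lemma~\ref{lem:prep0} is available to justify that the edges $e_P$ and the set $M$ in Step~\ref{step:trans} are well defined and pairwise distinct; this is what makes the $\frac34$ counting in Steps~\ref{step:prep}--\ref{step:trans} go through. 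Everything else is routine.
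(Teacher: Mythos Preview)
Your proposal is correct and follows essentially the same approach as the paper, which in fact leaves the theorem without a separate proof and relies on the running text you are assembling. One small slip: after Operation~3 the recursion can branch, so Step~\ref{step:mtfpc} may be invoked on several disjoint subgraphs rather than exactly once; the total cost is still at most $t(n,m)$ because the pieces have total size at most $(n,m)$ and $t(n,m)=O(n^2m^2)$ is superadditive, but you should say this rather than ``the one maximum-TFPCC computation.''
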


In the sequel, we consider how to improve the algorithm. The first idea is to introduce 
more safe reduction rules (cf. Section~\ref{sec:reduce}). The second idea is to compute a 
better upper bound on $opt(G)$ than that given by a maximum TFPCC (cf. Section~\ref{sec:tfpcc}). 
The third idea is to perform a more sophisticated preprocessing on ${\cal C}$ 
(cf. Section~\ref{sec:prep}). The last idea is to transform ${\cal C}$ into a spanning tree 
of $G$ more carefully (cf. Section~\ref{sec:trans}).

\section{More Safe Reduction Rules}\label{sec:reduce}
In addition to the four safe reduction rules in Section~\ref{subsec:rule}, 
we further introduce the following rules. 

\begin{description}
\item[Operation 8.] If for four vertices $u_1$, \ldots, $u_4$, $N_G(u_3) = N_G(u_4) = \{u_1,u_2\}$, 
	$G - \{u_2\}$ has a connected component $K$ with $u_1\not\in V(K)$, then 
	delete the edge $e = \{u_2,u_3\}$. 
\item[Operation 9.] If for five vertices $u_1$, \ldots, $u_5$, $N_G(u_3) = N_G(u_4) = N_G(u_5) = 
	\{u_1,u_2\}$, 	then delete the edge $e = \{u_2,u_3\}$. 
\item[Operation 10.] If for two vertices $u$ and $v$ of $G$, $G - \{u,v\}$ has a connected component 
	$K$ with $|V(K)| \le 6$ such that $V(G) \ne V(K) \cup \{u,v\}$ and $G[V(K) \cup \{u,v\}]$ has 
	a Hamiltonian path $P$ from $u$ to $v$, then delete all edges of $G[V(K) \cup \{u,v\}]$ that 
	do not appear in $P$. 
\item[Operation 11.] If $G$ has an edge $e = \{u_1,u_2\}$ with $d_G(u_1) = d_G(u_2) = 2$, 
	then obtain $G_1$ from $G$ by merging $u_1$ and $u_2$ into a single vertex $u_1u_2$. 
\end{description} 

\begin{center}
\begin{figure}[htpb]
\centerline{\includegraphics[scale=.9]{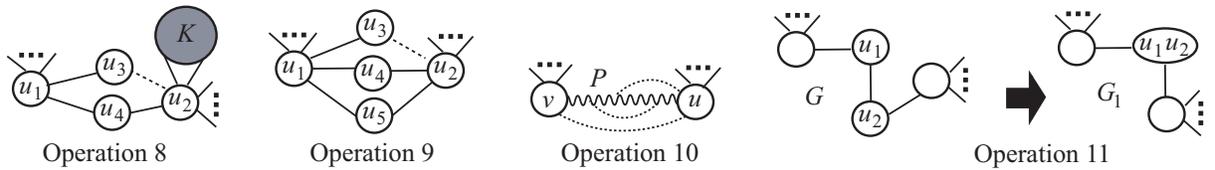}}
\caption{Operations 8 through~11, where the wavy line is a path and 
each dotted edge will be deleted.}
\label{fig:op8-11}
\end{figure}
\end{center}

\begin{lemma}\label{lem:op3}
Operation 8 is strongly safe.
\end{lemma}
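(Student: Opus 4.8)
The plan is to show that $opt(G) = opt(G-\{e\})$ where $e = \{u_2,u_3\}$. The inequality $opt(G-\{e\}) \le opt(G)$ is trivial since every spanning tree of $G-\{e\}$ is a spanning tree of $G$, so the real work is to prove $opt(G) \le opt(G-\{e\})$, i.e., that there is an OST of $G$ avoiding $e$. First I would take an OST $T$ of $G$; if $e \notin E(T)$ we are done, so assume $e \in E(T)$. The key structural observations to exploit are: (i) $u_3$ has degree~$2$ in $G$, with $N_G(u_3) = \{u_1,u_2\}$, so in $T$ either $u_3$ is a leaf hanging off $u_2$ via $e$, or $u_3$ is internal with $d_T(u_3)=2$ and both edges $\{u_1,u_3\}$ and $\{u_2,u_3\} = e$ present; (ii) $u_4$ is an identical twin of $u_3$ ($N_G(u_4)=\{u_1,u_2\}$), so $u_4$ gives a spare ``parallel'' connection between $u_1$ and $u_2$; (iii) $G-\{u_2\}$ has a component $K$ with $u_1 \notin V(K)$, which means every path in $G$ from $u_1$ to a vertex of $K$ must pass through $u_2$ — in particular $u_2$ is a cut-point separating $u_1$ from $K$, so $d_T(u_2)\ge 2$ and the edge of $T$ from $u_2$ toward $K$ cannot be deleted without disconnecting $K$ from $u_1$.

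The main case analysis is on the role of $u_3$ in $T$. If $u_3$ is a leaf of $T$ (so $e$ is the unique $T$-edge at $u_3$), then deleting $e$ makes $u_3$ isolated; I would reconnect it by adding $\{u_1,u_3\}$. This yields a spanning tree $T'$ with $e \notin E(T')$; I must check $w(T') \ge w(T)$. The only degrees that change are those of $u_1$ (increases, so it stays a non-leaf or becomes one — never hurts) and $u_2$ (decreases by one). The danger is $u_2$ dropping from degree~$2$ to degree~$1$, i.e. becoming a leaf. But if $d_T(u_2)=2$, one of its two edges goes ``toward $K$'' (forced by observation (iii), since $u_2$ separates $u_1$ from $K$ and $T$ is connected) and the other is $e = \{u_2,u_3\}$; then in $T-\{e\}$ the vertex $u_3$ would be severed from $K$ unless... — here I instead use the twin $u_4$: rather than adding $\{u_1,u_3\}$, route the reconnection so that $u_2$ keeps degree~$\ge 2$. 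Concretely, when $d_T(u_2)=2$ with neighbors $u_3$ and some $w\in V(K)$, I would remove $e$, add $\{u_1,u_3\}$ to reattach $u_3$, and add $\{u_2,u_4\}$ (legal since $u_4$'s neighbors are exactly $u_1,u_2$) after removing $u_4$'s current $T$-edge; a short check shows this keeps everything connected, makes $u_2$ internal, and does not create new leaves, so $w(T')\ge w(T)$. If instead $u_3$ is internal in $T$ with $d_T(u_3)=2$ via $\{u_1,u_3\}$ and $e$, then deleting $e$ leaves $u_3$ attached to $u_1$ as a leaf; now $u_3$ and $u_1$ lie in the same component of $T-\{e\}$, and I reconnect the other component (the one containing $u_2$ and hence $K$) by adding a different $G$-edge across the cut — e.g. $\{u_1,u_4\}$ or $\{u_2,u_4\}$ depending on which side $u_4$ fell on — and again verify that no vertex that was internal becomes a leaf, using that $d_G(u_3)=2$ forces the accounting to balance.

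The step I expect to be the main obstacle is the leaf-bookkeeping when $d_T(u_2)=2$: naively deleting $e$ can turn the internal vertex $u_2$ into a leaf, costing one in the weight, and I need the twin vertex $u_4$ together with the cut-point structure of observation (iii) to manufacture a replacement edge at $u_2$ without introducing any other new leaf. I would organize this as: (a) identify, using that $u_2$ separates $u_1$ from $K$, exactly which component of $T-\{e\}$ each of $u_1$, $u_4$, and the bulk of $K$ lands in; (b) pick the reconnecting edge from the constant-size menu $\{\{u_1,u_3\},\{u_1,u_4\},\{u_2,u_4\},\{u_2,u_3\}\}$ (minus $e$) that both restores connectivity and keeps $d(u_2)\ge 2$; (c) confirm the degree of every vertex in the new tree is at least its degree in $T$ except possibly at $u_4$, whose degree was $\ge 1$ and which we ensure becomes internal or was a leaf already. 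This mirrors the pattern already used in the proof of Lemma~\ref{lem:op2}, so once the case split is laid out the verification in each case is a routine few-line check.
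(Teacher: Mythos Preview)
Your approach is essentially the paper's: assume $e\in E(T)$, use that $u_2$ is a cut-point (so it has a $T$-neighbour $v\in K$, giving $d_T(u_2)\ge 2$), and then reconnect $T-\{e\}$ using one of the four supports $\{u_i,u_j\}$ with $i\in\{1,2\}$, $j\in\{3,4\}$. The paper organises the cases by which side of $T-\{e\}$ the vertices $u_4$ and then $u_1$ land on, rather than by whether $u_3$ is a leaf, but the content is the same.

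Two points worth cleaning up. First, your ``danger'' subcase in Case~A ($u_3$ a leaf and $d_T(u_2)=2$) is in fact impossible, so the two-step swap you design for it is never needed. You already have the ingredients: $u_2$ separates $K$ from the component $K'$ of $G-\{u_2\}$ containing $u_1$, so $u_2$ has a $T$-neighbour in $K$ \emph{and} a $T$-neighbour in $K'$; when $u_3$ is a leaf of $T$ that $K'$-neighbour cannot be $u_3$, hence $d_T(u_2)\ge 3$ automatically and the simple swap $e\mapsto\{u_1,u_3\}$ already works. (This is exactly what the paper's terse ``clearly'' hides in its third case.) Second, in your Case~B the assertion ``no vertex that was internal becomes a leaf'' is not literally true --- $u_3$ does drop from degree~$2$ to degree~$1$ --- but $u_4$ was a leaf and becomes internal, so the leaf count is unchanged; that is what you should say.
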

\begin{proof}
If $e \not\in E(T)$, we are done. So, assume that $e \in E(T)$. Obviously, at least one vertex 
$v$ of $K$ is adjacent to $u_2$ in $T$ because $T$ is connected. So, $\{u_3,v\} \subseteq 
N_T(u_2)$. For each $i\in\{2,3\}$, let $T_i$ be the component tree of $T - \{e\}$ in which 
$u_i$ appears. If $u_4 \in V(T_3)$, then $u_4$ is a leaf of $T$ and hence adding the edge 
$\{u_2,u_4\}$ to $T - \{e\}$ clearly yields a spanning tree $\tilde{T}$ of $G$ with $|L(\tilde{T})| = 
|L(T)|$. So, we assume $u_4 \in V(T_2)$. If $u_1 \in V(T_3)$, then $u_4$ is a leaf of $T$ 
and hence adding the edge $\{u_1,u_4\}$ to $T - \{e\}$ clearly yields a spanning tree $\tilde{T}$ 
of $G$ with $|L(\tilde{T})| = |L(T)|$. Otherwise, $u_3$ is a leaf of $T$ and hence adding the edge 
$\{u_1,u_3\}$ to $T - \{e\}$ clearly yields a spanning tree $\tilde{T}$ of $G$ with $|L(\tilde{T})| = |L(T)|$. 
\end{proof}

\begin{lemma}\label{lem:op4}
Operation 9 is strongly safe.
\end{lemma}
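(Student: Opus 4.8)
The plan is to follow the template of the proofs of Lemmas~\ref{lem:op2} and~\ref{lem:op3}: I will show that $G$ has an OST that avoids the edge $e=\{u_2,u_3\}$. Since $G-\{e\}$ is still connected (in $G-\{e\}$ the vertex $u_3$ keeps its edge to $u_1$, and $u_1,u_2$ remain joined through $u_4$, so any path through $e$ can be rerouted), every spanning tree of $G-\{e\}$ is a spanning tree of $G$ and conversely any OST of $G$ avoiding $e$ is a spanning tree of $G-\{e\}$; hence it suffices to produce such an OST. So let $T$ be an arbitrary OST of $G$. If $e\notin E(T)$ we are done; otherwise I will modify $T$ by a local rewiring into a spanning tree $\tilde T$ with $e\notin E(\tilde T)$ and $w(\tilde T)\ge w(T)$, which forces $w(\tilde T)=opt(G)$.

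The structural observation driving the argument is that at most one of $u_3,u_4,u_5$ is an internal vertex of $T$: if two of them, say $u_i$ and $u_j$, had $T$-degree~$2$, then, since each is adjacent in $G$ only to $u_1$ and $u_2$, all four edges $\{u_1,u_i\},\{u_2,u_i\},\{u_1,u_j\},\{u_2,u_j\}$ would lie in $T$, producing the $4$-cycle $u_1u_iu_2u_j$ and contradicting acyclicity. Hence at least two of $u_3,u_4,u_5$ are leaves of $T$. I then split into two cases. If $u_3$ is a leaf of $T$, its unique $T$-edge is $e$, and at least one of $u_4,u_5$, say $u'$, is also a leaf, whose unique $T$-edge is $\{u_1,u'\}$ or $\{u_2,u'\}$; in the first subcase I take $\tilde T = T-\{e\}-\{u_1,u'\}+\{u_1,u_3\}+\{u_2,u'\}$, which preserves every vertex degree, and in the second subcase I first note that $d_T(u_2)\ge 3$ (otherwise the component of $T$ containing $u_2$ would be the $3$-vertex path $u_3u_2u'$, impossible since $G$ is connected with at least five vertices) and then take $\tilde T = T-\{e\}+\{u_1,u_3\}$. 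If instead $u_3$ is internal, then $\{u_1,u_3\},\{u_2,u_3\}\in E(T)$ and $u_4,u_5$ are both leaves; deleting $e$ turns $u_3$ into a leaf, and I compensate by turning one of $u_4,u_5$ internal: if some $u_j\in\{u_4,u_5\}$ is attached in $T$ only to $u_1$, I take $\tilde T = T-\{e\}+\{u_2,u_j\}$, and otherwise $u_4,u_5$ are both attached only to $u_2$, so $d_T(u_2)\ge 3$ and I take $\tilde T = T-\{e\}+\{u_1,u_4\}$. In each case I will check that $\tilde T$ is a spanning tree (removing the bridge $e$ of $T$ separates it into a part containing $u_1,u_3$ and a part containing $u_2$, and the added edge or edges reconnect these parts without creating a cycle) and that the number of internal vertices does not drop.

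The main obstacle is not a single idea but the case bookkeeping: whenever a rewiring lowers the degree of a hub $u_1$ or $u_2$, one must guarantee it does not fall below~$2$, which is exactly why the argument branches on whether the relevant hub has degree~$2$ and, in those tight situations, either switches to the other of $u_4,u_5$ (which is then forced to be attached to the other hub) or uses a two-edge rather than a one-edge swap. A secondary, purely routine, point is to confirm in every subcase that the edges being added actually lie in $E(G)$ and that the result is connected and acyclic. I expect the write-up to consist mostly of enumerating these subcases and exhibiting the matching swap for each, in the style of the proof of Lemma~\ref{lem:op3}.
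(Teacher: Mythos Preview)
Your proof is correct and follows essentially the same approach as the paper's: both show that an OST $T$ containing $e$ can be locally rewired into one avoiding $e$ without losing internal vertices, via a short case analysis on which of $u_3,u_4,u_5$ are leaves and to which hub they are attached. The only difference is organizational---the paper dispatches the subcases where some leaf among $u_4,u_5$ is attached to $u_2$ by invoking the proof of Lemma~\ref{lem:op3}, whereas you write out all swaps explicitly; the concrete edge exchanges you use are the same ones the paper (and Lemma~\ref{lem:op3}) employ.
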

\begin{proof}
If $e \not\in E(T)$, we are done. So, assume that $e \in E(T)$. Obviously, $\{u_4,u_5\} 
\cap L(T) \ne \emptyset$. Moreover, if for some $i\in\{4,5\}$, $u_i \in L(T)$ and 
$\{u_2,u_i\} \in E(T)$, then the proof of Lemma~\ref{lem:op3} shows that $T$ can be 
transformed into a spanning tree $\tilde{T}$ such that $|L(\tilde{T})| \le |L(T)|$ and $e \not\in E(\tilde{T})$. 
Thus, we may assume that $u_5 \in L(T)$, $\{u_1,u_4\} \in E(T)$, and $\{u_1,u_5\} \in E(T)$. 
Obviously, either $N_T(u_3) = \{u_2\}$ or $N_T(u_3) = \{u_1,u_2\}$. In the latter case, 
adding the edge $\{u_2,u_5\}$ to $T - \{e\}$ clearly yields a spanning tree $\tilde{T}$ of $G$, 
and $|L(\tilde{T})| = |L(T)|$ holds for $u_5 \in L(T)$. So, we assume the former case. 
Let $e' = \{u_1,u_5\}$. Then, adding the edges $\{u_1,u_3\}$ and $\{u_2,u_5\}$ 
to $T - \{e,e'\}$ clearly yields a spanning tree $\tilde{T}$ of $G$ with $|L(\tilde{T})| = |L(T)|$. 
\end{proof}

\begin{lemma}\label{lem:op5}
Operation 10 is strongly safe.
\end{lemma}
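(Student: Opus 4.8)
\textbf{Proof proposal for Lemma~\ref{lem:op5} (Operation~10 is strongly safe).}

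The plan is to follow exactly the template used for Operations~2, 8, and~9: assume the operation actually changes $T$, and show $T$ can be rewired into an OST that avoids the deleted edges. Write $U = V(K)\cup\{u,v\}$, let $P$ be the fixed Hamiltonian path from $u$ to $v$ in $G[U]$, and let $F$ be the set of edges of $G[U]$ not on $P$ (the edges being deleted). If $E(T)\cap F=\emptyset$ we are done, so assume $T$ uses at least one edge of $F$. The first step is to understand how $T$ interacts with $U$: since $V(G)\ne U$ and $G$ is connected, at least one of $u,v$ is a port of $G[U]$, i.e.\ has a $T$-neighbor outside $U$. Let $T'$ be the forest $T[U]$ together with — actually, I would first delete from $T$ all edges with both endpoints in $U$, leaving a forest whose components each meet $U$; the key structural observation is that the components of $T$ touching the ``outside'' attach to $U$ only through $u$ and/or $v$ (because $K$ is a component of $G-\{u,v\}$, so no vertex of $V(K)$ has a neighbor outside $U$).

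The core of the argument is then a case analysis on how many of $u,v$ serve as the ``attachment points'' of $U$ to the rest of $T$, and on the number of $T$-components of $U$. The plan: after removing the internal edges $E(T)\cap E(G[U])$ (note $E(T)\cap E(G[U]) \subseteq F \cup E(P)$), reconnect $U$ using edges of $P$ only. If both $u$ and $v$ are attachment points (i.e.\ each has a neighbor outside $U$ in $T$), then adding all of $P$ reconnects everything into a spanning tree $\tilde T$; here I must check that no vertex's degree drops: vertices in the interior of $P$ get degree $2$ from $P$ so they stay non-leaves, $u$ and $v$ keep their outside neighbors so they stay non-leaves if they were, and a vertex that was a leaf of $T$ can only stay a leaf, so $|L(\tilde T)| \le |L(T)|$, giving $w(\tilde T)\ge w(T)=opt(G)$, hence $\tilde T$ is an OST avoiding $F$. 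If only one of $u,v$, say $u$, is an attachment point, then $v$ must have been a non-leaf of $T$ via some edge $\{v, w\}$ to a vertex outside $U$? No — then $v$ would be an attachment point too; so all of $v$'s $T$-neighbors lie in $U$. In that subcase I still add all of $P$: the unique outside connection is through $u$, and since $v$ is an endpoint of $P$ it gets degree (at least) $1$ from $P$; if $v$ was internal in $T$ with all neighbors in $U$, then $v$ had $T$-degree $\ge 2$ inside $U$, but it now has $P$-degree $1$, so $v$ might become a leaf — to fix this I add back one edge of $F$ incident to $v$, or, better, I observe $v$ has a neighbor in $K$ and reconnect via $P$ so that $v$ is incident to two $P$-edges — impossible since $v$ is a $P$-endpoint. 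The clean resolution is: if $v$ is not an attachment point, simply use the fact that $G[U]$ also has a Hamiltonian path with an arbitrary interior structure? Not guaranteed. So instead, in this subcase I argue that $v$ being a non-leaf of $T$ forces, via the connectivity/bridge structure, the existence of an edge of $G$ between two of the $T$-components created, exactly as in the last paragraph of the proof of Lemma~\ref{lem:op2}; adding that edge repairs $v$.

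The main obstacle I expect is precisely this bookkeeping in the asymmetric case — guaranteeing that after deleting $E(T)\cap E(G[U])$ and re-adding $P$, no previously-internal vertex (other than ones we can re-attach for free) becomes a leaf. I anticipate handling it the way the earlier lemmas do: identify a ``slack'' edge of $G$ between the two resulting subtrees (which exists because $e$ — here, the edge being removed — is, in the relevant component, a non-bridge, or because $v$'s surviving neighbors in $U$ let us pick a different $P$-endpoint). The constant $6$ in the statement plays no role in the safeness argument itself (only later lemmas need it to be small), so I will not use it. Everything else is routine: finitely many reconnection patterns, each checked to preserve the leaf set, concluding $opt(G)$ is unchanged because the deleted edges of $F$ never needed to be in an OST.
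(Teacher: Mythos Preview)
Your case split on ``how many of $u,v$ are attachment points'' is the wrong dichotomy and leads to a genuine error. In the case where both $u$ and $v$ have $T$-neighbours outside $U$, you claim that deleting $E(T)\cap E(G[U])$ and adding all of $P$ yields a spanning tree. It need not: if the $u$--$v$ path in $T$ runs entirely outside $V(K)$, then $u$ and $v$ remain in the same component after you delete the internal edges, and adding $P$ creates a cycle. The paper's proof splits instead on whether the set $S$ of internal vertices of the $u$--$v$ path in $T$ meets $V(K)$; when $S\cap V(K)=\emptyset$ (their Case~2) one must additionally delete an edge $\{u,y\}$ on that outside path before inserting $P$, and then control the two potential new leaves $u$ and $y$.

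The second missing ingredient is a leaf-counting argument. Your ``one attachment point'' discussion correctly spots that $v$ may drop from non-leaf to leaf, but the fixes you sketch (re-inserting an $F$-edge, or invoking a bridge argument \`a la Lemma~\ref{lem:op2}) do not work: the $F$-edges are exactly what you are trying to avoid, and there is no single non-bridge edge playing the role $e$ does in Operation~2. The paper's device is to observe that whenever $S\ne V(K)$, at least one vertex of $K$ is already a leaf of $T$ (because the component trees of $T-\{u,v\}$ contained in $V(K)$ each contribute a $T$-leaf, except possibly the one equal to $S$). Since every vertex of $V(K)$ becomes internal once $P$ is installed, this old leaf compensates for any single new leaf created at $u$, $v$, or $y$; a short further argument handles the subcase where two new leaves appear. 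Your proposal has neither this counting nor an alternative, so the bookkeeping you flag as ``the main obstacle'' is in fact unresolved.
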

\begin{proof}
Operation 10 is clearly strongly safe if $V(G) = V(K) \cup \{u,v\}$. So, we assume that $V(G) \ne 
V(K) \cup \{u,v\}$. Since $K$ is a connected component, the degree of each vertex $x\not\in 
V(K)$ in $T - V(K)$ is $d_T(x)$ unless $x \in \{u,v\}$. 
Let $u \sim_T v$ be the path between $u$ and $v$ in $T$. 

Let $S$ be the set of internal vertices of $u \sim_T v$. Since $K$ is a connected component 
of $G - \{u,v\}$, either $S \cap V(K) = \emptyset$ or $S \subseteq V(K)$. Obviously, we are 
done if $S = V(K)$. So, we assume that $S \cap V(K)$ is either empty or contains at least 
one but not all vertices of $K$. Then, $T - \{u,v\}$ has one or more component trees in 
which at least one vertex of $K$ appears. Let $T_1$, \ldots, $T_\ell$ be such component 
trees. For each $i\in\{1,\ldots,\ell\}$, $V(T_i) \subseteq V(K)$ because $K$ is a connected 
component of $G - \{u,v\}$. Moreover, if $V(T_i) \ne S$, then $L(T) \cap V(T_i) \ne \emptyset$. 
Since $V(T_i) \ne S$ for at least one $i\in\{1,\ldots,\ell\}$, $|L(T) \cap V(K)| \ge 1$. 
Furthermore, if $S \cap V(K) = \emptyset$, then $|L(T) \cap V(K)| \ge \ell$.

{\em Case 1:} $S$ is a nonempty proper subset of $V(K)$. Then, modifying $T - V(K)$ by adding 
the edges of $P$ yields a new spanning tree $\tilde{T}$ of $G$. Clearly, $L(\tilde{T}) \setminus 
L(T) \subseteq \{u,v\}$. Moreover, since $V(G) \ne V(K) \cup \{u,v\}$, it is impossible that 
$\{u,v\} \subseteq L(\tilde{T})$. So, $|L(\tilde{T}) \setminus L(T)| \le 1$. 
Consequently, $|L(\tilde{T})| \le |L(T)|$ because $|L(T) \cap V(K)| \ge 1$. 

{\em Case 2:} $S \cap V(K) = \emptyset$. Then, both $u$ and $v$ are of degree at least~1 
in $T - V(K)$. We assume that the degree of $u$ in $T - V(K)$ is at least as large as 
that of $v$ in $T - V(K)$; the other case is similar. Let $y$ be the neighbor of $u$ in 
$u \sim_T v$. It is possible that $y = v$. Obviously, modifying $T - V(K)$ by adding the 
edges of $P$ and deleting the edge $\{u,y\}$ yields a new spanning tree $\tilde{T}$ of $G$. 
Clearly, $L(\tilde{T}) \setminus L(T) \subseteq \{u,y\}$. Thus, if $L(\tilde{T}) \setminus L(T) \ne 
\{u,y\}$, then $|L(\tilde{T})| \le |L(T)|$ because $|L(T) \cap V(K)| \ge 1$. Moreover, if $\ell 
\ge 2$, then $|L(T) \cap V(K)| \ge 2$ and in turn $|L(\tilde{T})| \le |L(T)|$. So, we may assume 
that $L(\tilde{T}) \setminus L(T) = \{u,y\}$ and $\ell = 1$. Then, the degree of $u$ in $T - V(K)$ 
is~1 and in turn so is $v$. Now, since $\ell=1$ and $u \in L(\tilde{T}) \setminus L(T)$, $v$ is 
adjacent to no vertex of $K$ in $T$ and hence $v$ is a leaf of $T$. Therefore,  
no matter whether $y = v$ or not, $|L(\tilde{T})| \le |L(T)|$ because $|L(T) \cap V(K)| \ge 1$. 
\end{proof}

\begin{lemma}\label{lem:op6}
Operation 11 is weakly safe.
\end{lemma}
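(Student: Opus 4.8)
The plan is to verify the three conditions in the definition of a weakly safe operation with $k=1$, taking $G_{1}$ to be the single graph produced and the additive constant $c=1$. Throughout, let $v^{*}=u_{1}u_{2}$ denote the new vertex, let $a$ be the neighbor of $u_{1}$ in $G$ other than $u_{2}$, and let $b$ be the neighbor of $u_{2}$ in $G$ other than $u_{1}$; then $N_{G_{1}}(v^{*})\subseteq\{a,b\}$. Note that $a=b$ is possible (then $\{u_{1},u_{2},a\}$ induces a triangle), but this causes no difficulty below; in the actual algorithm this case does not occur because Operation~4 would then apply. Condition~(1) is immediate: $|V(G_{1})|=|V(G)|-1$ and $|E(G_{1})|$ equals $|E(G)|-1$ (or $|E(G)|-2$ when $a=b$ and the two copies of $\{v^{*},a\}$ merge), so all three required size inequalities hold, and $G_{1}$ is connected since contracting an edge preserves connectivity.

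For condition~(2), i.e. $opt(G)\le opt(G_{1})+1$, I would start from an OST $T$ of $G$ and transform it into a spanning tree $T_{1}$ of $G_{1}$ with $w(T_{1})\ge w(T)-1$. Since $d_{G}(u_{1})=d_{G}(u_{2})=2$, each of $u_{1},u_{2}$ has $T$-degree $1$ or $2$, and I would split into three cases. First, if $d_{T}(u_{1})=d_{T}(u_{2})=2$, then $T$ contains the subpath $a\,u_{1}\,u_{2}\,b$; contracting $e$ inside $T$ produces $T_{1}$, in which $v^{*}$ is internal and every other vertex keeps its $T$-degree, so $w(T)=w(T_{1})+1$. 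Second, if exactly one of them, say $u_{1}$, is a leaf whose $T$-edge is $\{u_{1},u_{2}\}$, then (using $|V(G)|\ge 3$) $u_{2}$ is internal with $T$-edges $\{u_{1},u_{2}\}$ and $\{u_{2},b\}$, and deleting the leaf $u_{1}$ gives a spanning tree $T_{1}$ of $G_{1}$ (after identifying $u_{2}$ with $v^{*}$) in which $v^{*}$ becomes a leaf, so again $w(T)=w(T_{1})+1$. Third, if $u_{1}$ (say) is a leaf whose $T$-edge is $\{u_{1},a\}$, then $\{u_{1},u_{2}\}\notin E(T)$ forces $u_{2}$ to be a leaf with $T$-edge $\{u_{2},b\}$; deleting both leaves and re-attaching $v^{*}$ to $a$ by the edge $\{v^{*},a\}\in E(G_{1})$ yields a spanning tree $T_{1}$ of $G_{1}$ in which $u_{1},u_{2},v^{*}$ each contribute $0$ to the weight and only one vertex among $\{a,b\}$ can lose internal status, so $w(T)\le w(T_{1})+1$. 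In every case $opt(G)=w(T)\le w(T_{1})+1\le opt(G_{1})+1$.

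For condition~(3), given a spanning tree $T_{1}$ of $G_{1}$ I would ``uncontract'' $v^{*}$ according to $d_{T_{1}}(v^{*})\in\{1,2\}$. If $d_{T_{1}}(v^{*})=2$ (so $a\ne b$ and the two $T_{1}$-edges at $v^{*}$ are $\{v^{*},a\}$ and $\{v^{*},b\}$), replace $v^{*}$ by the path $a\,u_{1}\,u_{2}\,b$; this gives a spanning tree $T$ of $G$ with $u_{1},u_{2}$ internal and every other degree unchanged, so $w(T)=w(T_{1})+1$. If $d_{T_{1}}(v^{*})=1$ with unique $T_{1}$-edge $\{v^{*},x\}$, rename $v^{*}$ to $u_{1}$ when $x=a$ (or to $u_{2}$ when $x=b$) and attach the other of $u_{1},u_{2}$ as a pendant via $e$; the renamed vertex then has degree~$2$ and is internal while the pendant is a leaf, so again $w(T)=w(T_{1})+1$. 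All of these are local edits, hence linear time, and together with the previous paragraph they give $opt(G)=opt(G_{1})+1$, which then holds automatically as noted after the definition.

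The part I expect to be the main obstacle is the third case of the upper-bound argument, where the optimal tree $T$ makes both $u_{1}$ and $u_{2}$ leaves: one must check carefully that after removing the two leaves and re-attaching $v^{*}$ at most one unit of weight is lost, and that the only vertex whose internal status can be destroyed is $b$ (or $a=b$). Pinning down the degenerate configuration $a=b$ and the tiny instances (such as $G=K_{3}$, where $opt(G)=1=opt(G_{1})+1$) so that they fall under the same case analysis rather than needing a separate ad hoc treatment also takes a little care, although each such verification is routine.
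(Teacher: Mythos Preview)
Your proposal is correct and follows essentially the same approach as the paper: verify the size condition, then show $opt(G)\le opt(G_1)+1$ by turning an OST of $G$ into a spanning tree of $G_1$ via contraction of $e$, and conversely uncontract $v^{*}$ according to its degree in $T_1$. The only cosmetic difference is in your third case ($e\notin E(T)$): the paper first swaps the edge $\{u_2,b\}$ for $e$ (which cannot decrease $w(T)$) so as to reduce to the situation $e\in E(T)$ and then contracts, whereas you handle this case directly by deleting the two leaves and re-attaching $v^{*}$ at $a$; both arguments are valid and yield the same bound.
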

\begin{proof}
For each $i\in\{1,2\}$, let $u'_i$ be the vertex in $N_G(u_i) \setminus \{u_{3-i}\}$. 
Possibly, $u'_1=u'_2$. If $u'_1\ne u'_2$, then $N_{G_1}(u_1u_2) = \{u'_1,u'_2\}$; 
otherwise, $N_{G_1}(u_1u_2) = \{u'_1\}$. 

First, we want to show that $opt(G) \le opt(G_1) + 1$. 
If $e \not\in E(T)$, then $T$ contains both $\{u'_1,u_1\}$ and $\{u'_2,u_2\}$ and 
we can modify $T$ (without decreasing $|L(T)|$) by replacing the edge $\{u'_2,u_2\}$ with $e$. 
So, we can assume that $e \in E(T)$. Then, it is clear that modifying $T$ by merging $u_1$ and 
$u_2$ into a single vertex $u_1u_2$ yields a spanning tree of $G_1$ whose weight is $w(T) - 1$. 
Thus, $opt(G) \le opt(G_1) + 1$. 

Next, suppose that $T_1$ is a spanning tree of $G_1$. If $u'_1=u'_2$, then $u_1u_2$ is 
a leaf of $T_1$ and its neighbor in $T_1$ is $u'_1$, and hence modifying $T_1$ by deleting 
the vertex $u_1u_2$ and adding the two edges $\{u_1,u_2\}$, $\{u_2,u'_2\}$ yields a spanning 
tree of $G$ whose weight is $w(T_1) + 1$. So, we assume that $u'_1 \ne u'_2$. Clearly, at 
least one of $\{u'_1,u_1u_2\}$ and $\{u'_2,u_1u_2\}$ is an edge of $T_1$. If for exactly one 
$i\in\{1,2\}$, $\{u'_i,u_1u_2\} \in E(T_1)$, then modifying $T_1$ by deleting the vertex 
$u_1u_2$ and adding the two edges $e$, $\{u'_i,u_i\}$ yields a spanning tree of $G$ 
whose weight is $w(T_1) + 1$. Otherwise, modifying $T_1$ by deleting the vertex $u_1u_2$ 
and adding the three edges $\{u'_1,u_1\}$, $e$, $\{u_2,u'_2\}$ yields a spanning tree 
of $G$ whose weight is $w(T_1) + 1$.
\end{proof}

Accordingly, we need to modify Step~\ref{step:ssafe} of the algorithm by choosing 
$i$ from $\{1,2,8,9,10\}$ and also modify Step~\ref{step:wsafe} by choosing 
$i$ from $\{3,4,11\}$. Obviously, after the modification, 
Steps~\ref{step:ssafe} and~\ref{step:wsafe} can be done in $O(n^2m)$ time.

\section{Computing a Preferred TFPCC ${\cal C}$}\label{sec:tfpcc}
In this section, we consider how to refine Step~\ref{step:mtfpc}. 
Because of Steps~\ref{step:ssafe} and~\ref{step:small}, we hereafter assume that 
$|V(G)| \ge 9$ and there is no $i\in\{1,\ldots,4,8,\ldots,11\}$ such that Operation~$i$ 
can be performed on $G$. Then, we can prove the next lemma:

\begin{lemma}\label{lem:cycle}
Suppose that $C$ is a cycle of $G$ with $|C| \le 8$. Let $A$ be the set of ports of $C$. 
Then, the following statements hold.
\begin{enumerate}
\item\label{stat:2+} $|A| \ge 2$. 
\item\label{stat:=2}  
	If $|A|=2$, then the two vertices in $A$ are not adjacent in $C$ and $|C| \ne 5$. 
\item\label{stat:2and4} 
	If $|A|=2$ and $|C|=4$, then $G[V(C)]$ and $C$ are the same graph. 
\end{enumerate}
\end{lemma}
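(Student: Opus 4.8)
The plan is to prove the three statements in turn, in each case assuming the conclusion fails and then exhibiting a forbidden operation (one of Operations~1--4, 8--11) that could be performed on $G$, contradicting our standing assumption. Throughout I would write $A=\{$ports of $C\}$ and recall that, since $G$ is connected and $|V(G)|\ge 9 > |C|$, the cycle $C$ cannot be all of $G$, so $A\ne\emptyset$.

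For Statement~\ref{stat:2+}: suppose $|A|=1$, say $A=\{v\}$. Then every vertex of $C$ other than $v$ has all its $G$-neighbours inside $C$. I would consider the edge $e$ of $C$ incident to $v$; deleting $v$ disconnects $C-\{v\}$ (a path, hence connected) from the rest of $G$, so $v$ is a cut-point and $V(C)\setminus\{v\}$ forms (the vertex set of) a connected component $K$ of $G-\{v\}$ with $2\le |V(K)|=|C|-1\le 7\le 8$. Since $|V(G)|\ge 9$, $V(G)\ne V(K)\cup\{v\}$, so Operation~4 applies to the cut-point $v$ and the component $K$ — a contradiction. (If $|C|-1=1$, i.e.\ $|C|=2$, that's impossible as $C$ is a cycle; so $|V(K)|\ge 1$, and in fact $\ge 2$.)

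For Statement~\ref{stat:=2}: assume $|A|=2$, $A=\{v_1,v_2\}$. First I would show $v_1,v_2$ are non-adjacent in $C$: if $\{v_1,v_2\}\in E(C)$, then $C-\{v_1,v_2\}$ is a single path $P$ (all of whose vertices have all $G$-neighbours inside $C$), so $G-\{v_1,v_2\}$ has connected component $K=V(P)$ with $|V(K)|=|C|-2\le 6$ and $V(G)\ne V(K)\cup\{v_1,v_2\}$; moreover $G[V(K)\cup\{v_1,v_2\}]$ contains the Hamiltonian path from $v_1$ to $v_2$ obtained by deleting the edge $\{v_1,v_2\}$ from $C$, so Operation~10 applies — contradiction. (The degenerate subcases $|C|=3$ and $|C|=4$ with $|V(K)|\le 2$ still give a nonempty $K$ and a valid Hamiltonian path; and $|C|=3$ is anyway excluded since a triangle-free context aside, here the argument still runs, or one falls back on Operation~2/3.) Next, to rule out $|C|=5$ with $|A|=2$ and $v_1,v_2$ non-adjacent: removing $v_1,v_2$ from a $5$-cycle leaves two paths, one with $1$ vertex, one with $2$; the singleton together with its two neighbours $v_1,v_2$ forms a path of length $2$, and I would argue this singleton vertex $w$ has $N_G(w)=\{v_1,v_2\}$; combining this with the structure one reaches a configuration where Operation~8, 9, or 10 applies (depending on whether other vertices of $C$ also have neighbourhood exactly $\{v_1,v_2\}$), again a contradiction. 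This case analysis on $|C|=5$ is the part I expect to be fiddliest.

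For Statement~\ref{stat:2and4}: assume $|A|=2$, $|C|=4$, say $V(C)=\{v_1,x,v_2,y\}$ with $C$ the cycle $v_1\,x\,v_2\,y\,v_1$ and ports $v_1,v_2$ (non-adjacent in $C$ by Statement~\ref{stat:=2}). Then $N_G(x)=N_G(y)=\{v_1,v_2\}$. Suppose $G[V(C)]\ne C$; since $G$ is simple and $C$ already uses the four edges $v_1x, xv_2, v_2y, yv_1$, the only possible extra edge inside $V(C)$ is $\{v_1,v_2\}$. So assume $\{v_1,v_2\}\in E(G)$. Now $x,y$ are two vertices with $N_G(x)=N_G(y)=\{v_1,v_2\}$. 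Because $v_1,v_2$ are the only ports, and $|V(G)|\ge 9$, the edge $\{v_1,v_2\}$ is a non-bridge (one can route around through $x$), and $G-\{v_2\}$ has a connected component $K$ with $v_1\notin V(K)$ — namely the part of $G$ reached from $v_1$'s outside-neighbours without passing through $v_2$ — or symmetrically; I would check that the hypotheses of Operation~8 (with $u_1=v_1$, $u_2=v_2$, $u_3=x$, $u_4=y$) are met, yielding a contradiction. (If no such separating component exists, i.e.\ $v_2$ is not a cut-point separating $v_1$, then one instead gets the three-twin or degree-$2$ structure feeding Operation~9 or~11.) Hence $\{v_1,v_2\}\notin E(G)$ and $G[V(C)]=C$.

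The main obstacle will be the $|C|=5$, $|A|=2$ subcase of Statement~\ref{stat:=2} and, relatedly, pinning down exactly which of Operations~8--11 fires in the various sub-configurations of Statement~\ref{stat:2and4}: one has to track carefully whether the two "inside" vertices of the cycle have neighbourhood exactly equal to the two ports, and whether one of the ports is a cut-point of the required kind, since the choice among Operations~8, 9, 10, 11 depends on that bookkeeping. Everything else is a routine application of the reduction rules.
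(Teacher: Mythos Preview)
Your outline for Statement~\ref{stat:2+} and for the non-adjacency half of Statement~\ref{stat:=2} is correct and matches the paper (Operation~4 and Operation~10, respectively). The genuine gaps are in the $|C|=5$ subcase of Statement~\ref{stat:=2} and in Statement~\ref{stat:2and4}.

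For $|C|=5$ with ports $A=\{u_1,u_3\}$ (so $u_2,u_4,u_5$ are non-ports), consider the situation where $G[V(C)]=C$, i.e.\ there are no chords. Then $N_G(u_2)=\{u_1,u_3\}$, $N_G(u_4)=\{u_3,u_5\}$, $N_G(u_5)=\{u_1,u_4\}$; these three pairs are distinct, so neither Operation~8 nor Operation~9 applies. Operation~10 is vacuous here: the components of $G-\{u_1,u_3\}$ inside $V(C)$ are $\{u_2\}$ and $\{u_4,u_5\}$, and in each case the induced subgraph on $K\cup\{u_1,u_3\}$ is already a bare path from $u_1$ to $u_3$, so there is no edge to delete. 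The operation that actually fires is Operation~11, since $d_G(u_4)=d_G(u_5)=2$ and $\{u_4,u_5\}\in E(G)$. The paper proceeds exactly this way: it first uses Operation~10 to exclude every possible chord ($\{u_2,u_4\}$, $\{u_2,u_5\}$, $\{u_1,u_4\}$, $\{u_3,u_5\}$), and then invokes Operation~11 on the adjacent degree-$2$ pair $u_4,u_5$. Your list of candidate operations (8, 9, 10) omits the one that is needed.

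For Statement~\ref{stat:2and4} there are two problems. First, a $4$-cycle has two diagonals, so ``the only possible extra edge inside $V(C)$ is $\{v_1,v_2\}$'' is wrong: $\{x,y\}$ is equally possible, and if $\{x,y\}\in E(G)$ your claim $N_G(x)=N_G(y)=\{v_1,v_2\}$ fails. Second, even granting $\{x,y\}\notin E(G)$ and $\{v_1,v_2\}\in E(G)$, Operation~8 requires $G-\{v_2\}$ to have a connected component avoiding $v_1$, which need not hold (both $v_1$ and $v_2$ are ports and $G-\{v_2\}$ may well be connected); your fallback to Operation~9 or~11 is unjustified, since there is no third vertex with neighbourhood $\{v_1,v_2\}$ and the two degree-$2$ vertices $x,y$ are not adjacent. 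The clean fix---and the paper's argument---is simply Operation~10: the singleton component $\{x\}$ of $G-\{v_1,v_2\}$ yields the Hamiltonian path $v_1,x,v_2$ in $G[\{v_1,x,v_2\}]$, forcing deletion of the extra edge $\{v_1,v_2\}$; the diagonal $\{x,y\}$ is handled analogously via the component $\{x,y\}$ and the Hamiltonian path $v_1,x,y,v_2$.
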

\begin{proof}
We prove the statements separately as follows.

{\em Statement~\ref{stat:2+}:} Since $G$ is connected and $|C| < 9 \le |V(G)|$, 
$|A| \ge 1$. Moreover, since Operation~4 cannot be performed on $G$, $|A| \ge 2$. 

{\em Statement~\ref{stat:=2}:} Suppose that $|A| = 2$. Then, the two vertices in $A$ 
cannot be adjacent in $C$, because otherwise Operation~10 
could be performed on $G$. For a contradiction, assume that $|C| = 5$. Suppose that 
$u_1$, \ldots, $u_5$ are the vertices of a 5-cycle of $C$ and appear in $C$ clockwise 
in this order. Since the two vertices in $A$ are not adjacent in $C$, we may assume that 
$A = \{u_1,u_3\}$. If $\{u_2,u_4\} \in E(G)$ or $\{u_2,u_5\} \in E(G)$, then 
Operation~10 can be performed on $G$, a contradiction. 
So, we assume that $\{u_2,u_4\} \not\in E(G)$ and $\{u_2,u_5\} \not\in E(G)$. 
If $\{u_1,u_4\} \in E(G)$ or $\{u_3,u_5\} \in E(G)$, then Operation~10 can be performed 
on $G$, a contradiction. Thus, we may further assume that $\{u_1,u_4\} \not\in E(G)$ 
and $\{u_3,u_5\} \not\in E(G)$. Now, $\{u_4,u_5\} \in E(G)$, $d_G(u_4) = 2$, and 
$d_G(u_5) = 2$. Hence, Operation~11 can be performed on $G$, a contradiction. 

{\em Statement~\ref{stat:2and4}:} Suppose that $|A|=2$ and $|C| = 4$. The two vertices 
in $A$ are not adjacent in $C$ by Statement~\ref{stat:=2}, and hence $G[V(C)]$ and $C$ 
are the same graph because otherwise Operation~10 could be performed on $G$. 
\end{proof}

To refine Step~\ref{step:mtfpc}, our idea is to compute ${\cal C}$ as a {\em preferred} 
TFPCC of $G$. Before defining what the word ``preferred'' means here, we need to prove a lemma. 
For ease of explanation, we assume, with loss of generality, that there is a linear order 
(denoted by $\prec$) on the vertices of $G$. 

\begin{lemma}\label{lem:exclude}
Suppose that $u_1$ and $u_3$ are two vertices of $G$ such that $u_1\prec u_3$ and 
Condition~C1 below holds. Then, $G$ has an OST in which $u_1$ or $u_3$ is a leaf. 
Consequently, $G$ has an OST in which $u_1$ is a leaf. 
\begin{itemize}
\item[C1.] For two vertices $u_2$ and $u_4$ in $V(G) \setminus \{u_1,u_3\}$, 
	$N_G(u_1) = N_G(u_3) = \{u_2,u_4\}$. 
\end{itemize}
\end{lemma}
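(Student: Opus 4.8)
The plan is to take an arbitrary OST $T$ of $G$ and show that if both $u_1$ and $u_3$ are internal in $T$, then $T$ can be modified into another OST in which at least one of them becomes a leaf; the final consequence then follows from $u_1 \prec u_3$ together with symmetry (or simply by relabelling). Since $N_G(u_1) = N_G(u_3) = \{u_2, u_4\}$, each of $u_1, u_3$ has degree exactly $2$ in $G$, so being internal in $T$ means being incident in $T$ to both $u_2$ and $u_4$. Assume this holds for both; then $T$ contains all four edges $\{u_1,u_2\}, \{u_1,u_4\}, \{u_3,u_2\}, \{u_3,u_4\}$, i.e. the $4$-cycle on $\{u_1,u_2,u_3,u_4\}$ minus one edge... more precisely $T$ contains a path $u_2 - u_1 - u_4 - u_3 - u_2$, which is a $4$-cycle, contradicting that $T$ is a tree. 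So in fact both $u_1$ and $u_3$ cannot simultaneously be internal, and we are essentially done --- but I should double-check the degenerate possibilities first.

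First I would rule out the pathological configurations. If $u_2 = u_4$ is not allowed since $N_G(u_1)$ is a set of two vertices, but I should confirm $u_2 \ne u_4$ from Condition~C1; it says $u_2, u_4 \in V(G)\setminus\{u_1,u_3\}$ and $N_G(u_1)=\{u_2,u_4\}$ has two elements, so $u_2\ne u_4$. Next, it is conceivable that $\{u_2,u_4\}\in E(G)$ or that $u_2,u_4$ have other neighbours --- neither affects the argument. The one genuine subtlety is whether $T$ could have $u_1$ (or $u_3$) internal \emph{without} using both of its two possible edges: impossible, since $d_G(u_1)=2$ forces $d_T(u_1)\le 2$, and internal means $d_T(u_1)=2=d_G(u_1)$, so both incident $G$-edges are in $T$. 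The same for $u_3$. Hence the $4$-cycle argument goes through verbatim, and we conclude that in every spanning tree of $G$ --- in particular in every OST --- at least one of $u_1, u_3$ is a leaf.

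For the ``consequently'' clause: we have shown $G$ has an OST in which $u_1$ or $u_3$ is a leaf. If some OST has $u_1$ a leaf, we are done. Otherwise every OST has $u_3$ a leaf (and $u_1$ internal). Take such an OST $T$; then $T$ contains $\{u_1,u_2\}$ and $\{u_1,u_4\}$ and exactly one of $\{u_3,u_2\}, \{u_3,u_4\}$, say $\{u_3, u_j\}$ with $j\in\{2,4\}$ and let $j' \in \{2,4\}\setminus\{j\}$. Then $T' = (T - \{u_1,u_{j}\}) + \{u_3, u_{j'}\}$ is again a spanning tree of $G$: deleting $\{u_1, u_j\}$ splits $T$ into two trees, one containing $u_1$ and one containing $u_3$ (the edge $\{u_3,u_j\}$ shows $u_3$ is on the far side, since the path in $T$ from $u_1$ to $u_3$ goes $u_1 - u_j - \cdots$), and $\{u_3, u_{j'}\}$ reconnects them because $u_{j'}$ lies on $u_1$'s side (it is a $T$-neighbour of $u_1$). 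In $T'$, $u_1$ has degree $1$ (it kept only $\{u_1, u_{j'}\}$) and $u_3$ has degree $2$; every other vertex keeps its degree, since $u_2$ and $u_4$ each lost one incident edge and gained one. Thus $w(T') = w(T)$, so $T'$ is an OST with $u_1$ a leaf. The expected main obstacle is merely this last bookkeeping --- verifying that the swap keeps $T'$ connected and preserves the weight --- but it is routine once one observes $d_G(u_1)=d_G(u_3)=2$.
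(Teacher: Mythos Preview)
Your first observation --- that $d_G(u_1)=d_G(u_3)=2$ forces at least one of $u_1,u_3$ to be a leaf in \emph{every} spanning tree, since otherwise $T$ would contain the full $4$-cycle $u_1u_2u_3u_4$ --- is correct and is exactly what the paper means by ``$u_3$ is clearly a leaf of $T$''.

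The error is in the swap you perform for the ``consequently'' clause. You delete $\{u_1,u_j\}$ and add $\{u_3,u_{j'}\}$, and then assert that ``$u_2$ and $u_4$ each lost one incident edge and gained one''. That is false: $u_j$ loses the edge $\{u_1,u_j\}$ and gains nothing, while $u_{j'}$ gains the edge $\{u_3,u_{j'}\}$ and loses nothing. So $d_{T'}(u_j)=d_T(u_j)-1$ and $d_{T'}(u_{j'})=d_T(u_{j'})+1$. If it happens that $d_T(u_j)=2$ (i.e.\ $N_T(u_j)=\{u_1,u_3\}$) while $d_T(u_{j'})\ge 2$, then $u_j$ becomes a leaf in $T'$ without any compensating gain, and $w(T')=w(T)-1$; your $T'$ is then not an OST. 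Nothing in Condition~C1 or in the ambient reduction rules precludes this configuration, and your argument does not address it.

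The fix is immediate: remove $\{u_1,u_{j'}\}$ instead of $\{u_1,u_j\}$, and add $\{u_3,u_{j'}\}$. Then $u_{j'}$ trades the neighbour $u_1$ for $u_3$ (degree unchanged), $u_j$ is untouched, and only $u_1,u_3$ swap their degrees; hence $w(T')=w(T)$. Connectivity still holds because the $T$-path from $u_1$ to $u_3$ is $u_1\!-\!u_j\!-\!u_3$, so deleting $\{u_1,u_{j'}\}$ separates $u_{j'}$ from $\{u_1,u_j,u_3\}$, and $\{u_3,u_{j'}\}$ reconnects them. This is precisely the paper's ``switch $u_1$ and $u_3$'': since $N_G(u_1)=N_G(u_3)$, interchanging the two labels is an automorphism of $G$ and hence sends OSTs to OSTs.
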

\begin{proof}
If $u_1$ is a leaf of $T$, then we are done. 
So, assume that $u_1$ is not a leaf of $T$. Since Condition~C1 holds, 
$u_3$ is clearly a leaf of $T$ and we can modify $T$ (without decreasing $w(T)$) 
by switching $u_1$ and $u_3$ so that $u_1$ becomes a leaf in $T$. 
\end{proof}

If Condition~C1 in Lemma~\ref{lem:exclude} holds for $u_1$ and $u_3$, we refer to 
$u_2$ and $u_4$ as the {\em boundary points} of the pair $p=(u_1,u_3)$, and refer to 
the edges incident to $u_1$ or $u_3$ as the {\em supports} of $p$. 

Let $\Pi$ be the set of pairs $(u_1,u_3)$ of vertices in $G$ satisfying Condition~C1. 
It is worth pointing out that for each $p \in \Pi$ and each boundary point $u$ of $p$, 
$d_G(u) \ge 3$ because otherwise Operation~4 could be performed on $G$. 

\begin{lemma}\label{lem:ind}
No two pairs in $\Pi$ share a support. 
\end{lemma}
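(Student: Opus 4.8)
The plan is to argue by contradiction, drawing only on the non-applicability of Operations~4 and~9. First I would fix terminology: for a pair $q\in\Pi$ satisfying Condition~C1 with vertices $u_1,u_3$ and boundary points $u_2,u_4$, call $u_1,u_3$ the \emph{constituents} of $q$; then the supports of $q$ are precisely the four edges joining a constituent of $q$ to a boundary point of $q$ (note $\{u_1,u_3\}$ is never an edge, since $u_3\notin N_G(u_1)=\{u_2,u_4\}$). I would record two degree facts up front: every constituent of a pair has degree exactly~$2$ in $G$ (its neighborhood is the two boundary points, which are distinct), while every boundary point has degree at least~$3$ (this is the remark stated just before the lemma, which uses that Operation~4 cannot be applied). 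Consequently no vertex is simultaneously a constituent of one pair and a boundary point of a(nother) pair, and every support edge has one endpoint of degree~$2$ (a constituent) and one endpoint of degree~$\ge 3$ (a boundary point).

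Now suppose, for contradiction, that distinct pairs $p,p'\in\Pi$ share a support $e$. The heart of the argument is a matching step: writing $x$ for the degree-$2$ endpoint of $e$ and $y$ for the degree-$\ge 3$ endpoint, the fact that $e$ is a support of $p$ forces $x$ to be a constituent of $p$ and $y$ a boundary point of $p$, and likewise $e$ being a support of $p'$ forces $x$ to be a constituent of $p'$ and $y$ a boundary point of $p'$. So $x$ is a common constituent of the two pairs, and hence --- the boundary points of a pair being exactly the two neighbors of a constituent --- $p$ and $p'$ have the very same pair $\{c,d\}=N_G(x)$ of boundary points, with $c\ne d$.

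It then remains to compare the \emph{other} constituents: let $b$ be the constituent of $p$ distinct from $x$ and $b'$ that of $p'$ distinct from $x$, so $N_G(b)=N_G(b')=\{c,d\}$. If $b=b'$, then $p$ and $p'$ are literally the same pair (both determined by the vertex set $\{x,b\}$), contradicting $p\ne p'$. If $b\ne b'$, then $x,b,b'$ are three distinct vertices with common neighborhood $\{c,d\}$, and none of them can coincide with $c$ or $d$; hence $\{c,d,x,b,b'\}$ is exactly the five-vertex configuration on which Operation~9 acts, contradicting its inapplicability. Either way we reach a contradiction, which proves the lemma. The argument is short, so there is no serious obstacle; the one place that needs care is the matching step --- being sure that the degree-$2$ endpoint of the shared edge really is forced to be the constituent vertex simultaneously in both pairs --- after which everything reduces to the configuration forbidden by Operation~9. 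I would also confirm, when invoking Operation~9, that Condition~C1 forces a pair's two boundary points to be distinct (so that $\{c,d\}$ genuinely has two elements), which holds as long as ``two vertices $u_2$ and $u_4$'' in C1 is read as ``two distinct vertices''.
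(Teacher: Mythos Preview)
Your proof is correct and follows essentially the same two-step argument as the paper: sharing a support forces sharing the boundary points, and sharing boundary points triggers Operation~9. You supply more detail for the first step (via the degree distinction between constituents and boundary points, using the remark that invokes Operation~4), whereas the paper simply asserts it as obvious.
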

\begin{proof}
Obviously, for two pairs in $\Pi$ to share a support, they have to share their boundary 
points. However, no two pairs in $\Pi$ can share their boundary points, because otherwise 
Operation~9 could be performed on $G$. So, no two pairs in $\Pi$ share a support. 
\end{proof}

\begin{lemma}\label{lem:excludeAll}
$G$ has an OST in which $u_1$ is a leaf for each $(u_1,u_3) \in \Pi$. 
\end{lemma}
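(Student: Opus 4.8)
The plan is to upgrade Lemma~\ref{lem:exclude} from ``for one pair $p\in\Pi$ there is an OST with $u_1$ a leaf'' to the simultaneous statement over all of $\Pi$, and the natural tool is Lemma~\ref{lem:ind}: distinct pairs in $\Pi$ do not share a support (and in fact do not share boundary points). First I would fix an OST $T$ of $G$ and process the pairs $(u_1,u_3)\in\Pi$ one at a time, maintaining the invariant that the current tree is an OST in which $u_1$ is a leaf for every pair already processed. For a new pair $p=(u_1,u_3)$ with boundary points $u_2,u_4$: if $u_1$ is already a leaf, do nothing; otherwise, as in the proof of Lemma~\ref{lem:exclude}, $u_3$ must be a leaf (its only neighbors are $u_2,u_4$ and $u_1$ is internal, so not both $\{u_2,u_1\}$-type edges can be at $u_3$; more precisely $d_T(u_1)\ge 2$ forces $d_T(u_3)=1$ since $N_G(u_1)=N_G(u_3)=\{u_2,u_4\}$ and $T$ is a tree), and we swap the roles of $u_1$ and $u_3$ — concretely, replace the unique edge of $T$ incident to $u_3$ by the corresponding edge incident to $u_1$ and delete all but one edge of $T$ incident to $u_1$, reattaching via the symmetric edges at $u_2,u_4$. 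This does not decrease $w(T)$.

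The key point to nail down is that this swap does not destroy leaf-ness of $u_1'$ for a previously processed pair $p'=(u_1',u_3')$. Here Lemma~\ref{lem:ind} does the work: the edges we add or remove in processing $p$ are supports of $p$, i.e.\ edges incident to $u_1$ or $u_3$, and by Lemma~\ref{lem:ind} these are not supports of $p'$, so they are not incident to $u_1'$ or $u_3'$. Hence the degree of $u_1'$ in the tree is unchanged unless $u_1'\in\{u_2,u_4\}$, the boundary points of $p$. But a boundary point of any pair in $\Pi$ has $d_G(u)\ge 3$ (noted just before Lemma~\ref{lem:ind}, via Operation~4), whereas if $u_1'$ were a leaf we only ever keep it at degree~1; so $u_1'$ cannot simultaneously be a leaf we are protecting and a boundary point $u_2$ or $u_4$ of $p$. (I should also check $u_1'\ne u_3$: since $u_3$ becomes a leaf only transiently and $u_1'$ is already a leaf, and pairs don't share vertices by the $d_G\ge 3$/Operation~9 argument, this is fine; the cleanest phrasing is that the vertex sets $\{u_1,u_3\}$ of distinct pairs are disjoint and disjoint from each other's boundary points.) Therefore the invariant is preserved, and after processing every pair we obtain the desired OST.

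The main obstacle, and the only place requiring care, is exactly this interference analysis: making sure that re-routing edges around $u_1,u_3$ cannot accidentally lower the degree of some earlier $u_1'$ from the required value. Everything hinges on two facts already available — Lemma~\ref{lem:ind} (supports are not shared) and the observation that boundary points have degree at least~3 — plus the combinatorial fact that distinct pairs in $\Pi$ are vertex-disjoint and contain none of each other's boundary points (again from Operation~9, as in the proof of Lemma~\ref{lem:ind}). With those in hand the induction on $|\Pi|$ is routine. I would therefore structure the write-up as: (1)~state the vertex-disjointness consequences of Operation~9; (2)~describe the single-pair swap and its effect on weights; (3)~run the induction, citing Lemma~\ref{lem:ind} for non-interference; (4)~conclude.
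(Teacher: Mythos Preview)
Your approach is correct and is essentially the paper's own argument spelled out in more detail: process the pairs of $\Pi$ one by one, applying the $u_1\leftrightarrow u_3$ swap from Lemma~\ref{lem:exclude} when needed, and invoke Lemma~\ref{lem:ind} (together with the degree-$2$ vs.\ degree-$\geq 3$ observation) to certify that the swap for one pair cannot disturb the leaf status of $u_1'$ for any previously handled pair. The paper states this non-interference in a single sentence (``the modification only uses the supports of $p$ \ldots\ by Lemma~\ref{lem:ind}, a similar modification can be done independently''), whereas you unpack the cases $u_1'\in\{u_1,u_3\}$ and $u_1'\in\{u_2,u_4\}$ explicitly; both amount to the same proof.
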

\begin{proof}
By Lemma~\ref{lem:exclude}, we can assume that for every $p = (u_1,u_3) \in \Pi$, 
$d_T(u_1) \le 1$. In a nutshell, the proof of Lemma~\ref{lem:exclude} shows that 
even if $T$ is an OST with $d_T(u_1) \ge 2$ , we can modify $T$ without decreasing 
$w(T)$ so that $d_T(u_1) \le 1$. Indeed, the modification only uses the supports of $p$. 
Now, by Lemma~\ref{lem:ind}, a similar modification can be done independently 
for each other $p' \in \Pi$. Therefore, the lemma holds. 
\end{proof}

Now, we are ready to make two definitions. Let ${\cal C}$ be a TFPCC of $G$. 
${\cal C}$ is {\em special} if for every pair $(u_1,u_3) \in \Pi$, 
$d_{\cal C}(u_1) \le 1$. ${\cal C}$ is {\em preferred} if ${\cal C}$ is special 
and $|E({\cal C})|$ is maximized over all special TFPCCs of $G$. 

\begin{lemma}\label{lem:up}
If ${\cal C}$ is a preferred TFPCC of $G$, then $opt(G) \le |E({\cal C})|$. 
\end{lemma}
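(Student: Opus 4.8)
The plan is to combine Lemma~\ref{lem:LZW} with Lemma~\ref{lem:excludeAll}. By Lemma~\ref{lem:excludeAll}, $G$ has an OST $\tilde{T}$ in which $u_1$ is a leaf for every pair $(u_1,u_3) \in \Pi$; note $w(\tilde{T}) = opt(G)$. Applying Lemma~\ref{lem:LZW} to this particular $\tilde{T}$ yields a path cover ${\cal P}$ of $G$ with $|E({\cal P})| \ge w(\tilde{T}) = opt(G)$ and $d_{\cal P}(v) \le 1$ for each leaf $v$ of $\tilde{T}$; in particular $d_{\cal P}(u_1) \le 1$ for every $(u_1,u_3)\in\Pi$.

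Next I would observe that ${\cal P}$, being a path cover, is in particular a PCC, and it is triangle-free: a path cover has no cycles at all, so it trivially contains no $3$-cycle. Hence ${\cal P}$ is a TFPCC of $G$, and by the previous paragraph it satisfies $d_{\cal P}(u_1) \le 1$ for every pair $(u_1,u_3) \in \Pi$, so ${\cal P}$ is \emph{special} in the sense just defined. Therefore ${\cal P}$ is a candidate in the maximization defining a preferred TFPCC.

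Finally, since ${\cal C}$ is a preferred TFPCC, $|E({\cal C})|$ is maximized over all special TFPCCs of $G$, so $|E({\cal C})| \ge |E({\cal P})| \ge opt(G)$, which is the claim.

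I do not anticipate a serious obstacle here: the lemma is essentially a bookkeeping corollary of the already-proved Lemmas~\ref{lem:LZW} and~\ref{lem:excludeAll}, together with the trivial facts that a path cover is triangle-free and that ``special'' only constrains the degrees of the $u_1$-vertices of pairs in $\Pi$, which are exactly the vertices Lemma~\ref{lem:excludeAll} forces to be leaves. The only point worth stating carefully is that Lemma~\ref{lem:LZW} must be invoked on the specific optimal spanning tree provided by Lemma~\ref{lem:excludeAll} (rather than an arbitrary OST), so that the ``$d_{\cal P}(v)\le 1$ for each leaf $v$ of $\tilde T$'' conclusion lands precisely on the pair-vertices $u_1$.
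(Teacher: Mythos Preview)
Your proof is correct and follows essentially the same approach as the paper: invoke Lemma~\ref{lem:excludeAll} to obtain an OST $\tilde{T}$ with each $u_1$ a leaf, apply Lemma~\ref{lem:LZW} to $\tilde{T}$ to get a path cover ${\cal P}$ that is special, and use the maximality of the preferred TFPCC ${\cal C}$ to conclude $|E({\cal C})| \ge |E({\cal P})| \ge opt(G)$. Your added remark that a path cover is trivially triangle-free makes explicit a point the paper leaves implicit.
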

\begin{proof}
By Lemma~\ref{lem:excludeAll}, $G$ has an OST $\tilde{T}$ such that for each $(u_1,u_3) 
\in \Pi$, $d_{\tilde{T}}(u_1) = 1$. So, by Lemma~\ref{lem:LZW}, we can construct a path 
cover ${\cal P}$ of $G$ with $|E({\cal P})| \ge w(\tilde{T})$ such that $d_{\cal P}(u_1) 
\le 1$ for every $(u_1,u_3) \in \Pi$. Thus, ${\cal P}$ is a special TFPCC of $G$. 
Consequently, if ${\cal C}$ is a preferred TFPCC of $G$, then $opt(G) = w(\tilde{T}) 
\le |E({\cal P})| \le |E({\cal C})|$.
\end{proof}

\begin{lemma}\label{lem:prefer}
We can compute a preferred TFPCC ${\cal C}$ of $G$ in $t(2n, 2m)$ time. 
\end{lemma}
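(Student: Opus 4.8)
The plan is to reduce the computation of a preferred TFPCC of $G$ to that of a maximum (ordinary) TFPCC of an auxiliary graph $G'$ obtained from $G$ by a local gadget construction, so that we can invoke the known bound $t(2n,2m)$ on the running time. The only obstruction to using $G$ itself is the constraint that $d_{\cal C}(u_1)\le 1$ for every pair $(u_1,u_3)\in\Pi$: these are degree constraints on one designated vertex of each pair, and the pairs are vertex-disjoint in their \emph{supports} by Lemma~\ref{lem:ind} (and, as noted there, no two pairs share a boundary point either). So first I would record which vertices must be forced to have degree at most~1, namely the set $\Pi_1=\{u_1 : (u_1,u_3)\in\Pi\}$, and observe that $|\Pi|\le n$ so that $|\Pi_1|\le n$.

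Next I would build $G'$ as follows. For each $u_1\in\Pi_1$ we cannot simply forbid degree~$\ge 2$ directly in a TFPCC computation, but we can \emph{split} $u_1$: replace $u_1$ by a $2$-path (or a short path) consisting of $u_1$ and one new vertex $u_1^{*}$ joined by the edge $\{u_1,u_1^{*}\}$, keep all original edges incident to $u_1$ attached to $u_1$, and (crucially) make $u_1^{*}$ a new leaf-type vertex whose only neighbor is $u_1$. The intended correspondence is: in any TFPCC of $G'$, the forced pendant edge $\{u_1,u_1^{*}\}$ must be used (or, if it is not, $u_1^{*}$ becomes an isolated vertex, which can only \emph{decrease} the edge count, so a maximum TFPCC of $G'$ always uses it); consequently $u_1$ has at most one further edge inside the rest of $G'$, i.e.\ $u_1$ has degree at most~$1$ when restricted to the original vertex set. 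Conversely, a special TFPCC ${\cal C}$ of $G$ extends to a TFPCC of $G'$ by adding all the pendant edges $\{u_1,u_1^{*}\}$, gaining exactly $|\Pi_1|$ edges. Thus maximum TFPCCs of $G'$ are in (edge-count-shifted-by-$|\Pi_1|$) bijection with preferred TFPCCs of $G$, after deleting the $u_1^{*}$ vertices and the pendant edges. I would verify that no $3$-cycle can be created through a split vertex (a $3$-cycle through $u_1$ in $G'$ would need $u_1$ to have degree~$2$ within the original part plus possibly the pendant, but a cycle cannot use a pendant edge, so a $3$-cycle through $u_1$ in $G'$ is exactly a $3$-cycle through $u_1$ in $G$, already excluded), so triangle-freeness transfers cleanly in both directions.

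Then I would check the size bounds: $G'$ has $|V(G)| + |\Pi_1| \le 2n$ vertices and $|E(G)| + |\Pi_1| \le m + n \le 2m$ edges (using, say, that a connected graph with $n\ge 9$ vertices has $m\ge n-1$, so $n\le m+1$ and $m+n\le 2m+1$; if one wants exactly $2m$ one can fold the $+1$ into the analysis or simply note $t$ is nondecreasing and absorb the constant). Hence computing a maximum TFPCC of $G'$ takes $t(2n,2m)$ time by the cited $O(n^2m^2)$ bound of \cite{Har84}. Recovering ${\cal C}$ from the maximum TFPCC of $G'$ — delete each $u_1^{*}$ and the edge $\{u_1,u_1^{*}\}$, and if the maximum TFPCC happened not to use some pendant edge, first add it (this never decreases the edge count and preserves triangle-freeness) — is linear-time postprocessing. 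The resulting ${\cal C}$ is special by construction and has the maximum number of edges among special TFPCCs, so it is preferred.

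The main obstacle, and the step that needs the most care, is arguing the two-way correspondence is \emph{edge-count-exact}: one must show that a maximum TFPCC of $G'$ necessarily uses all $|\Pi_1|$ pendant edges (otherwise the shift by $|\Pi_1|$ is only an inequality in one direction) and that restricting such a TFPCC to $V(G)$ genuinely yields a \emph{special} TFPCC rather than merely a TFPCC, i.e.\ that $d(u_1)\le 1$ really follows — this is where the pendant-edge trick does its work, since once $\{u_1,u_1^{*}\}$ is present, any connected component of the TFPCC through $u_1$ is a path with $u_1^{*}$ as an endpoint, forcing $u_1$ to have at most one neighbor in $V(G)$. I would also double-check that the newly added pendant vertices do not interact with the earlier reduction operations or with $\Pi$ itself (they have degree~$1$, hence are never boundary points and never part of a new pair), so that "preferred" as defined for $G$ is exactly what we compute.
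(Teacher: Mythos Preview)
Your construction and overall strategy are exactly those of the paper: attach a pendant vertex $u_1^*$ to each $u_1\in\Pi_1$, compute a maximum TFPCC of the resulting $G'$, and strip off the pendants. There is, however, a genuine slip at the step you yourself flag as delicate. You write that ``a maximum TFPCC of $G'$ always uses'' the pendant edge, and later that if it does not, you can ``first add it (this never decreases the edge count)''. Neither claim is correct as stated: if in a maximum TFPCC ${\cal C}'$ the pendant edge $\{u_1,u_1^*\}$ is absent, then by maximality $d_{{\cal C}'}(u_1)=2$ (otherwise adding the pendant would strictly increase $|E({\cal C}')|$), and in that case you \emph{cannot} add the pendant without creating a degree-$3$ vertex. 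The fix is to \emph{swap}: remove one of the two edges at $u_1$ and add $\{u_1,u_1^*\}$. This keeps $|E({\cal C}')|$ unchanged, creates no triangle (since $u_1^*$ has a unique neighbor), and keeps all degrees $\le 2$, so ${\cal C}'$ remains a maximum TFPCC. Doing this for each unused pendant (these swaps are independent because the supports of distinct pairs are disjoint by Lemma~\ref{lem:ind}, and boundary points have degree $\ge 3$ so are never themselves in $\Pi_1$) yields a maximum TFPCC using all pendants, after which your deletion step gives a preferred TFPCC of $G$.

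One minor cleanup on the size bound: rather than arguing $m+n\le 2m$ (which fails if $G$ is a tree), use that each pair in $\Pi$ has four supports and distinct pairs have disjoint supports (Lemma~\ref{lem:ind}), so $4|\Pi|\le m$ and hence $|E(G')|=m+|\Pi|\le \tfrac{5}{4}m\le 2m$.
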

\begin{proof}
We construct a new graph $G'$ from $G$ by adding a new vertex $x_p$ and the edge $\{u_1,x_p\}$ 
for each pair $p = (u_1,u_3) \in \Pi$. Obviously, if ${\cal C}^*$ is a preferred TFPCC of $G$, 
then adding the edges $\{x_p,u_1\}$ with $p=(u_1,u_3) \in \Pi$ to ${\cal C}^*$ yields a TFPCC 
${\cal C}'$ of $G'$ with $|E({\cal C}')| = |E({\cal C}^*)| + |\Pi|$. 

We then compute a maximum TFPCC ${\cal C}'$ of $G'$ in $t(|V(G')|,|E(G')|)$ time. 
By the discussion in the last paragraph, $|E({\cal C}')| \ge |E({\cal C}^*)| + |\Pi|$. 
If for some $p = (u_1,u_3) \in \Pi$, $d_{{\cal C}'}(x_p) = 0$, then by the maximality of ${\cal 
C}'$, $d_{{\cal C}'}(u_1) = 2$ and we can modify ${\cal C}'$ by replacing one of the edges 
incident to $u_1$ in ${\cal C}'$ with the edge $\{x_p,u_1\}$. Clearly, ${\cal C}'$ is still 
a maximum TFPCC of $G'$ after the modification. So, we can repeatedly modify ${\cal C}'$ in 
this way until $d_{{\cal C}'}(x_p) = 1$ for every $p=(u_1,u_3) \in \Pi$. ${\cal C}'$ is now 
a maximum TFPCC of $G'$ such that for every $p=(u_1,u_3) \in \Pi$, $d_{{\cal C}'}(x_p) = 1$. 
Finally, we obtain ${\cal C}$ from ${\cal C}'$ by deleting the edge $\{x_p,u_1\}$ for each 
$p = (u_1,u_3) \in \Pi$. Clearly, $|E({\cal C})| = |E({\cal C}')| - |\Pi| \ge |E({\cal C}^*)|$. 
Therefore, ${\cal C}$ is a preferred TFPCC of $G$. Since $|V(G')| \le 2|V(G)|$ and 
$|E(G')| \le 2|E(G)|$, the lemma holds.
\end{proof}

Recall that $t(n,m)=O(n^2m^2)$ \cite{Har84}. So, Lemma~\ref{lem:prefer} ensures that 
after modifying Step~\ref{step:mtfpc} by computing ${\cal C}$ as a preferred TFPCC of $G$, 
Step~\ref{step:mtfpc} can still be done in $t(n,m)$ time.

\section{Preprocessing ${\cal C}$}\label{sec:prep}
In this section, we consider how to refine Step~\ref{step:prep}. So, suppose that we have 
computed a preferred TFPCC ${\cal C}$ of $G$ as in Lemma~\ref{lem:prefer}. To refine 
Step~\ref{step:prep}, we repeatedly perform not only Operations~5 through~7 but also 
the following three operations on ${\cal C}$ until none of the six is applicable. 

\begin{description}
\item[Operation 12.]
	If a cycle $C_1$ of ${\cal C}$ has an edge $e_1=\{u_1,u'_1\}$ and another cycle or 
	path component $C_2$ of ${\cal C}$ has an edge $e_2=\{u_2,u'_2\}$ such that 
	$e=\{u_1,u_2\} \in E(G)$ and $e'=\{u'_1,u'_2\} \in E(G)$, then combine $C_1$ and $C_2$ 
	into a single cycle or path by replacing $e_1$ and $e_2$ with $e$ and $e'$. 
\item[Operation 13.]
	If an endpoint $u_1$ of a path component $P_1$ of ${\cal C}$ is adjacent to an endpoint 
	$u_2$ of another path component $P_2$ of ${\cal C}$ in $G$, then combine $P_1$ and $P_2$ 
	into a single path by adding the edge $\{u_1,u_2\}$. 
\item[Operation 14.]
	If $e=\{u,v\}$ is an edge of a path component of ${\cal C}$ such that for some isolated 
	vertex $x$ of ${\cal C}$, $\{u,x\}\in E(G)$ and $\{v,x\}\in E(G)$, then replace $e$ by 
	the edges $\{u,x\}$ and $\{v,x\}$.
\end{description}

\begin{center}
\begin{figure}[htpb]
\centerline{\includegraphics[scale=.9]{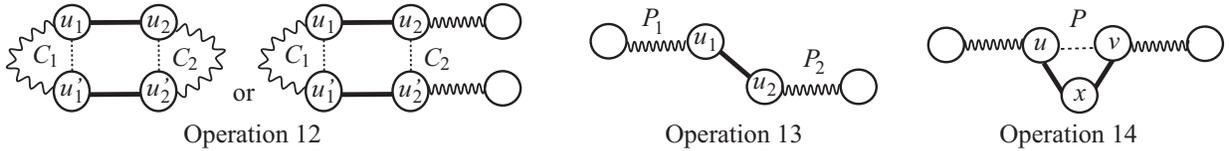}}
\caption{Operations 12 through 14, where each wavy line or curve is a path, 
each dotted edge will be deleted, and each bold edge will be added.}
\label{fig:op12-14}
\end{figure}
\end{center}

\begin{lemma}\label{lem:prep}
Immediately after the refined preprocessing step, the following statements hold:
\begin{enumerate}
\item\label{stat:p1}
	${\cal C}$ is a TFPCC of $G$ and has at least $opt(G)$ edges. 
\item\label{stat:p3} 
	If a path component $P$ of ${\cal C}$ is of length at most~3, then $P$ is alive. 
\item\label{stat:p5} 
	If an endpoint $v$ of a path component $P$ of ${\cal C}$ is a port of $P$, then 
	each vertex in $N_G(v) \setminus V(P)$ is an internal vertex of a path component $Q$ 
	of ${\cal C}$ with $|Q| \ge 2|P| + 2$.
\item\label{stat:p2} 
	No pair $(u_1,u_3) \in \Pi$ satisfies that $u_1$ appears in a cycle of ${\cal C}$. 
\item\label{stat:p4} 
	If a dead path component $P$ of ${\cal C}$ is of length~4, 
	then both endpoints of $P$ are leaves in $G$. 
\item\label{stat:p6}
	Each 4-cycle $C$ of ${\cal C}$ has at least three ports. 
\end{enumerate}
\end{lemma}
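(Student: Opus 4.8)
The plan is to dispatch the six statements roughly in increasing order of difficulty.

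Statement~\ref{stat:p1} follows by observing that each of Operations~5--7 and~12--14 turns a TFPCC of $G$ into a TFPCC of $G$: the only one that can create a cycle is Operation~12, which fuses two cycles of length at least~$4$ into a single cycle of length at least~$8$, so triangle-freeness is preserved. Moreover, none of these operations decreases $|E({\cal C})|$ (Operations~5--7, 12 and~13 keep it fixed, and Operation~14 raises it by one). Since we start from a preferred TFPCC, which has at least $opt(G)$ edges by Lemma~\ref{lem:up}, the inequality $|E({\cal C})|\ge opt(G)$ is maintained. Statements~\ref{stat:p3} and~\ref{stat:p5} would be proved exactly as Statements~\ref{stat:q3} and~\ref{stat:q5} of Lemma~\ref{lem:prep0}; the only adjustment is that the fact ``no endpoint of a path component of ${\cal C}$ is adjacent in $G$ to an endpoint of another path component'', which was previously read off from maximality, now comes directly from the inapplicability of Operation~13.

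For Statement~\ref{stat:p2} I would induct on the preprocessing operations performed. Before any of them is applied, ${\cal C}$ is special, hence $d_{\cal C}(u_1)\le 1$ and $u_1$ lies on no cycle of ${\cal C}$. Now, among Operations~5--7 and~12--14, every one except Operation~12 produces only path components, and Operation~12 produces a cycle only when both of the components being fused are already cycles; thus no operation ever places a vertex onto a cycle of ${\cal C}$ that was not already on one, so $u_1$ never comes to lie on a cycle. Statement~\ref{stat:p6} then follows quickly: were a $4$-cycle $C$ of ${\cal C}$ to have only two ports, Lemma~\ref{lem:cycle}(\ref{stat:=2}) and~(\ref{stat:2and4}) would force these ports to be a non-adjacent pair $\{x,y\}$ of $V(C)$ and $G[V(C)]$ to coincide with $C$, so the other two vertices $w,w'$ of $C$ would satisfy $N_G(w)=N_G(w')=\{x,y\}$; ordering $w,w'$ by $\prec$ puts that pair into $\Pi$ with its first entry lying on the cycle $C$ of ${\cal C}$, contradicting Statement~\ref{stat:p2}.

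Statement~\ref{stat:p4} is the substantial part. Let $P=v_0v_1v_2v_3v_4$ be a dead $4$-path component of ${\cal C}$, and suppose for contradiction that $d_G(v_0)\ge 2$ (the argument for $v_4$ is symmetric, by reflecting $P$). Since $P$ is dead, $N_G(v_0)\subseteq V(P)$, so $\{v_0,v_j\}\in E(G)$ for some $j\in\{2,3,4\}$. If $j=4$, then $v_0v_1v_2v_3v_4v_0$ is a $5$-cycle inside $G[V(P)]$; since $G$ is connected with $|V(G)|\ge 9$, some internal vertex $v_i$ of $P$ is a port, and deleting from this $5$-cycle an edge incident to $v_i$ yields an alive Hamiltonian path of $G[V(P)]$, so Operation~5 applies, a contradiction. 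Hence $\{v_0,v_2\}\in E(G)$ or $\{v_0,v_3\}\in E(G)$. If $\{v_0,v_2\}\in E(G)$, then the Hamiltonian path $v_1v_0v_2v_3v_4$ of $G[V(P)]$ forces $v_1$ to be a non-port (else Operation~5 applies), so $\{v_0,v_1,v_4\}$ is a union of connected components of $G-\{v_2,v_3\}$; the component $K$ containing $v_0$ is $\{v_0,v_1\}$ or $\{v_0,v_1,v_4\}$, both small, with $V(K)\cup\{v_2,v_3\}\subsetneq V(G)$, and in all but one configuration $G[V(K)\cup\{v_2,v_3\}]$ has a Hamiltonian path from $v_2$ to $v_3$ (routed through the chord $\{v_0,v_2\}$), so Operation~10 applies; the single exception, $K=\{v_0,v_1\}$ with $\{v_0,v_3\}\notin E(G)$ and $\{v_1,v_3\}\notin E(G)$, forces $d_G(v_0)=d_G(v_1)=2$, so Operation~11 applies. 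Finally, if $\{v_0,v_2\}\notin E(G)$ (so $\{v_0,v_3\}\in E(G)$), then $N_G(v_0)=\{v_1,v_3\}$, so $v_0$ is isolated in $G-\{v_1,v_3\}$ while $G[\{v_0,v_1,v_3\}]$ contains the Hamiltonian path $v_1v_0v_3$ from $v_1$ to $v_3$, and Operation~10 applies again. Every branch contradicts the standing assumption that no reduction operation applies to $G$; hence $d_G(v_0)\le 1$ and, symmetrically, $d_G(v_4)\le 1$.

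The main obstacle will be the $\{v_0,v_2\}\in E(G)$ branch of Statement~\ref{stat:p4}: one has to keep track of exactly which vertices of $P$ are ports, determine the connected component of $G-\{v_2,v_3\}$ that contains $v_0$, verify that its induced subgraph together with $v_2$ and $v_3$ admits a Hamiltonian path between $v_2$ and $v_3$, and notice the lone configuration that must be disposed of by the degree-$2$ contraction (Operation~11) rather than by Operation~10. The other five statements are comparatively routine once Statement~\ref{stat:p2} is in hand.
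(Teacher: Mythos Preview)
Your treatment of Statements~\ref{stat:p1}, \ref{stat:p3}, \ref{stat:p5}, \ref{stat:p2} and~\ref{stat:p6} is correct and matches the paper's arguments (your derivation of Statement~\ref{stat:p6} from Statement~\ref{stat:p2} is in fact slightly cleaner than the paper's, which re-argues the short-cycle point directly). For Statement~\ref{stat:p4} you choose a different decomposition than the paper: you case on which chord $\{v_0,v_j\}$ is present, whereas the paper cases on how many and which of the internal vertices $v_1,v_2,v_3$ are ports. Your route is appealing because it reuses Operation~10 almost everywhere, but it has one genuine gap.

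The gap is in your last branch, $\{v_0,v_2\}\notin E(G)$ and $\{v_0,v_3\}\in E(G)$ (and $\{v_0,v_4\}\notin E(G)$). You take $K=\{v_0\}$ and the Hamiltonian path $v_1v_0v_3$ of $G[\{v_0,v_1,v_3\}]$ and say Operation~10 applies. But Operation~10 only \emph{does} something if $G[\{v_0,v_1,v_3\}]$ has an edge outside that path, and the only candidate is $\{v_1,v_3\}$. If $\{v_1,v_3\}\notin E(G)$ the operation is vacuous, and since the standing hypothesis is that no reduction operation \emph{changes} $G$, you get no contradiction. This sub-case is not empty: nothing you have established so far rules out, e.g., $G[V(P)]$ having exactly the edges of $P$ together with $\{v_0,v_3\}$. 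To close it you need further work: first observe that the Hamiltonian path $v_2v_1v_0v_3v_4$ forces $v_2$ to be a non-port (else Operation~5 applies), so the internal ports lie in $\{v_1,v_3\}$; then if fewer than two of $v_1,v_3$ are ports, Operation~4 applies, while if both are ports one checks (via Operation~5 again) that $N_G(v_0)=N_G(v_2)=\{v_1,v_3\}$ and finishes with Operation~8 or~9. This is exactly how the paper's port-based case analysis handles the corresponding situation; your chord-based decomposition simply does not reach these operations in this branch.
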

\begin{proof}
A {\em short} cycle is a cycle of length at most~7. 
We prove the statements separately as follows.

{\em Statement~\ref{stat:p1}:} Before the refined preprocessing, ${\cal C}$ has at least $opt(G)$ 
edges by Lemma~\ref{lem:up} and is a TFPCC of $G$. Since Operation~$i$ does not 
decrease the number of edges in ${\cal C}$ or creates a new short cycle or a vertex of 
degree larger than~2 in ${\cal C}$ for each $i \in \{5,6,7,12,13,14\}$, 
Statement~\ref{stat:p1} holds. 

{\em Statement~\ref{stat:p3}:} 
Same as that of Statement~\ref{stat:q3} in Lemma~\ref{lem:prep0}. 

{\em Statement~\ref{stat:p5}:} 
	Suppose that an endpoint $v$ of a path component $P$ of ${\cal C}$ is a port. 
	Consider an arbitrary $u \in N_G(v) \setminus V(P)$. 
	Since neither Operation~6 nor Operation~13 can be performed on ${\cal C}$, $u$ is 
	an internal vertex of a path component $Q$ of ${\cal C}$. Let $u_1$ and $u_2$ be the 
	endpoints of $Q$. For each $i\in\{1,2\}$, let $Q_i$ be the path from $u$ to $u_i$ in $P$. 
	Then, $|Q| = |Q_1| + |Q_2|$. Moreover, since Operation~7 cannot be applied on ${\cal C}$, 
	$|P| + |Q_i| + 1 \le |Q|$ for each $i\in\{1,2\}$. Thus, $2|P| + 2 \le |Q|$.

{\em Statement~\ref{stat:p2}:} Before the refined preprocessing, no pair $(u_1,u_3) \in \Pi$ 
satisfies that $u_1$ appears in a cycle of ${\cal C}$ because ${\cal C}$ is a preferred 
TFPCC of $G$. Moreover, if Operation~$i$ creates a new cycle $C$ in ${\cal C}$ for some 
$i\in\{5,6,7,12,13,14\}$, then $i = 12$ and $C$ is obtained by merging two shorter cycles 
in ${\cal C}$. Thus, Statement~\ref{stat:p2} holds. 

{\em Statement~\ref{stat:p4}:} Let $P$ be a dead path component of ${\cal C}$ with 
$|P| = 4$. Suppose that $u_1$, \ldots, $u_5$ are the vertices of $P$ and they appear 
in $P$ in this order. If all internal vertices of $P$ are ports, then both $u_1$ and 
$u_5$ are leaves of $G$ (and we are done), because otherwise Operation~5 could be 
performed on ${\cal C}$. Moreover, if at most one internal vertex of $P$ is a port, 
then $G$ would be disconnected or Operation~4 could be performed on $G$, 
a contradiction. So, we assume that exactly two internal vertices of $P$ are ports. 
Now, if $\{i,j\} = \{2,4\}$, then $\{u_1,u_5\} \not\in E(G)$, $\{u_1,u_3\}\not\in E(G)$, 
and $\{u_5,u_3\}\not\in E(G)$ (because otherwise Operation~5 could be performed on 
${\cal C}$), and in turn both $u_1$ and $u_5$ are leaves of $G$ (and we are done) 
because otherwise Operation~8 or~9 could be performed on $G$. Thus, 
we may assume that $i = 2$ and $j = 3$. Then, since Operation~5 cannot be performed 
on ${\cal C}$, $u_1$ is a leaf of $G$ and $\{u_1,u_2\} \cap N_G(u_5) = \emptyset$. 
For the same reason, $\{u_3,u_5\} \not\in E(G)$ or $\{u_2,u_4\} \not\in E(G)$. 
Indeed, $\{u_2,u_4\} \in E(G)$ because otherwise the edge $\{u_2,u_3\}$ would be 
deleted by Operation~2 or~3, Therefore, $\{u_3,u_5\} \not\in E(G)$ and in turn 
$u_5$ is also a leaf of $G$. 

{\em Statement~\ref{stat:p6}:} 
Let $C$ be a 4-cycle in ${\cal C}$, and $A$ be the set of ports of $C$. 
Further let $u_1$, \ldots, $u_4$ be the vertices of $C$ and assume that 
they appear in $C$ clockwise in this order. By Lemma~\ref{lem:cycle}, $|A|\ge 2$. 
For a contradiction, assume that $|A| = 2$. Then, by Statement~\ref{stat:2and4} in 
Lemma~\ref{lem:cycle}, $A = \{u_1,u_3\}$ or $A = \{u_2,u_4\}$. We may assume that 
$A = \{u_2,u_4\}$ and $u_1 \prec u_3$. Then, $N_G(u_1) = N_G(u_3) = \{u_2,u_4\}$ by 
Statement~3 in Lemma~\ref{lem:cycle}, and in turn $(u_1,u_3) \in \Pi$. 
Since the refined preprocessing of ${\cal C}$ does not introduce a new short cycle, 
$C$ is a cycle in ${\cal C}$ even before the refined preprocessing. However, this contradicts 
the fact that ${\cal C}$ is a preferred TFPCC of $G$ before the refined preprocessing. 
\end{proof}

Obviously, the refined preprocessing (i.e., Step~\ref{step:prep}) can be done in $O(nm)$ time.

\section{Transforming ${\cal C}$ into a Spanning Tree}\label{sec:trans}
In this section, we consider how to refine Step~\ref{step:trans}. So, 
suppose that we have just performed the refined preprocessing on ${\cal C}$ as in 
Section~\ref{sec:prep}. Let $\Gamma$ be the set of (ordered) pairs $(P,Q)$ of 
path components of ${\cal C}$ such that $|P| \ge 1$ and some endpoint $v$ of $P$ 
is adjacent to a vertex $u$ of $Q$ in $G$. Note that $d_{\cal C}(u) = 2$ and 
$2|P| + 2 \le |Q|$ by Statement~\ref{stat:p5} in Lemma~\ref{lem:prep}. 
Suppose that we obtain a subset $\Gamma'$ of $\Gamma$ from $\Gamma$ as follows. 
\begin{itemize}
\item For each path component $P$ of ${\cal C}$ such that there are two or more path 
	components $Q$ of ${\cal C}$ 	with $(P,Q) \in \Gamma$, delete all but one pair $(P,Q)$ 
	from $\Gamma$. 
\end{itemize}
Now, consider an auxiliary digraph $D$ such that the vertices of $D$ one-to-one correspond 
to the path components $P$ of ${\cal C}$ with $|P|\ge 1$ and the arcs of $D$ one-to-one 
correspond to the pairs in $\Gamma'$. By Statement~\ref{stat:p5} in Lemma~\ref{lem:prep}, 
$D$ is a rooted forest (in which each leaf is of in-degree~0, each root is of out-degree~0, 
and each vertex is of out-degree at most~1). 

To transform ${\cal C}$ into a spanning tree of $G$, the idea is to modify ${\cal C}$ in three 
stages. ${\cal C}$ is initially a TFPCC of $G$ and we will always keep ${\cal C}$ being a TFTCC 
of $G$. For each $i\in\{1,2,3\}$, we use ${\cal C}_i$ to denote the ${\cal C}$ immediately 
after the $i$-th stage. For convenience, we use ${\cal C}_0$ to denote the ${\cal C}$ 
immediately before the first stage. Moreover, for each $i \in \{1,2,3\}$ and each connected 
component $C$ of ${\cal C}_i$, we use $b(C)$ to denote the number of edges $\{u,v\} \in 
E({\cal C}_0)$ such that $\{u,v\}\subseteq V(C)$. 

In the first stage, we modify ${\cal C}$ by performing the following step:

\begin{enumerate}
\item\label{step:conPath}
	For each pair $(P,Q) \in \Gamma'$, add an arbitrary $\{u,v\} \in E(G)$ to ${\cal C}$ 
	such that $u$ is an endpoint of $P$ and $v$ appears in $Q$. 
\end{enumerate}

\begin{lemma}\label{lem:type1}
Each connected component of ${\cal C}_1$ that is not a path or cycle is 
a tree $\hat{T}$ satisfying Condition C2 below:
\begin{itemize}
\item[C2.] $b(\hat{T}) \ge 5$, $|L(\hat{T})| \le b(\hat{T}) - 2$, and 
$w(\hat{T}) \ge \frac{4}{5} b(\hat{T})$. 
\end{itemize} 
\end{lemma}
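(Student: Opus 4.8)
The plan is to analyze the structure produced by Step~\ref{step:conPath} directly, using the forest structure of $D$ and the key inequality $|Q| \ge 2|P|+2$ from Statement~\ref{stat:p5} of Lemma~\ref{lem:prep}. First I would observe that the only connected components of ${\cal C}_1$ that fail to be paths or cycles are exactly those obtained by gluing together a collection of path components of ${\cal C}_0$ according to a subtree of $D$ in which at least one arc is present, since the only edges added in Stage~1 are the edges corresponding to pairs in $\Gamma' \subseteq \Gamma$, and by the definition of $D$ each such component is built from a rooted subtree. Because every added edge joins an endpoint of a path component $P$ to an \emph{internal} vertex of a strictly longer path component $Q$ (with $d_{\cal C}(u)=2$), adding the edge raises the degree of $u$ to $3$ and creates a branching vertex, so the resulting component $\hat T$ is indeed a tree and not a path or cycle.

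Next I would set up the bookkeeping. Let $\hat T$ be such a component, formed from path components $P_1,\dots,P_r$ of ${\cal C}_0$ (with $r \ge 2$, since at least one arc is used) joined by $r-1$ added edges. Then $b(\hat T) = \sum_{j=1}^{r} |P_j|$, because every edge of ${\cal C}_0$ inside $\hat T$ comes from one of the $P_j$ and no two vertices joined by an added edge lie in the same $P_j$. For the leaf count: each $P_j$ contributes at most its two original endpoints as potential leaves of $\hat T$, but whenever an arc $(P_i,P_j)$ of $D$ is used, it attaches an endpoint of $P_i$ to an internal vertex of $P_j$, so that endpoint of $P_i$ ceases to be a leaf of $\hat T$; since $D$ restricted to $\hat T$ has $r-1$ arcs and each consumes a distinct endpoint of a distinct $P_i$ (each vertex of $D$ has out-degree at most $1$), we get $|L(\hat T)| \le 2r - (r-1) = r+1$. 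I would then combine this with a lower bound on $b(\hat T)$.

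The heart of the argument is the three claimed inequalities in Condition~C2. For $b(\hat T) \ge 5$: the root path component $Q^*$ of the subtree has $|Q^*| \ge 2|P|+2 \ge 4$ for its child $P$ (as $|P| \ge 1$), and adding $|P| \ge 1$ more gives $b(\hat T) \ge 5$; more generally I would argue by taking any leaf of the subtree of $D$ and its parent. For $|L(\hat T)| \le b(\hat T) - 2$: from $|L(\hat T)| \le r+1$ it suffices to show $b(\hat T) \ge r+3$, which follows because the deepest two path components along any root path already contribute $|P| + |Q| \ge |P| + 2|P|+2 \ge 5$ while the remaining $r-2$ components contribute at least $1$ each, giving $b(\hat T) \ge 5 + (r-2) = r+3$. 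For $w(\hat T) \ge \frac45 b(\hat T)$: I would use $w(\hat T) = b(\hat T)+1 - |L(\hat T)| \ge b(\hat T) - r$ (since $\hat T$ has $b(\hat T)+1$ vertices as a tree with $b(\hat T)$ edges\footnote{here I am using that $\hat T$ is a tree, so $|V(\hat T)| = |E(\hat T)|+1 = b(\hat T)+1$, since all $r-1$ added edges plus all $b(\hat T)$ original edges are present and there are no others inside $\hat T$.}), and then show $b(\hat T) - r \ge \frac45 b(\hat T)$, i.e. $b(\hat T) \ge 5r$. The main obstacle I anticipate is precisely establishing $b(\hat T) \ge 5r$: it does not follow from $|P_j| \ge 1$ alone, and I expect the right approach is an amortized charging argument on the arcs of $D$ — for each arc $(P_i,P_j)$ charge the inequality $|P_j| \ge 2|P_i|+2$, and sum over a root-to-leaf decomposition of the subtree so that each $P_j$'s length is "paid for" with a factor of roughly $2$ relative to its children, yielding a geometric-type bound that comfortably gives $\sum_j |P_j| \ge 5r$ once one checks the small cases $r=2,3$ by hand. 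I would organize the summation so that the contribution of each path component other than the root is at least twice that of the subtree hanging below it plus $2$, and verify the base case $r=2$ (where $b(\hat T) \ge |P_1|+|P_2| \ge 1 + 4 = 5 = 5r/2$, so actually the bound $b(\hat T)\ge 5r$ fails for $r=2$) — which tells me the real claim must be proved more carefully, perhaps by replacing the crude $|L(\hat T)|\le r+1$ with a sharper count that exploits that internal attachment points of $Q$ are genuinely internal, so $\hat T$ keeps more non-leaves than the naive estimate suggests; resolving this tension between the leaf bound and the length bound is where the work lies.
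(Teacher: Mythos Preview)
Your overall strategy---count the path components $P_1,\dots,P_r$ directly and bound $b(\hat T)$, $|L(\hat T)|$, $w(\hat T)$ in terms of $r$---is sound, and your arguments for $b(\hat T)\ge 5$ and $|L(\hat T)|\le r+1\le b(\hat T)-2$ are correct. The trouble is a miscount in the third part that sends you chasing an impossible inequality.

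In your footnote you assert $|V(\hat T)| = b(\hat T)+1$ because ``$\hat T$ is a tree with $b(\hat T)$ edges''. But $b(\hat T)$ counts only the edges of ${\cal C}_0$ lying inside $\hat T$; the $r-1$ edges added in Step~\ref{step:conPath} are \emph{not} in ${\cal C}_0$ and hence not counted by $b(\cdot)$. So $|E(\hat T)| = b(\hat T) + (r-1)$ and $|V(\hat T)| = b(\hat T) + r$. With the correct count,
\[
w(\hat T) \;=\; |V(\hat T)| - |L(\hat T)| \;=\; b(\hat T) + r - |L(\hat T)| \;\ge\; b(\hat T) + r - (r+1) \;=\; b(\hat T) - 1,
\]
and then $w(\hat T)\ge \tfrac{4}{5}b(\hat T)$ follows at once from $b(\hat T)\ge 5$. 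The bound $b(\hat T)\ge 5r$ you were trying to establish is both false (take $r=2$, $|P_1|=1$, $|P_2|=4$) and unnecessary; the whole ``charging argument'' paragraph can be deleted.

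For comparison, the paper avoids this bookkeeping entirely by inducting on the number of arcs in the subtree $\hat T_D$ of $D$: peel off a leaf $P$ of $\hat T_D$, apply the hypothesis to $\hat T' = \hat T - V(P)$, and observe $b(\hat T)=b(\hat T')+|P|$, $|L(\hat T)|=|L(\hat T')|+1$, $w(\hat T)=w(\hat T')+|P|$. All three inequalities in C2 then propagate immediately since $|P|\ge 1$. Your global count and the paper's induction are really the same computation unrolled differently; once your vertex count is fixed, both arguments are equally short.
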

\begin{proof}
Let $\hat{T}$ be a connected component of ${\cal C}_1$ that is not a path or cycle. 
Obviously, $\hat{T}$ can be obtained from a tree component $\hat{T}_D$ of $D$ by replacing each vertex 
of $\hat{T}_D$ with the corresponding path component of ${\cal C}$ and replacing each arc of $\hat{T}_D$ 
corresponding to a pair $(P,Q) \in \Gamma'$ with an edge $\{v,u\} \in E(G)$ such that 
$v$ is an endpoint of $P$ and $u$ appears in $Q$. Thus, $\hat{T}$ is clearly a tree. 

We next prove that $\hat{T}$ satisfies Condition~C2 by induction on the number of arcs in $\hat{T}_D$. 
Clearly, $\hat{T}_D$ has at least one edge. In the base case, $\hat{T}_D$ has only one arc. Let $(P,Q)$ 
be the pair in $\Gamma'$ corresponding to the arc. $\hat{T}$ is obtained from $P$ and $Q$ by 
connecting them with an edge $\{v,u\}\in E(G)$ such that $v$ is an endpoint of $P$ and $u$ 
appears in $Q$. Thus, $w(\hat{T}) = |P| + |Q| - 1$, $|L(\hat{T})| = 3$, and $b(\hat{T}) = |P| + |Q|$. Hence, 
by Statement~\ref{stat:p5} in Lemma~\ref{lem:prep}, $b(\hat{T}) \ge 3|P| + 2\ge 5$. Therefore, 
$|L(\hat{T})| \le b(\hat{T}) - 2$ and $\frac{w(\hat{T})}{b(\hat{T})} = 1 - \frac{1}{b(\hat{T})} \ge \frac{4}{5}$. 
This shows that $\hat{T}$ satisfies Condition~C2 in the base case. 

Now, assume that $\hat{T}_D$ has at least two arcs. Consider an arbitrary $(P,Q) \in \Gamma'$ 
such that the vertex $\alpha$ of $D$ corresponding to $P$ is a leaf of $\hat{T}_D$. Let $\hat{T}'_D$ 
be obtained from $\hat{T}_D$ by deleting $\alpha$, and $\hat{T}'$ be obtained from $\hat{T}$ by deleting 
the vertices of $P$. Since $\hat{T}'_D$ has one fewer arc than $\hat{T}_D$, the inductive hypothesis 
implies that $b(\hat{T}') \ge 5$, $|L(\hat{T}')| \le b(\hat{T}') - 2$, and $w(\hat{T}') \ge \frac{4}{5} b(\hat{T}')$. 
Obviously, $b(\hat{T}) = b(\hat{T}') + |P|$, $|L(\hat{T})| = |L(\hat{T}')| + 1$, and $w(\hat{T}) = w(\hat{T}') + |P|$. 
Since $|P| \ge 1$, it is now easy to verify that $\hat{T}$ satisfies Condition~C2.
\end{proof}

Hereafter, a connected component of ${\cal C}$ is {\em good} if it is a tree $\hat{T}$ 
satisfying Condition~C2 in Lemma~\ref{lem:type1} or Condition~C3 below, 
while it is {\em bad} otherwise. 
\begin{itemize}
\item[C3.] $w(\hat{T}) \ge b(\hat{T}) = 4$ and $|L(\hat{T})| = 3$. 
\end{itemize} 

\begin{lemma}\label{lem:type2}
Suppose that $C$ is a bad connected component of ${\cal C}_1$. Then, $C$ is a cycle of 
length at least~4, a 0-path, or a 4-path whose endpoints are leaves of $G$. Moreover, 
if $C$ is a 0-path, then the unique vertex $u \in V(C)$ satisfies that each $v \in 
N_G(u)$ is an internal vertex of a tree component of ${\cal C}_1$ and no two vertices 
in $N_G(u)$ are adjacent in ${\cal C}_1$. 
\end{lemma}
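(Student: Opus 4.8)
The plan is to analyze a bad connected component $C$ of ${\cal C}_1$ by first recalling what the first stage of modification does. By Lemma~\ref{lem:type1}, every connected component of ${\cal C}_1$ that is not a path or a cycle is a good tree (it satisfies Condition~C2). So a bad component $C$ must be a path or a cycle of ${\cal C}_1$. Since the first stage only adds edges (never deletes any), a cycle of ${\cal C}_1$ is a cycle that was already present in ${\cal C}_0 = {\cal C}$, and by Statement~\ref{stat:p1} in Lemma~\ref{lem:prep} together with the fact that ${\cal C}$ is triangle-free, such a cycle has length at least~4. (Recall the refined preprocessing forbids short cycles of length $\le 7$ from being created, and ${\cal C}$ itself has no $3$-cycles.) This disposes of the cycle case.

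Next I would handle the path case. A path component $P$ of ${\cal C}_1$ with $|P|\ge 1$ corresponds to a vertex of the auxiliary digraph $D$; since $C$ is a path (not a larger tree), the corresponding vertex of $D$ is an isolated vertex of $D$, i.e., $P$ is incident to no pair of $\Gamma'$. Combined with how $\Gamma'$ was obtained from $\Gamma$, this means that \emph{no} endpoint of $P$ is adjacent in $G$ to a vertex of any path component of ${\cal C}$ with $|P|\ge 1$. Now I split on $|P|$. If $1\le |P|\le 3$, then by Statement~\ref{stat:p3} in Lemma~\ref{lem:prep} $P$ is alive, so some endpoint $v$ of $P$ is a port; by Statement~\ref{stat:p5}, $v$ is adjacent to an internal vertex of a path component $Q$ with $|Q|\ge 2|P|+2\ge 4$, which would put $(P,Q)$ into $\Gamma$ — contradiction. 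Hence $|P|=0$ or $|P|\ge 4$. If $|P|\ge 4$: a $\ge 4$-path with both endpoints leaves of $G$ would be a dead path, and I need to rule out $|P|\ge 5$ and $|P|=4$-with-a-port. A dead $4$-path has both endpoints leaves of $G$ by Statement~\ref{stat:p4}, giving the claimed form. For $|P|\ge 5$ dead or $|P|\ge 4$ alive, the argument is: if $P$ is alive then some endpoint $v$ is a port and Statement~\ref{stat:p5} again produces a pair in $\Gamma$ (here I must double-check that $v$'s neighbour outside $P$ is in a path component rather than in a tree component of ${\cal C}_1$ — but $N_G(v)\setminus V(P)$ lies in path components of ${\cal C}_0$, and $v$ itself is an endpoint of the path component $P$ of ${\cal C}_0$, so Statement~\ref{stat:p5} applies directly to ${\cal C}_0={\cal C}$); if $P$ is dead with $|P|\ge 5$, I should argue via Operations~2,~3,~4,~5,~8 that this cannot happen after preprocessing (a long dead path would either disconnect $G$, admit Operation~4 at a cut-point, or admit Operations~2/3 at a bridge/non-bridge edge in its interior) — this is where the constant~$8$ in Operation~4 and the full strength of the reduction rules get used, so I expect a short case check here parallel to the proof of Statement~\ref{stat:p4}.

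Finally, for the $0$-path case $C=\{u\}$, I need the extra structural claim. Since $G$ is connected and $|V(G)|\ge 9$, $u$ has at least one neighbour in $G$; since $(C,Q)\notin\Gamma$ for every path component $Q$ (the $|P|=0$ case still counts: $\Gamma$ requires $|P|\ge 1$, so actually I must be careful — $\Gamma$ is defined only for $|P|\ge 1$, so a $0$-path is never the first coordinate of a pair in $\Gamma'$, and that is precisely why it can remain bad). So I argue directly: for $v\in N_G(u)$, $v$ cannot be an endpoint of a path component $Q$ of ${\cal C}$, for otherwise Operation~13 (if $|Q|=0$, i.e. $v$ isolated) or Operation~6/Operation~14-type reasoning, or more simply the maximality-driven comment after Operation~7, would have merged $u$ with $Q$; and $v$ cannot lie in a cycle of ${\cal C}$ by Operation~14 applied to an edge of that cycle incident to $v$ together with the isolated vertex $u$ — wait, Operation~14 needs $u$ adjacent to \emph{both} endpoints of an edge; so instead I use Operation~6 with $P$ the $0$-path $\{u\}$: an endpoint of a $0$-path component adjacent to a cycle vertex triggers Operation~6. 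Hence every $v\in N_G(u)$ is an internal vertex of a path component, which in ${\cal C}_1$ is an internal vertex of a tree component (a path component of ${\cal C}_1$ whose interior contains $v$, counting paths as trees). For the last clause, if two neighbours $v_1,v_2$ of $u$ were adjacent in ${\cal C}_1$ via an edge $e'=\{v_1,v_2\}$, then since the first stage added no such edge inside a path component, $e'\in E({\cal C}_0)$; but then Operation~14 applies to $e'$ and the isolated vertex $u$, contradicting that preprocessing is complete. The main obstacle I anticipate is the $|P|\ge 5$ dead-path subcase: making the elimination airtight requires carefully invoking the right reduction rule (Operations~2,~3,~4) on the interior of the path, analogous to but slightly more involved than the argument already given for Statement~\ref{stat:p4}, and this is the one place where I would slow down and write the case distinction in full.
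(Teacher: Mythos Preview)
Your overall structure is right and matches the paper's argument, but you have one genuine gap and it is exactly the subcase you flagged as ``the one place where I would slow down'': dead paths of length at least~5.

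You propose to eliminate a dead path $P$ with $|P|\ge 5$ by a case analysis on the reduction rules (Operations~2, 3, 4, 5, 8). This is the wrong direction. The preprocessing does \emph{not} forbid dead path components of length $\ge 5$ in ${\cal C}_0$; there is no lemma in the paper asserting this, and the case analysis you anticipate would not close. The correct observation is much simpler and purely arithmetic: any path $P$ with $|P|\ge 5$ already satisfies Condition~C2, since $b(P)=|P|\ge 5$, $|L(P)|=2\le b(P)-2$, and $w(P)=|P|-1\ge \tfrac{4}{5}|P|$ (the last inequality being equivalent to $|P|\ge 5$). Hence such a $P$ is \emph{good}, contradicting the hypothesis that $C$ is bad. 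This is exactly how the paper disposes of this case in a single sentence, and it is the step you are missing.

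Once you plug this in, your treatment of the remaining cases (cycles via triangle-freeness; $1\le|P|\le 3$ and alive $|P|\ge 4$ via $\Gamma$; dead $|P|=4$ via Statement~\ref{stat:p4}) is correct and is just a more explicit unpacking of the paper's appeal to Lemma~\ref{lem:prep}. Your $0$-path argument is also fine; Operation~13 alone rules out $v$ being a path endpoint, Operation~6 rules out $v$ on a cycle, and Operation~14 gives the non-adjacency clause. Your justification that an edge $\{v_1,v_2\}$ between two neighbours of $u$ cannot have been added in Stage~1 should be phrased as: both $v_1,v_2$ are \emph{internal} vertices of path components of ${\cal C}_0$ (by Operations~6 and~13), whereas every edge added in Stage~1 has one endpoint that is a path \emph{endpoint} in ${\cal C}_0$.
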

\begin{proof}
Since $C$ is bad, Lemma~\ref{lem:type1} ensures that $C$ is a path or cycle and in turn 
is a connected component of ${\cal C}_0$. Indeed, $C$ cannot be a path of length at least~5, 
because otherwise $C$ would satisfy Condition~C2. Now, by Lemma~\ref{lem:prep}, $C$ is 
a cycle of length at least~4, a 0-path, or a 4-path whose endpoints are leaves of $G$. 

Suppose that $C$ is a 0-path. Then, $C$ is also 0-path in ${\cal C}_0$. Let $u$ 
be the unique vertex in $C$. Consider an arbitrary $v \in N_G(u)$. Since Operation~13 
cannot be performed on ${\cal C}_0$, $v$ is not a leaf of a tree component of 
${\cal C}_1$. Moreover, since Operation~6 cannot be performed on ${\cal C}_0$, 
$v$ does not appear in a cycle of ${\cal C}_1$. Furthermore, since Operation~14 cannot 
be performed on ${\cal C}_0$, no two vertices in $N_G(u)$ are adjacent in ${\cal C}_1$. 
\end{proof}

We next want to define several operations on ${\cal C}$ none of which will produce 
a new cycle or a new bad connected component in ${\cal C}$. An operation on ${\cal C}$ 
is {\em good} if it either just connects two or more connected components of ${\cal C}$ 
into a single good connected component, or modify a good connected component of 
${\cal C}$ so that it has more internal vertices (and hence remains good). 

In the second stage, we modify ${\cal C}$ by repeatedly performing the following 
operations on ${\cal C}$ until none of them is applicable. 

\begin{description}
\item[Operation 15.] 
	If ${\cal C}$ has two cycles $C_1$ and $C_2$ such that $|C_1| + |C_2| \ge 10$ and 
	some edge $e=\{v_1,v_2\}$ of $G$ satisfies $v_1\in V(C_1)$ and $v_2 \in V(C_2)$, then 
	connect $C_1$ and $C_2$ into a single path $T$ by deleting one edge incident to $v_1$ 
	in $C_1$, deleting one edge incident to $v_2$ in $C_2$, and adding the edge $e$. 
\item[Operation 16.] 
	If ${\cal C}$ has a cycle $C_1$ of length at least~5 and a good connected component 
	$C_2$ such that some edge $e=\{v,u\}$ of $G$ satisfies $v\in V(C_1)$ and $u \in V(C_2)$,
	then connect $C_1$ and $C_2$ into a single tree $T$ by deleting one edge incident to $v$ 
	in $C_1$ and adding the edge $e$. 
\item[Operation 17.] 
	If ${\cal C}$ has a cycle $C$ of length at least~6 and a 4-path component $P$ 
	such that some edge $e=\{v,u\}$ of $G$ satisfies $v\in V(C)$ and $u \in V(P)$, then 
	connect $C$ and $P$ into a single tree $T$ by deleting one edge incident to $v$ in 
	$C$ and adding the edge $e$. 
\item[Operation 18.] 
	If ${\cal C}$ has a 0-path component $P$ whose unique vertex $u$ has two neighbors 
	$v_1$ and $v_2$ in $G$ such that $v_1$ and $v_2$ fall into different connected components
	$C_1$ and $C_2$ of ${\cal C}$, then connect $P$, $C_1$, and $C_2$ into a single 
	connected component $T$ by adding the edges $\{u,v_1\}$ and $\{u,v_2\}$. 
\item[Operation 19.] 
	If ${\cal C}$ has a good connected component $C_1$ and another connected component 
	$C_2$ such that some leaf $u$ of $C_1$ is adjacent to a vertex $v$ of $C_2$ in $G$, 
	then connect $C_1$ and $C_2$ into a single tree component $T$ by deleting one edge 
	incident to $v$ in $C_2$ if $C_2$ is a cycle, and further adding the edge $\{u,v\}$. 
\item[Operation 20.] 
	If a cycle $C$ of ${\cal C}$ has an edge $e=\{v_1,v_2\}$ such that some $u_1 \in 
	N_G(v_1) \setminus V(C)$ and some $u_2 \in N_G(v_2) \setminus V(C)$ fall into 
	different connected components $C_1$ and $C_2$ of ${\cal C}$ other than $C$, 
	then connect $C$, $C_1$, and $C_2$ into a single tree component $T$ by deleting $e$, 
	deleting one edge incident to $u_1$ if $C_1$ is a cycle, deleting one edge incident 
	to $u_2$ if $C_2$ is a cycle, and adding the edges $\{v_1,u_1\}$ and $\{v_2,u_2\}$.
\item[Operation 21.]
	If a good connected components $C$ of ${\cal C}$ is not a Hamiltonian path of $G$ 
	but is a dead path whose endpoints are adjacent in $G$, then choose an arbitrary 
	port $u$ of $C$, modify $C$ by adding the edge of $G$ between the endpoints of 
	$C$ and deleting one edge incident to $u$ in $C$, and further perform Operation~19.
\item[Operation 22.] 
	If a good connected component $C$ of ${\cal C}$ is not a path but has two leaves 
	$u$ and $v$ with $\{u,v\} \in E(G)$, then modify $C$ by first finding an arbitrary 
	vertex $x$ on the path $P$ between $u$ and $v$ in $C$ with $d_C(x) \ge 3$, then 
	deleting one edge incident to $x$ in $P$, and further adding the edge $\{u,v\}$. 
\item[Operation 23.] 
	If ${\cal C}$ has a 0-path component $C_1$, a 4-path component $P$, and a connected 
	component $C_2$ other than $C_1$ and $P$ such that the center vertex $u_3$ of $P$ 
	is adjacent to a vertex $x$ of $C_2$ in $G$ and the unique vertex $v$ of $C_1$ is 
	adjacent to the other two internal vertices $u_2$ and $u_4$ of $P$ (than $u_3$) in 
	$G$, then connect $C_1$, $P$, and $C_2$ into a single connected component $T$ by 
	deleting the edge $\{u_2,u_3\}$, deleting one edge incident to $x$ if $C_2$ is 
	a cycle, and adding the edges $\{v,u_2\}$, $\{v,u_4\}$, $\{u_3,x\}$. 
\end{description}

\begin{center}
\begin{figure}[htpb]
\centerline{\includegraphics[scale=.9]{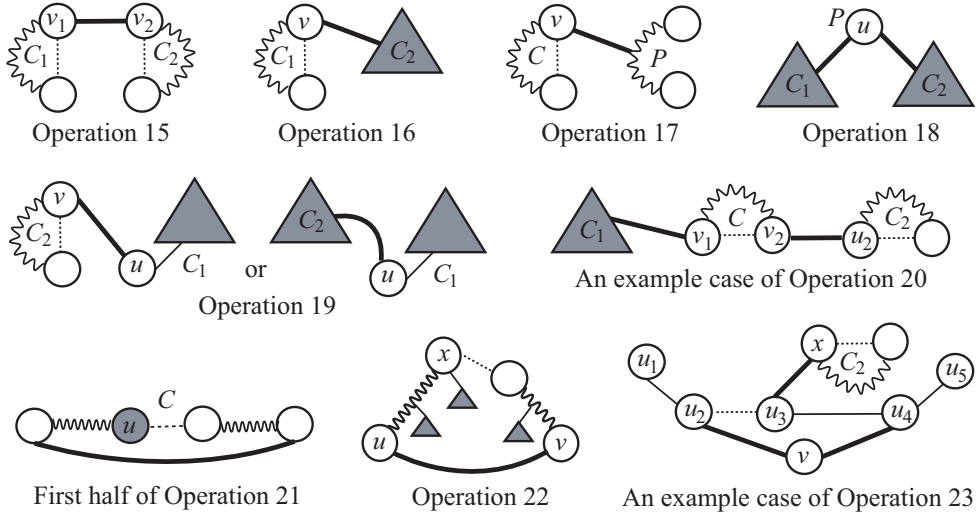}}
\caption{Operations 15 through 23, where the filled circle is a port, 
each wavy line or curve is a path, each filled triangle is a tree, 
each dotted edge will be deleted, and each bold edge will be added.}
\label{fig:op15-23}
\end{figure}
\end{center}

In the following proofs of Lemmas~\ref{lem:op15} through~\ref{lem:op23}, 
$\hat{T}$ denotes the new connected component of ${\cal C}$ created 
by the corresponding operation. 

\begin{lemma}\label{lem:op15}
Operation 15 is good.
\end{lemma}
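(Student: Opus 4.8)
The plan is to show that Operation~15 satisfies the definition of a \emph{good} operation, i.e.\ that the tree $\hat{T}$ obtained by joining two cycles $C_1$ and $C_2$ with $|C_1|+|C_2|\ge 10$ is a good connected component of ${\cal C}$, and that the operation does not introduce a new cycle or a new bad component. The latter is immediate: the only change is to the two cycles $C_1, C_2$, which are replaced by the single path $\hat{T}$, so no other component is touched, and $\hat{T}$ itself is a path, not a cycle. Thus the whole burden is to verify that $\hat{T}$ is good, and since $\hat{T}$ is a path it must satisfy Condition~C2 from Lemma~\ref{lem:type1} (Condition~C3 forces $|L(\hat T)|=3$, impossible for a path).

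First I would record the basic parameters. Writing $k_1 = |C_1|$ and $k_2 = |C_2|$, each $C_i$ is a cycle of ${\cal C}_0$, so all $k_i$ of its edges lie inside $C_i$; hence $b(\hat{T}) = k_1 + k_2 \ge 10 \ge 5$, giving the first clause of C2. Since $\hat{T}$ is a path it has exactly two leaves, so $|L(\hat{T})| = 2 \le b(\hat{T}) - 2$ because $b(\hat{T}) \ge 4$; that gives the second clause. For the weight, note that forming $\hat{T}$ from $C_1$ and $C_2$ deletes one edge incident to $v_1$ in $C_1$, one edge incident to $v_2$ in $C_2$, and adds $e$; the resulting path on $k_1 + k_2$ vertices has exactly $k_1 + k_2 - 1$ edges and hence $w(\hat{T}) = |E(\hat{T})| - 1 = (k_1+k_2) - 2 = b(\hat{T}) - 2$ internal vertices. (Here I use that a path with $N$ vertices has $N-2$ internal vertices, and $N = b(\hat T)$ since both cycles are spanned.) So the third clause $w(\hat{T}) \ge \frac{4}{5} b(\hat{T})$ reduces to $b(\hat{T}) - 2 \ge \frac{4}{5} b(\hat{T})$, i.e.\ $\frac{1}{5} b(\hat{T}) \ge 2$, i.e.\ $b(\hat{T}) \ge 10$ — which is exactly the hypothesis $|C_1| + |C_2| \ge 10$ of the operation.

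I would then remark on why Operation~15's threshold is stated as ``$\ge 10$'' rather than something smaller: the computation above shows $10$ is precisely the smallest value of $b(\hat{T})$ for which a path component satisfies the $\frac{4}{5}$ ratio required by C2, which is why the operation only fires when the two cycles' lengths sum to at least $10$. Finally I would note that the operation clearly just connects two connected components of ${\cal C}$ into a single good one, so it matches the first alternative in the definition of a good operation; this completes the verification that Operation~15 is good.

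The main obstacle here is essentially bookkeeping rather than conceptual: one must be careful that $b(\hat{T})$ counts edges of ${\cal C}_0$ with both endpoints in $V(\hat{T})$, and confirm that after the earlier stages each $C_i$ is still an induced cycle of ${\cal C}_0$ contributing exactly $|C_i|$ such edges — this follows because $C_1$ and $C_2$ are cycles of the current ${\cal C}$ and, by the invariant that ${\cal C}$ stays a TFTCC whose components only grow, a cycle component of ${\cal C}$ that is still present is a cycle of ${\cal C}_0$. Once that is pinned down, the inequality chase is trivial and driven entirely by the ``$\ge 10$'' hypothesis.
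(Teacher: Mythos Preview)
Your argument is correct and matches the paper's proof exactly: the paper simply records $w(\hat{T}) = |C_1| + |C_2| - 2$, $b(\hat{T}) = |C_1| + |C_2| \ge 10$, and $|L(\hat{T})| = 2$, then concludes that $\hat{T}$ is good. Your write-up is just a more detailed unpacking of the same computation, including the (correct) justification that cycles surviving into Stage~2 are cycles of ${\cal C}_0$ so that $b(\hat{T}) = |C_1|+|C_2|$.
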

\begin{proof}
Obviously, $w(\hat{T}) = |C_1| + |C_2| - 2$, $b(\hat{T}) = |C_1| + |C_2| \ge 10$, 
and $|L(\hat{T})| = 2$. Thus, $\hat{T}$ is good.  
\end{proof}

\begin{lemma}\label{lem:op16}
Operation 16 is good.
\end{lemma}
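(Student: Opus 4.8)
The plan is to analyze the tree $\hat{T}$ produced by Operation~16 by combining the cycle $C_1$ (of length at least~5) with the good component $C_2$, tracking how the three quantities $b$, $w$, and $|L|$ change, and then verifying Condition~C2 (or noting that $\hat{T}$ satisfies C2 directly since $b(\hat{T})$ will be large). First I would record the bookkeeping: after deleting one edge incident to $v$ in $C_1$ and adding $e=\{v,u\}$, the cycle $C_1$ becomes a path $P_1$ on the same vertex set with $|P_1| = |C_1| - 1 \ge 4$ edges, and $b(\hat{T}) = b(C_1) + b(C_2) = |C_1| + b(C_2)$. For $w$: every vertex of $C_1$ except the two endpoints of $P_1$ stays internal in $\hat{T}$, and $v$ (one endpoint of $P_1$) becomes internal again because the new edge $e$ is incident to it, so only one vertex of $C_1$ (the far endpoint of $P_1$) is a leaf coming from the $C_1$-side; on the $C_2$-side, attaching $e$ at $u$ can only turn a former leaf $u$ into an internal vertex or leave things unchanged, so $w(\hat{T}) \ge w(C_2) + (|C_1| - 1)$.

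Next I would split on whether $C_2$ is good via Condition~C2 or Condition~C3. If $C_2$ satisfies C2, then $w(C_2) \ge \frac{4}{5} b(C_2)$ and $b(C_2) \ge 5$, so $b(\hat{T}) \ge |C_1| + 5 \ge 10$, and $w(\hat{T}) \ge \frac{4}{5} b(C_2) + |C_1| - 1 \ge \frac{4}{5} b(C_2) + \frac{4}{5}|C_1| = \frac{4}{5} b(\hat{T})$ as soon as $|C_1| - 1 \ge \frac{4}{5} |C_1|$, i.e. $|C_1| \ge 5$, which holds; and $|L(\hat{T})| \le |L(C_2)| + 1 \le b(C_2) - 1 \le b(\hat{T}) - 2$. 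If $C_2$ satisfies C3, then $b(C_2) = 4$, $w(C_2) \ge 4$, $|L(C_2)| = 3$, so $b(\hat{T}) = |C_1| + 4 \ge 9$, $w(\hat{T}) \ge 4 + |C_1| - 1 = |C_1| + 3$, and I need $|C_1| + 3 \ge \frac{4}{5}(|C_1| + 4)$, i.e. $\frac{1}{5}|C_1| \ge -\frac{1}{5}$, trivially true; also $|L(\hat{T})| \le 3 + 1 = 4 \le b(\hat{T}) - 2$ since $b(\hat{T}) \ge 9$. In both cases $b(\hat{T}) \ge 5$, so $\hat{T}$ satisfies Condition~C2 and is good, and the operation clearly only merges components without creating a new cycle.

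The one subtlety I expect to be the main obstacle is making the leaf-count argument on the $C_2$-side airtight: I must argue that attaching the new edge at the vertex $u \in V(C_2)$ never increases the number of leaves of $\hat{T}$ restricted to $V(C_2)$ — which is clear because $u$'s degree only goes up — and that the only genuinely new leaf is the far endpoint of the path $P_1$ obtained from $C_1$. I should also double-check the definition of ``good operation'': since $\hat{T}$ is a brand-new good component formed by merging $C_1$ and $C_2$, it falls under the first clause (``just connects two or more connected components of ${\cal C}$ into a single good connected component''), so it suffices to show $\hat{T}$ is good, and I do not need to separately argue about internal-vertex monotonicity. Finally I would remark that no new cycle is created (a tree is formed) and the argument does not touch any other component of ${\cal C}$, so no new bad component appears.
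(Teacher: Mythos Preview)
Your proof is correct and follows essentially the same approach as the paper: the paper records the same three inequalities $w(\hat{T}) \ge w(C_2) + |C_1| - 1$, $b(\hat{T}) = b(C_2) + |C_1|$, and $|L(\hat{T})| \le |L(C_2)| + 1$, and then concludes that $\hat{T}$ is good. The only difference is cosmetic: where you split into the C2 and C3 cases for $C_2$, the paper handles both at once by simply writing $w(C_2) \ge \frac{4}{5}b(C_2)$ (which holds under C3 as well, since there $w(C_2) \ge b(C_2)$), so your case split is not needed but does no harm.
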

\begin{proof}
Obviously, $w(\hat{T}) \ge w(C_2) + |C_1| - 1 \ge \frac{4}{5}b(C_2) + |C_1| - 1$, 
$b(\hat{T}) = b(C_2) + |C_1| \ge b(C_2) + 5$, and $|L(\hat{T})| \le |L(C_2)| + 1$. 
Thus, $\hat{T}$ is good.  
\end{proof}

\begin{lemma}\label{lem:op17}
Operation 17 is good.
\end{lemma}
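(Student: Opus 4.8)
The plan is to prove that the tree $\hat{T}$ produced by Operation~17 satisfies Condition~C2, which makes Operation~17 an operation that connects two connected components of ${\cal C}$ into a single good one; since $\hat{T}$ is a tree, this also shows no new cycle is introduced. First I would pin down $b(C)$ and $b(P)$. Because the first stage (Step~\ref{step:conPath}) creates no cycle and every second-stage operation is checked to create none either, the cycle $C$ of length at least~6 is a cycle component of ${\cal C}_0$; hence $b(C) = |C| \ge 6$, and no edge of ${\cal C}_0$ joins $V(C)$ to a vertex outside $V(C)$. The 4-path component $P$ cannot satisfy Condition~C2 (that would force $b(P) \ge 5$) nor Condition~C3 (a 4-path has only two leaves), so $P$ is bad; by Lemma~\ref{lem:type2} together with the invariant that second-stage operations produce no new bad component, $P$ must be a 4-path component of ${\cal C}_0$, so $b(P) = 4$ and no edge of ${\cal C}_0$ leaves $V(P)$.

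Next I would check that $\hat{T}$ is a tree and compute its parameters. Let $v \in V(C)$ and $u \in V(P)$ be the endpoints of the new edge $e$. Deleting one edge of $C$ incident to $v$ makes $C$ into a path on $|C|$ vertices, and adding $e$ attaches $P$; thus $V(\hat{T}) = V(C) \cup V(P)$ has $|C| + 5$ vertices and $(|C| - 1) + 4 + 1 = |C| + 4$ edges, so $\hat{T}$ is a tree. Since $C$ and $P$ are distinct components of ${\cal C}_0$ and no ${\cal C}_0$-edge leaves either of them, $b(\hat{T}) = b(C) + b(P) = |C| + 4 \ge 10$. For the leaves: the neighbour of $v$ along the deleted edge becomes a leaf, $v$ regains degree~2, and $u$ has degree at least~2; the remaining leaves all lie in $P$, namely its endpoint(s) different from $u$, of which there are at most~2. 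Hence $|L(\hat{T})| \le 3$.

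Finally, from $w(\hat{T}) = |V(\hat{T})| - |L(\hat{T})| \ge |C| + 2$ and $b(\hat{T}) = |C| + 4$, and using $|C| \ge 6$, I would get $\frac{w(\hat{T})}{b(\hat{T})} \ge \frac{|C| + 2}{|C| + 4} \ge \frac{8}{10} = \frac{4}{5}$; together with $b(\hat{T}) \ge 5$ and $|L(\hat{T})| \le 3 \le b(\hat{T}) - 2$, this is exactly Condition~C2, so $\hat{T}$ is good and therefore Operation~17 is good. The step I expect to require the most care is the claim $b(P) = 4$, i.e., that a 4-path component appearing during the second stage really is an untouched 4-path component of ${\cal C}_0$ rather than, say, a fragment of a broken short cycle; this is precisely where the invariant that the second-stage operations create no new cycle and no new bad component (and hence, via Lemma~\ref{lem:type2}, leave only the catalogued bad components) does the work.
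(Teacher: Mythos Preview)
Your argument is correct and follows the same route as the paper: you verify that $\hat{T}$ satisfies Condition~C2 by computing $b(\hat{T}) = |C| + 4 \ge 10$, $|L(\hat{T})| \le 3$, and $w(\hat{T}) \ge |C| + 2$, exactly as the paper does (the paper simply asserts these values without your supporting justification). One small point: your parenthetical ``that would force $b(P) \ge 5$'' is not by itself a contradiction; the clean way to rule out C2 for the 4-path is that C2 would require $w(P) \ge \frac{4}{5}b(P) \ge \frac{4}{5}\cdot 5 = 4$ while $w(P) = 3$.
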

\begin{proof}
Obviously, $w(\hat{T}) = |C| + 2$, $b(\hat{T}) = |C| + 4 \ge 10$, and $|L(\hat{T})| = 3$. 
Thus, $\hat{T}$ is good.  
\end{proof}

\begin{lemma}\label{lem:op18}
Operation 18 is good.
\end{lemma}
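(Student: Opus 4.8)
The plan is to show that the component $\hat{T}$ created by Operation~18 is a \emph{good} tree, i.e.\ that it satisfies Condition~C2; since Operation~18 merely joins the three components $P$, $C_1$ and $C_2$ into the single component $\hat{T}$, this makes Operation~18 good. The first point to settle is that $\hat{T}$ really is a tree, and this reduces to checking that $C_1$ and $C_2$ are tree components rather than cycles. I would argue as follows. The 0-path component $P$ is untouched during the first stage (Step~\ref{step:conPath}): it has length~$0$, so it is not the first component of any pair in $\Gamma'$ (those have first component of length at least~$1$) nor the second (those have second component of length at least~$4$ by Statement~\ref{stat:p5} in Lemma~\ref{lem:prep}); and it is never modified in place during the second stage, because a 0-path is bad and bad components are only ever consumed, never improved. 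Hence $P$ is still the bad 0-path component of ${\cal C}_1$ that it was, so Lemma~\ref{lem:type2} applies and tells us that every neighbour of the vertex $u\in V(P)$ in $G$ is an internal vertex of a tree component of ${\cal C}_1$. Since each operation of the second stage either merges components or improves a single good component --- and in particular none of them ever creates a cycle or disconnects a component --- the vertices $v_1$ and $v_2$ stay internal vertices of tree components throughout the second stage. Therefore $C_1$ and $C_2$ are trees, and $\hat{T}$, obtained by joining the three trees $P$, $C_1$, $C_2$ with the two new $G$-edges $\{u,v_1\}$ and $\{u,v_2\}$, is a tree.

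The second step is the bookkeeping for $b(\hat{T})$, $w(\hat{T})$ and $|L(\hat{T})|$. Because components only ever merge (never split) from ${\cal C}_0$ onward and $u$ is isolated in ${\cal C}_0$, the vertex set $V(\hat{T})=\{u\}\cup V(C_1)\cup V(C_2)$ is the disjoint union of $\{u\}$ together with whole ${\cal C}_0$-components, and no edge of ${\cal C}_0$ runs between $V(C_1)$ and $V(C_2)$ or is incident to $u$; hence $b(\hat{T})=b(C_1)+b(C_2)$. Moreover, for each $i$, $b(C_i)\ge 4$: a good $C_i$ has $b(C_i)\ge 5$ under C2 or $b(C_i)=4$ under C3, while a bad $C_i$ --- being a tree component that contains an internal vertex --- is, by Lemma~\ref{lem:type2} together with the second-stage invariants, an untouched 4-path component of ${\cal C}_0$ and so has $b(C_i)=4$. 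In the same three cases one checks $|L(C_i)|\le b(C_i)-1$ (namely $\le b-2$ under C2, $=b-1$ under C3, and $=b-2$ for a 4-path) and $w(C_i)\ge\frac{4}{5}b(C_i)-\frac{1}{5}$ (trivially under C2 and C3, with equality $3=\frac{4}{5}\cdot 4-\frac{1}{5}$ for a 4-path). Finally, adding the two edges turns $u$ into an internal vertex and changes the status of no other vertex except that $v_1$ and $v_2$ can only gain degree, so $w(\hat{T})\ge w(C_1)+w(C_2)+1$ and $|L(\hat{T})|\le|L(C_1)|+|L(C_2)|$. Combining: $b(\hat{T})=b(C_1)+b(C_2)\ge 8\ge 5$; $|L(\hat{T})|\le(b(C_1)-1)+(b(C_2)-1)=b(\hat{T})-2$; and $w(\hat{T})\ge\frac{4}{5}b(\hat{T})-\frac{2}{5}+1>\frac{4}{5}b(\hat{T})$. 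Thus $\hat{T}$ satisfies Condition~C2 and Operation~18 is good.

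The step I expect to be the real obstacle is the first one: being sure that, at the moment Operation~18 is applied, $C_1$ and $C_2$ are tree components and not cycles --- if either were a cycle, $\hat{T}$ would contain a cycle and could not be good. Lemma~\ref{lem:type2} describes only ${\cal C}_1$, whereas Operation~18 acts on the current ${\cal C}$ part-way through the second stage, so its conclusion has to be transported forward using the global invariants of that stage (no new cycle is ever created, no component is ever split, and bad components are only consumed rather than modified in place). These same invariants are also exactly what make the identity $b(\hat{T})=b(C_1)+b(C_2)$ in the second step valid, and once they are in hand the remaining estimates are the routine arithmetic sketched above.
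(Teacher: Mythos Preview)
Your proof is correct and follows essentially the same approach as the paper's. Both arguments first establish that $C_1$ and $C_2$ are trees (the paper does this in one line by invoking the inapplicability of Operation~6 on ${\cal C}_0$, whereas you route through Lemma~\ref{lem:type2} and the second-stage invariants), and then verify that $\hat{T}$ is good. The only organizational difference is that the paper splits into three explicit cases (both $C_i$ good; one good, one bad; both bad) and checks Condition~C2 in each, while you extract uniform bounds $|L(C_i)|\le b(C_i)-1$ and $w(C_i)\ge\frac{4}{5}b(C_i)-\frac{1}{5}$ valid across all cases and finish in one stroke. Your unified estimate is slightly slicker; the paper's case split is slightly more transparent. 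One small caution: your sentence ``$v_1$ and $v_2$ stay internal vertices of tree components throughout the second stage'' is stronger than what you actually need or can easily justify (Operation~21 can turn an internal vertex into a leaf), but your argument does not rely on it in that strength---for tree-ness you only need that no new cycles are created, and for the bad case you correctly observe that a bad $C_i$ is untouched since ${\cal C}_1$.
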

\begin{proof}
Since Operation~6 cannot be applied on ${\cal C}_0$, neither $C_1$ nor $C_2$ is a cycle. 
Hence, both $C_1$ and $C_2$ are trees and in turn $\hat{T}$ is a tree. 
To show that $\hat{T}$ is good, we distinguish three cases as follows.

{\em Case 1:} Both $C_1$ and $C_2$ are good. In this case, $w(\hat{T}) \ge w(C_1) + w(C_2) + 1
\ge \frac{4}{5}b(C_1) + \frac{4}{5}b(C_2) + 1$, $b(\hat{T}) = b(C_1) + b(C_2) \ge 8$, and 
$|L(\hat{T})| \le |L(C_1)| + |L(C_2)|$. Thus, $\hat{T}$ is clearly good. 

{\em Case 2:} One of $C_1$ and $C_2$ is good. W.l.o.g., we assume that $C_1$ is good and 
$C_2$ is bad. Then, by Lemma~\ref{lem:type2}, $C_2$ is either a 0-path or a 4-path whose 
endpoints are leaves of $G$. The former case is impossible, because Operation~13 cannot 
be performed on ${\cal C}_0$. In the latter case, 
$w(\hat{T}) \ge w(C_1) + 4 \ge \frac{4}{5}b(C_1) + 4$, $b(\hat{T}) = b(C_1) + 4 \ge 8$, 
and $|L(\hat{T})| \le |L(C_1)| + 2$, implying that $\hat{T}$ is good. 

{\em Case 3:} Both $C_1$ and $C_2$ are bad. In this case, both $C_1$ and $C_2$ are 4-paths 
whose endpoints are leaves of $G$, because Operation~13 cannot be performed on ${\cal C}_0$. 
So, $w(\hat{T}) = 7$, $b(\hat{T}) = 8$, and $|L(\hat{T})| = 4$, implying that $\hat{T}$ is good. 
\end{proof}

\begin{lemma}\label{lem:op19}
Operation 19 is good.
\end{lemma}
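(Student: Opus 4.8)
The plan is to verify the two conditions in the definition of a \emph{good} operation: that the new component $\hat{T}$ is a tree, and that $\hat{T}$ satisfies Condition~C2 or~C3 (so that it is a good connected component). First I would observe that $\hat{T}$ is obtained from a good component $C_1$ and an arbitrary component $C_2$ by (if $C_2$ is a cycle) deleting one edge incident to the attachment vertex $v$ of $C_2$ and then adding the edge $\{u,v\}$, where $u$ is a leaf of $C_1$. Deleting one edge incident to $v$ turns the cycle $C_2$ into a path (a tree), and since $C_1$ is good it is already a tree; joining two trees by a single new edge yields a tree, so $\hat{T}$ is a tree. Note also that $b(\hat{T}) = b(C_1) + b(C_2)$ if $C_2$ is a tree, and $b(\hat{T}) = b(C_1) + |C_2| - 1$ if $C_2$ is a cycle of length $|C_2|$; in either case $b(\hat{T}) \ge b(C_1)$.

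The core of the argument is a case analysis on $C_2$. By Lemma~\ref{lem:type2}, a component of ${\cal C}_1$ is good, or it is a cycle of length $\ge 4$, a $0$-path, or a $4$-path whose endpoints are leaves of $G$; moreover, since Operations~19--23 run in the second stage, the components at hand are components of the current ${\cal C}$, which by construction (none of the second-stage operations produces a bad component, and the operations that do apply are forced) fall into the same list. So I would distinguish: (a) $C_2$ good; (b) $C_2$ a $0$-path; (c) $C_2$ a $4$-path whose endpoints are leaves of $G$; (d) $C_2$ a cycle of length $\ge 4$. In case (a) one has $w(\hat{T}) \ge w(C_1) + w(C_2) + 1 \ge \frac{4}{5}b(C_1) + \frac{4}{5}b(C_2) + 1$, $b(\hat{T}) = b(C_1) + b(C_2) \ge 8$ (since each good tree has $b \ge 4$), and $|L(\hat{T})| \le |L(C_1)| + |L(C_2)|$ (the leaf $u$ of $C_1$ becomes internal, and $v$ may or may not stay a leaf), so $\hat{T}$ satisfies~C2. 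In case (b), since Operation~13 cannot be applied, $C_1$ being good and $u$ a leaf of $C_1$ forces $v$ (the unique vertex of the $0$-path) to be adjacent to a leaf of $C_1$ — wait, here $u\in C_1$ is the leaf and $v\in C_2$; adding $\{u,v\}$ makes $u$ internal and $v$ a leaf, so $w(\hat{T}) \ge w(C_1) + 1$, $b(\hat{T}) = b(C_1)$, $|L(\hat{T})| \le |L(C_1)|$, hence $\hat{T}$ remains good with the same bound. In case (c), $w(\hat{T}) \ge w(C_1) + 4$ (the $4$-path contributes three internal vertices plus the endpoint $v$ adjacent to $u$ becomes internal), $b(\hat{T}) = b(C_1) + 4 \ge 8$, and $|L(\hat{T})| \le |L(C_1)| + 1$, so C2 holds. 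In case (d), deleting one edge of $C_2$ at $v$ leaves a path on $|C_2|$ vertices with $|C_2|-1$ edges; adding $\{u,v\}$ makes $u$ internal and the other endpoint of that path a leaf, so $w(\hat{T}) \ge w(C_1) + (|C_2| - 1) \ge \frac{4}{5}b(C_1) + \frac{4}{5}|C_2|$ using $|C_2| \ge 4$, $b(\hat{T}) = b(C_1) + |C_2| - 1 \ge b(C_1) + 3 \ge 7$, and $|L(\hat{T})| \le |L(C_1)| + 1$, so C2 holds.

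In each case I would conclude with the elementary verification that the three displayed inequalities imply $b(\hat{T}) \ge 5$, $|L(\hat{T})| \le b(\hat{T}) - 2$, and $w(\hat{T}) \ge \frac{4}{5} b(\hat{T})$, using the inductive facts for $C_1$ (namely $|L(C_1)| \le b(C_1) - 2$ and $w(C_1) \ge \frac{4}{5}b(C_1)$ when $C_1$ satisfies~C2, or $w(C_1) = b(C_1) = 4$, $|L(C_1)| = 3$ when it satisfies~C3) together with the nonnegativity of the extra terms. The main obstacle I anticipate is bookkeeping the leaf count precisely: when $v$ was already a leaf of $C_2$ (or when $C_2$ is a cycle so $v$ had degree $2$), attaching $\{u,v\}$ may or may not keep $v$ a leaf in $\hat{T}$, and one must be careful that the inequality $|L(\hat{T})| \le |L(C_1)| + |L(C_2)|$ (respectively $|L(C_1)| + 1$ or $|L(C_1)|$) is the worst case, so that combined with $b(\hat{T})$ growing by the full $b(C_2)$ (or $|C_2| - 1$) the bound $|L(\hat{T})| \le b(\hat{T}) - 2$ is preserved; the case $C_1$ satisfies~C3 rather than~C2 also needs a separate quick check since then $\frac{w(C_1)}{b(C_1)} = 1 > \frac{4}{5}$ so the ratio bound is only easier.
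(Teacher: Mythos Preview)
Your overall strategy matches the paper's: verify that $\hat T$ is a tree and then case-split on the type of $C_2$. The tree cases (good, $0$-path, $4$-path) are handled essentially as in the paper. However, your cycle case~(d) contains a genuine error that stems from a misreading of the definition of $b(\cdot)$.

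By definition, $b(C)$ counts edges of ${\cal C}_0$ whose \emph{both endpoints lie in} $V(C)$; it depends only on the vertex set of $C$, not on which edges currently belong to $C$. When $C_2$ is a cycle, it is also a cycle of ${\cal C}_0$, so all $|C_2|$ of its edges are ${\cal C}_0$-edges. Deleting one of them does not change $V(\hat T)=V(C_1)\cup V(C_2)$, and hence $b(\hat T)=b(C_1)+|C_2|$, not $b(C_1)+|C_2|-1$ as you write. Your lower bound $w(\hat T)\ge w(C_1)+|C_2|-1$ also undercounts by one: after the operation, $u$ becomes internal \emph{and} so does $v$ (it keeps one cycle edge and gains $\{u,v\}$), so exactly one vertex of $C_2$ (the other endpoint of the deleted edge) is a leaf of $\hat T$, giving $w(\hat T)=w(C_1)+|C_2|$ and $|L(\hat T)|=|L(C_1)|$. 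Your two off-by-one errors happen to cancel in your arithmetic, but once $b(\hat T)$ is corrected to $b(C_1)+|C_2|$, the weak bound $w(\hat T)\ge w(C_1)+|C_2|-1$ no longer suffices: for $|C_2|=4$ and $w(C_1)=\tfrac{4}{5}b(C_1)$ (which Condition~C2 permits), one gets $w(\hat T)-\tfrac{4}{5}b(\hat T)\ge \tfrac{1}{5}|C_2|-1=-\tfrac{1}{5}<0$. Relatedly, your displayed inequality $w(C_1)+|C_2|-1\ge \tfrac{4}{5}b(C_1)+\tfrac{4}{5}|C_2|$ ``using $|C_2|\ge4$'' is already false in that same instance. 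The fix is exactly what the paper does: use the sharp values $w(\hat T)=w(C_1)+|C_2|$ and $b(\hat T)=b(C_1)+|C_2|$, after which Condition~C2 follows immediately.

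A minor slip in case~(c): you call $v$ ``the endpoint'' of the $4$-path, but the endpoints of a bad $4$-path are leaves of $G$ and cannot be adjacent to $u\notin V(C_2)$, so $v$ is necessarily one of the three internal vertices; your count $w(\hat T)\ge w(C_1)+4$ is nonetheless correct.
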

\begin{proof}
$\hat{T}$ is clearly a tree. To show that $\hat{T}$ is good, we distinguish three cases as follows.

{\em Case 1:} $C_2$ is a cycle. In this case, $w(\hat{T}) = w(C_1) + |C_2| \ge \frac{4}{5}b(C_1) 
+ |C_2|$, $b(\hat{T}) = b(C_1) + |C_2| \ge 8$, and $|L(\hat{T})| = |L(C_1)|$. So, $\hat{T}$ is clearly good. 

{\em Case 2:} $C_2$ is good. In this case, $w(\hat{T}) \ge w(C_1) + w(C_2) + 1 \ge 
\frac{4}{5}b(C_1) + \frac{4}{5}b(C_2) + 1$, $b(\hat{T}) = b(C_1) + b(C_2) \ge 8$, and 
$|L(\hat{T})| = |L(C_1)| + |L(C_2)| - 1$. So, $\hat{T}$ is clearly good. 

{\em Case 3:} $C_2$ is bad but not a cycle. In this case, Lemma~\ref{lem:type2} ensures that 
$C_2$ is either a 0-path or a 4-path whose endpoints are leaves of $G$. In the latter case, 
$w(\hat{T}) \ge w(C_1) + 4 \ge \frac{4}{5}b(C_1) + 4$, $b(\hat{T}) = b(C_1) + 4 \ge 8$, and $|L(\hat{T})| = 
|L(C_1)| + 1$, implying that $\hat{T}$ is clearly good. So, we assume the former case. If $C_1$ 
satisfies Condition~C2, then $w(\hat{T}) = w(C_1) + 1 \ge \frac{4}{5}b(C_1) + 1$, $b(\hat{T}) = b(C_1)
\ge 5$, and $|L(\hat{T})| = |L(C_1)|$, implying that $\hat{T}$ is good. Otherwise, $b(\hat{T}) = b(C_1) = 4$, 
$w(\hat{T}) = w(C_1) + 1 \ge b(C_1) + 1 > b(\hat{T})$, and $|L(\hat{T})| = |L(C_1)|$, implying that $\hat{T}$ is good.
\end{proof}

\begin{lemma}\label{lem:op20}
Operation 20 is good.
\end{lemma}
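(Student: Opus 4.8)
The plan is to follow the template of the proofs of Lemmas~\ref{lem:op15}--\ref{lem:op19}: first check that the object $\hat{T}$ produced by Operation~20 is a tree, then bound $b(\hat{T})$, $w(\hat{T})$, and $|L(\hat{T})|$, and finally verify Condition~C2 (Condition~C3 will be vacuous here, since $b(\hat{T})$ will turn out to be far larger than~4). Operation~20 deletes $e$ from the cycle $C$ (turning $C$ into a path with endpoints $v_1$ and $v_2$), deletes one edge incident to $u_i$ from $C_i$ whenever $C_i$ is a cycle (turning that $C_i$ into a path with $u_i$ as an endpoint), and then joins these vertex-disjoint pieces with the two new edges $\{v_1,u_1\}$ and $\{v_2,u_2\}$; so $\hat{T}$ arises from three vertex-disjoint paths or trees by adding exactly two bridges and is therefore a tree. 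Moreover every vertex of $C$ is internal in $\hat{T}$: the $|C|-2$ interior vertices of $C-\{e\}$ keep degree~2, while $v_1$ and $v_2$ each regain a second incident edge; and $|C|\ge 4$ because ${\cal C}$ is triangle-free.

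The key step is to determine what $C_1$ and $C_2$ can be, using Lemma~\ref{lem:type2}. A bad component that is a cycle has length $\ge 4$; a bad component that is not a cycle is a $0$-path or a $4$-path whose endpoints are leaves of $G$; and if it is a $0$-path, then every $G$-neighbor of its unique vertex is an internal vertex of a \emph{tree} component of ${\cal C}_1$. But $v_i\in N_G(u_i)$ lies on the cycle $C$, which is also a cycle of ${\cal C}_1$ (no stage-1 or stage-2 operation ever creates a cycle, so every current cycle is a cycle of ${\cal C}_1$); hence neither $C_1$ nor $C_2$ is a bad $0$-path. Thus each $C_i$ is a cycle of length $\ge 4$, a good tree, or a $4$-path whose two endpoints are leaves of $G$; in the last case $u_i$ is an internal vertex of $C_i$, since those endpoints have all their $G$-neighbors inside $C_i$ whereas $u_i$ has the neighbor $v_i\notin V(C_i)$. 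I would also record here that $b(\hat{T})=|C|+b(C_1)+b(C_2)$, because a cycle of the current ${\cal C}$ is a cycle of ${\cal C}_0$ and so carries no ${\cal C}_0$-edge out of it, and because no ${\cal C}_0$-edge ever straddles two distinct current components.

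With this structure the counting is routine. For $i\in\{1,2\}$ let $\beta_i=b(C_i)$; in each of the three cases $\beta_i\ge 4$, attaching $C_i$ at $u_i$ contributes at least $\frac{4}{5}\beta_i-\frac{1}{5}$ internal vertices to $\hat{T}$ from $V(C_i)$ and leaves at most $\beta_i-1$ vertices of $V(C_i)$ as leaves of $\hat{T}$ (a cycle gives $\beta_i-1$ internal and $1$ leaf; a good tree gives at least $w(C_i)\ge\frac{4}{5}\beta_i$ internal and at most $|L(C_i)|\le\beta_i-1$ leaves, since $u_i$ either stays internal or becomes internal; a $4$-path with leaf endpoints gives $3$ internal and $2$ leaves, with $\beta_i=4$). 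Hence $b(\hat{T})=|C|+\beta_1+\beta_2\ge 12$, so Condition~C3 cannot apply; $|L(\hat{T})|\le(\beta_1-1)+(\beta_2-1)\le b(\hat{T})-2$; and $w(\hat{T})\ge|C|+(\frac{4}{5}\beta_1-\frac{1}{5})+(\frac{4}{5}\beta_2-\frac{1}{5})\ge\frac{4}{5}b(\hat{T})$, where the last inequality is just $\frac{1}{5}|C|\ge\frac{2}{5}$, true since $|C|\ge 4$. Therefore $\hat{T}$ satisfies Condition~C2 and is good.

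The only delicate point is the structural step: invoking Lemma~\ref{lem:type2} to rule out a bad $0$-path and to force $u_i$ to be internal on a bad $4$-path, plus the observation that $b$ is additive over the three pieces $C$, $C_1$, $C_2$. Once those are settled the arithmetic is no harder than in Lemmas~\ref{lem:op15}--\ref{lem:op19}; an alternative is to imitate the three-way case split on $C_2$ used for Lemma~\ref{lem:op19}, now carried out for both $C_1$ and $C_2$, but the uniform bounds above keep the case analysis minimal.
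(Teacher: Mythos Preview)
Your proof is correct and reaches the same conclusion as the paper's, but the packaging is different. The paper argues by a three-way case split on how many of $C_1,C_2$ are cycles, and inside each case splits again on good versus bad; you instead extract uniform per-side bounds (at least $\tfrac{4}{5}\beta_i-\tfrac{1}{5}$ internal vertices, at most $\beta_i-1$ leaves contributed by $V(C_i)$, with $\beta_i\ge 4$ in every admissible type) and then check Condition~C2 in one line. This is cleaner and avoids the nested case analysis, at the cost of slightly looser intermediate counts than the paper's case-specific equalities. Your exclusion of the $0$-path case is also handled differently: you invoke Lemma~\ref{lem:type2} (a $0$-path's vertex has all $G$-neighbours in tree components of ${\cal C}_1$, whereas $v_i$ sits on the cycle $C$, which survives as a cycle of ${\cal C}_1$); the paper instead cites ``Operation~13 cannot be performed on ${\cal C}_0$,'' which is really the Operation~6 reasoning already baked into Lemma~\ref{lem:type2}. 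Your justification that $b$ is additive over $C,C_1,C_2$ (because vertex sets of current components are unions of ${\cal C}_0$-components, so no ${\cal C}_0$-edge straddles them, and $b(C)=|C|$ since every surviving cycle is a cycle of ${\cal C}_0$) is a point the paper uses tacitly; making it explicit is a plus.
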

\begin{proof}
$\hat{T}$ is clearly a tree. To show that $\hat{T}$ is good, we distinguish three cases as follows.

{\em Case 1:} Both $C_1$ and $C_2$ are cycles. In this case, $w(\hat{T}) = |C| + |C_1| + |C_2| 
- 2$, $b(\hat{T}) = |C| + |C_1| + |C_2| \ge 12$, and $|L(\hat{T})| = 2$. Thus, $\hat{T}$ is clearly good. 

{\em Case 2:} One of $C_1$ and $C_2$ is a cycle. W.l.o.g., we assume that $C_2$ is 
a cycle. If $C_1$ is good, then $w(\hat{T}) \ge w(C_1) + |C| + |C_2| - 1 \ge \frac{4}{5}b(C_1)
+ |C| + |C_2| - 1$, $b(\hat{T}) = b(C_1) + |C| + |C_2| \ge b(C_1) + 8 \ge 12$, and 
$|L(\hat{T})| \le |L(C_1)|+1$, implying that $\hat{T}$ is good. So, assume that $C_1$ is bad. Then, 
by Lemma~\ref{lem:type2}, $C_1$ is a 0-path or 4-path whose endpoints are leaves of $G$. 
Indeed, $C_1$ is not a 0-path, because Operation~13 cannot be performed on ${\cal C}_0$. 
Thus, $w(\hat{T}) = |C| + |C_2| + 2$, $b(\hat{T}) = |C| + |C_2| + 4 \ge 12$, and $|L(\hat{T})| = 3$. 
Hence, $\hat{T}$ is good. 

{\em Case 3:} Neither $C_1$ nor $C_2$ is a cycle. If both $C_1$ and $C_2$ are good, 
then $w(\hat{T}) \ge |C| + w(C_1) + w(C_2) \ge |C| + \frac{4}{5}b(C_1) + \frac{4}{5}b(C_2)$, 
$b(\hat{T}) = |C| + b(C_1) + b(C_2) \ge b(C_1) + b(C_2) + 4 \ge 12$, and $|L(\hat{T})| \le |L(C_1)| 
+ |L(C_2)|$, implying that $\hat{T}$ is good. Similarly, if both $C_1$ and $C_2$ are bad, 
then both of them are 4-paths whose endpoints are leaves of $G$ and in turn $w(\hat{T}) = 
|C| + 6$, $b(\hat{T}) = |C| + 8 \ge 12$, and $|L(\hat{T})| = 4$, implying that $\hat{T}$ is good. So, 
we may assume that $C_1$ is good but $C_2$ is bad. Then, $C_2$ is a 4-path whose 
endpoints are leaves of $G$. Hence, $w(\hat{T}) \ge |C| + w(C_1) + 3 \ge 
|C| + \frac{4}{5}b(C_1) + 3$, $b(\hat{T}) = |C| + b(C_1) + 4 \ge b(C_1) + 8\ge 12$, and 
$|L(\hat{T})| \le |L(C_1)| + 2$. Therefore, $\hat{T}$ is good. 
\end{proof}

\begin{lemma}\label{lem:op21}
Operation 21 is good.
\end{lemma}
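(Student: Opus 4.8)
The plan is to verify that Operation~21 falls under the definition of a ``good'' operation, namely that it modifies a good connected component so that it remains good (in fact, so that it has strictly more internal vertices, and then gets connected to another component via the invocation of Operation~19). Recall the setup: $C$ is a good connected component of ${\cal C}$ that is a \emph{dead} path in $G$ but is not a Hamiltonian path of $G$, and whose two endpoints $a$ and $b$ are adjacent in $G$. First I would observe that since $C$ is a dead path, neither $a$ nor $b$ is a port of $C$; since $C$ is not a Hamiltonian path of $G$, some \emph{internal} vertex of $C$ must be a port (otherwise $G$ would be disconnected), so the chosen port $u$ of $C$ exists and is an internal vertex of $C$, i.e.\ $d_C(u) = 2$. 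Adding the edge $\{a,b\}$ of $G$ to $C$ creates a unique cycle (the whole of $C$ plus that edge), and deleting one edge incident to $u$ on that cycle leaves a tree $C'$ on the same vertex set; note $E({\cal C}_0)$ is unchanged under this surgery of $C$, so $b(C') = b(C)$.

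The key bookkeeping step is to track $w$ and $|L|$ across the surgery. Before the surgery, $a$ and $b$ are leaves of $C$ and $u$ is internal; after deleting one edge incident to $u$ and adding $\{a,b\}$, both $a$ and $b$ become internal (each gains the neighbor on the cycle), while $u$ drops to degree~1 and becomes a leaf. Hence $w(C') = w(C) + 2 - 1 = w(C) + 1$ and $|L(C')| = |L(C)| + 1 - 2 = |L(C)| - 1$. Now I would check that $C'$ is still good by cases on which condition $C$ satisfied. If $C$ satisfies Condition~C2, then $b(C') = b(C) \ge 5$, $w(C') = w(C) + 1 \ge \frac{4}{5}b(C) + 1 > \frac{4}{5}b(C')$, and $|L(C')| = |L(C)| - 1 \le b(C) - 3 \le b(C') - 2$, so $C'$ satisfies~C2. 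If instead $C$ satisfies Condition~C3, then $b(C) = 4$ and $|L(C)| = 3$, so $b(C') = 4$, $w(C') = w(C) + 1 \ge 4 + 1 = 5 > 4 = b(C')$, and $|L(C')| = 2 \le b(C') - 2$, so $C'$ satisfies Condition~C2 this time. Either way $C'$ is good, and it has (weakly) more internal vertices than $C$.

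Finally, since $u$ is a port of the original path $C$ and $u$ is now a leaf of the good component $C'$, the hypothesis of Operation~19 is met with $C_1 := C'$ and $C_2$ the component of ${\cal C}$ containing a vertex of $N_G(u) \setminus V(C)$; so the subsequent invocation of Operation~19 is legitimate, and by Lemma~\ref{lem:op19} it is good, producing a single good connected component $\hat{T}$ out of $C'$ and that neighbour. Composing the two moves, Operation~21 either ends by merging components into a single good component or, if one traces it as a single transformation of $C$, replaces $C$ by a good component with more internal vertices; in both readings it is a good operation. The only point needing a little care—and the place I would be most careful in the write-up—is the justification that the port $u$ of $C$ is necessarily an \emph{internal} vertex of $C$, which is exactly what ``$C$ is dead but not a Hamiltonian path of $G$'' buys us; once that is pinned down, the rest is the routine arithmetic sketched above.
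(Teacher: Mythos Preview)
Your approach is the same as the paper's (which simply writes ``By Lemma~\ref{lem:op19}, Operation~21 is clearly good''), and your added detail---checking that the intermediate path $C'$ is still good so that Operation~19 legitimately applies---is exactly the content the paper is suppressing. However, your bookkeeping contains an error: when you delete an edge $\{u,v\}$ of the cycle incident to the internal vertex $u$, the \emph{other} endpoint $v$ also drops from degree~2 to degree~1. Thus $C'$ is again a path, with $|L(C')|=2=|L(C)|$ and $w(C')=w(C)$, not $|L(C)|-1$ and $w(C)+1$ as you wrote. Fortunately this does not damage the argument: since $C$ satisfied Condition~C2 (note $C$ is a path, so $|L(C)|=2$ and the Condition~C3 case is vacuous), $C'$ has the same $b$, $w$, $|L|$ and hence still satisfies~C2; so $C'$ is good, $u$ is a leaf of $C'$ and a port, and Lemma~\ref{lem:op19} finishes as you intended. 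The net effect of Operation~21 is then the first clause of ``good'' (merging components into a single good one), not the second clause you also invoke at the end.
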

\begin{proof}
By Lemma~\ref{lem:op19}, Operation~21 is clearly good. 
\end{proof}

\begin{lemma}\label{lem:op22}
Operation 22 is good.
\end{lemma}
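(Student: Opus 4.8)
The plan is to argue directly that the single tree $\hat T$ produced by Operation~22 is still good, by comparing its parameters $w$, $b$, and $|L|$ with those of the good component $C$ that was modified. The operation takes a good component $C$ that is a tree but not a path, picks two leaves $u,v$ of $C$ with $\{u,v\}\in E(G)$, finds a branching vertex $x$ (one with $d_C(x)\ge 3$) on the unique $u$--$v$ path $P$ in $C$, deletes one edge of $P$ incident to $x$, and adds $\{u,v\}$. First I would check that $\hat T$ is indeed a tree: deleting an edge of a tree splits it into two components, and since $x$ lies on the $u$--$v$ path while the deleted edge is on that path, $u$ and $v$ end up in the two different components (here I use $d_C(x)\ge 3$ so that the deleted $P$-edge is genuinely an edge of $C$ and $x$ keeps degree $\ge 2$); adding $\{u,v\}$ reconnects them into a single tree without creating a cycle.

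Next I would track the three quantities. Since $V(\hat T)=V(C)$ and $E(\hat T)$ differs from $E(C)$ only in that one edge inside $C$ is swapped for another edge inside $C$, we have $b(\hat T)=b(C)$ (every edge of ${\cal C}_0$ with both endpoints in $V(C)$ still has both endpoints in $V(\hat T)$). For the weight: $u$ was a leaf of $C$ and becomes non-leaf in $\hat T$ (it gains the edge $\{u,v\}$), and symmetrically $v$ becomes a non-leaf; the vertex $x$ had $d_C(x)\ge 3$, so after losing one edge it still has degree $\ge 2$ and stays a non-leaf. No other vertex changes degree. Hence $w(\hat T)\ge w(C)+1$, and $|L(\hat T)|\le |L(C)|-1$ (we turned at least the two leaves $u,v$ into non-leaves and created at most nothing new, though I should note $x$ might in principle become a leaf only if $d_C(x)=2$, which the choice of $x$ forbids — so in fact $|L(\hat T)|=|L(C)|-2$, but $\le |L(C)|-1$ suffices).

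Finally I would verify Condition~C2 or~C3 for $\hat T$. If $C$ satisfies C2, then $b(\hat T)=b(C)\ge 5$, $|L(\hat T)|\le |L(C)|-1\le b(C)-3< b(\hat T)-2$, and $w(\hat T)\ge w(C)+1\ge \frac{4}{5}b(C)+1>\frac{4}{5}b(\hat T)$, so $\hat T$ satisfies C2. If instead $C$ satisfies C3, then $b(\hat T)=b(C)=4$ and $w(\hat T)\ge w(C)+1\ge b(C)+1=5>4=b(\hat T)$, while $|L(\hat T)|\le |L(C)|-1=2$; in this case $\hat T$ satisfies C2's weight inequality $w(\hat T)\ge\frac{4}{5}b(\hat T)$ trivially but not necessarily $b(\hat T)\ge 5$, so I would instead observe that a tree on $b(\hat T)=4$ internal-counted edges with $w(\hat T)\ge 5$ forces it to be good by C3-type reasoning, or simply note $w(\hat T)\ge b(\hat T)$ and $|L(\hat T)|\le b(\hat T)-2$, which is the relevant weakening. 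The main (mild) obstacle is bookkeeping the degenerate small cases where $b(C)=4$ to confirm $\hat T$ still lands in one of the two ``good'' categories rather than slipping out; handling C3 carefully is where I would spend the care, since there $\hat T$ acquires $w(\hat T)>b(\hat T)$ and I must confirm that such a component counts as good under the paper's conventions (it does, because more internal vertices than $b$ only helps the final ratio, and the needed inequalities $|L(\hat T)|\le b(\hat T)-2$ and $w(\hat T)\ge\frac45 b(\hat T)$ both hold).
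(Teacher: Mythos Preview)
Your approach is the same as the paper's: show that the modification strictly increases the number of internal vertices (equivalently, decreases $|L|$), which by the definition of a ``good operation'' suffices. The paper's proof is literally one sentence to this effect.

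Your detailed degree-tracking has a small gap: you account for $u$, $v$, and $x$, but you forget the \emph{other} endpoint $y$ of the deleted edge $\{x,y\}$ on $P$. Two things can happen. If $y\notin\{u,v\}$ and $d_C(y)=2$, then $y$ becomes a new leaf, so $|L(\hat T)|=|L(C)|-1$, not $|L(C)|-2$. And if $y\in\{u,v\}$ (i.e., $x$ is the unique neighbour of a leaf and the chosen $P$-edge is that leaf's edge), then that leaf loses one edge and gains one, so it \emph{remains} a leaf; again $|L(\hat T)|=|L(C)|-1$. Fortunately you explicitly state that $|L(\hat T)|\le |L(C)|-1$ is what you actually need, and that inequality does hold in every case, so your conclusion is correct even though the supporting claim ``$u$ and $v$ both become non-leaves and nothing new is created'' is not.

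Regarding the C3 case: you are right that after the operation a former C3 component has $b(\hat T)=4$, $w(\hat T)\ge 5$, $|L(\hat T)|\le 2$, which matches neither C2 nor C3 literally. The paper does not address this; it simply invokes the parenthetical ``(and hence remains good)'' in the definition of a good operation. Your instinct that the relevant inequalities $w(\hat T)\ge\frac{4}{5}b(\hat T)$ and $|L(\hat T)|\le b(\hat T)-2$ still hold (so the component is good enough for the later analysis in Section~\ref{sec:analysis}) is exactly right.
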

\begin{proof}
The operation clearly decreases the number of leaves in $C$ by~1, and is hence good. 
\end{proof}

\begin{lemma}\label{lem:op23}
Operation 23 is good.
\end{lemma}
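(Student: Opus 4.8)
The plan is to show that the single component $\hat{T}$ produced by Operation~23 is a tree satisfying Condition~C2 or~C3, so that the operation merges $C_1$, $P$ and $C_2$ into one good component. Write $P=u_1u_2u_3u_4u_5$. After deleting $\{u_2,u_3\}$ and, when $C_2$ is a cycle, one edge of $C_2$ incident to $x$, the modified ${\cal C}$ restricted to $V(P)\cup V(C_2)\cup\{v\}$ falls into four pieces: the edge $u_1u_2$, the path $u_3u_4u_5$, the isolated vertex $v$, and $C_2$ (now a tree). Adding the three edges $\{v,u_2\}$, $\{v,u_4\}$, $\{u_3,x\}$ joins these into one connected graph with exactly $|V(\hat{T})|-1$ edges, so $\hat{T}$ is a tree and ${\cal C}$ remains a TFTCC. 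In $\hat{T}$ the vertices $v,u_2,u_3$ have degree~$2$ and $u_4$ has degree~$3$ (all internal), $u_1$ and $u_5$ are leaves, and inside $V(C_2)$ only $x$ changes degree (gaining the neighbor $u_3$). Since $P$ is a path component of ${\cal C}_0$ and $v$ is isolated in ${\cal C}_0$, we get $b(\hat{T})=b(P)+b(C_2)=4+b(C_2)$.

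I would then argue by cases on $C_2$, using that every component of ${\cal C}$ is good or bad and that, by Lemma~\ref{lem:type2}, a bad non-cycle component is a $0$-path or a dead $4$-path whose endpoints are leaves of $G$. If $C_2$ is a cycle, then $b(C_2)=|C_2|\ge 4$, cutting $C_2$ at $x$ keeps $x$ internal but turns the far endpoint of the removed edge into a leaf of $\hat{T}$, so $|L(\hat{T})|=3$, $w(\hat{T})=|C_2|+3=b(\hat{T})-1\ge\frac{4}{5}b(\hat{T})$, and $b(\hat{T})\ge 8$, establishing~C2. If $C_2$ is good, then $x$ becomes internal in $\hat{T}$, whence $w(\hat{T})\ge w(C_2)+4$, $b(\hat{T})=b(C_2)+4$ and $|L(\hat{T})|\le |L(C_2)|+2$; substituting C2 or C3 for $C_2$ (and using $4\ge\frac{16}{5}$) gives~C2 for $\hat{T}$. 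If $C_2$ is a $0$-path $\{x\}$, then $x$ is a leaf of $\hat{T}$, so $b(\hat{T})=4=w(\hat{T})$ and $|L(\hat{T})|=3$, which is exactly~C3. Finally, if $C_2$ is a dead $4$-path with $G$-leaf endpoints, those endpoints have $G$-degree~$1$ and hence cannot be the neighbor $x$ of $u_3$, so $x$ is an internal vertex of $C_2$; then $b(\hat{T})=8$, $w(\hat{T})=7$ and $|L(\hat{T})|=4$, establishing~C2.

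The main obstacle will be the cycle case: one must not forget that breaking $C_2$ at $x$ creates a fresh leaf (the far end of the deleted edge), so $\hat{T}$ has three leaves rather than two; once that bookkeeping is correct the inequalities are immediate. A secondary point is invoking Lemma~\ref{lem:type2} to limit the bad non-cycle possibilities for $C_2$ to a $0$-path and a dead $4$-path, and observing that in the latter the prescribed adjacency $u_3\sim x$ forces $x$ to be internal in $C_2$, which is what keeps $|L(\hat{T})|$ small enough for~C2.
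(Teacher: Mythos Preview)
Your proof is correct and follows essentially the same approach as the paper: a case split on whether $C_2$ is a cycle, a good component, a $0$-path, or a $4$-path, with the same computations of $b(\hat{T})$, $w(\hat{T})$ and $|L(\hat{T})|$ in each case. You are in fact more careful than the paper in two places---verifying explicitly that $\hat{T}$ is a tree, and observing that in the $4$-path case the endpoints of $C_2$ are leaves of $G$ so $x$ must be internal---but these only make the argument cleaner, not different.
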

\begin{proof}
$\hat{T}$ is clearly a tree. To show that $\hat{T}$ is good, we distinguish three cases as follows.

{\em Case 1:} $C_2$ is a cycle. In this case, $w(\hat{T}) = |C_2| + 3$, 
$b(\hat{T}) = |C_2| + 4 \ge 8$, and $|L(\hat{T})| = 3$. So, $\hat{T}$ is clearly good. 

{\em Case 2:} $C_2$ is good. In this case, $w(\hat{T}) \ge w(C_2) + 4 \ge 
\frac{4}{5}b(C_2) + 4$, $b(\hat{T}) = b(C_2) + 4 \ge 8$, and 
$|L(\hat{T})| \le |L(C_2)| + 2$. So, $\hat{T}$ is clearly good. 

{\em Case 3:} $C_2$ is bad. In this case, $C_2$ is either a 0-path or a 4-path whose endpoints 
are leaves of $G$. In the former case, $w(\hat{T}) = 4$, $b(\hat{T}) = 4$, and $|L(\hat{T})| = 3$, 
implying that $\hat{T}$ is clearly good. In the latter case, $w(\hat{T}) = 7$, $b(\hat{T}) = 8 $, 
and $|L(\hat{T})| = 4$, implying that $\hat{T}$ is clearly good.
\end{proof}

We next show that the above operations lead to a number of useful properties of ${\cal C}_2$. 

\begin{lemma}\label{lem:4-cycle1}
Each 4-cycle of ${\cal C}_2$ is adjacent to at most one other connected component 
of ${\cal C}_2$ in $G$. 
\end{lemma}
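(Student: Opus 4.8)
The plan is to argue by contradiction: suppose a 4-cycle $C$ of ${\cal C}_2$ is adjacent to two distinct connected components $C_1$ and $C_2$ of ${\cal C}_2$ in $G$. First I would invoke Statement~\ref{stat:p6} in Lemma~\ref{lem:prep} together with the fact that the second stage never creates a new cycle, so $C$ is still a 4-cycle in ${\cal C}_1$ with at least three ports; pin down the four vertices $v_1,\dots,v_4$ of $C$ (clockwise) and note that at least three of them have a neighbour outside $C$ in $G$. Since $C$ is adjacent to both $C_1$ and $C_2$, there exist edges of $G$ from $V(C)$ into $V(C_1)$ and into $V(C_2)$.

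Next I would split on whether the two ``outgoing'' edges of $C$ (one to $C_1$, one to $C_2$) emanate from the same vertex of $C$ or from different vertices. If they leave from two adjacent vertices of $C$, say $v_i$ and $v_{i+1}$ with $e=\{v_i,v_{i+1}\}\in E(C)$, then Operation~20 is applicable (it deletes $e$, possibly breaks a cycle at each of $C_1,C_2$, and reconnects), contradicting that the second stage has terminated. If they leave from two non-adjacent (i.e.\ diagonally opposite) vertices $v_i$ and $v_{i+2}$, I would use the third port: by Statement~\ref{stat:p6} one of the two remaining vertices, say $v_{i+1}$, is also a port, and $v_{i+1}$ is adjacent to one of $v_i$ and $v_{i+2}$ in $C$, so by re-routing we again land in the hypothesis of Operation~20 (there is an edge $e'$ of $C$ whose two endpoints each have a neighbour outside $C$ lying in two distinct components among $C_1,C_2$, after possibly exchanging the roles of $v_i$ and $v_{i+1}$). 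Finally, if both outgoing edges leave from the \emph{same} vertex $v_i$ of $C$, I would use the third port $v_j$ with $j\neq i$: $v_j$ has a neighbour $w$ outside $C$, $w$ lies in some component $C'$ of ${\cal C}_2$, and since at least one of $C_1,C_2$ is distinct from $C'$, there is an edge of $C$ incident to $v_i$ and to a vertex adjacent (in $C$) to $v_j$ that triggers Operation~20 again; the small number of vertices in a 4-cycle makes this a finite case check.

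The main obstacle I anticipate is the bookkeeping in the ``same vertex'' and ``opposite vertices'' subcases: one must be careful that the two components $C_1,C_2$ witnessing the two outside-neighbours that Operation~20 needs are genuinely \emph{different} from each other and from $C$, which is exactly where the assumption ``adjacent to at least two other components'' is used, and where having \emph{three} ports (rather than two) on a 4-cycle is essential so that one can always choose an edge $e=\{v_1,v_2\}$ of $C$ with $v_1,v_2$ both ports. A secondary point to handle cleanly is that Operation~20 also requires $u_1$ and $u_2$ (the outside neighbours) to lie in components \emph{other than} $C$; this is automatic because the neighbours in question lie in $C_1$ or $C_2$, which are by hypothesis different from $C$. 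Since in every case some applicable operation among those of the second stage (chiefly Operation~20) is found, this contradicts the termination of the second stage, and the lemma follows.
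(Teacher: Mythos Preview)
Your proposal is correct and uses the same two ingredients as the paper's proof: Statement~\ref{stat:p6} of Lemma~\ref{lem:prep} (at least three ports on a 4-cycle) together with the inapplicability of Operation~20. The paper avoids your case split by arguing directly that, since three consecutive vertices $v_1,v_2,v_3$ are ports, $N_G(\{v_1,v_2\})\setminus V(C)$ and $N_G(\{v_2,v_3\})\setminus V(C)$ must each lie in a single outside component (else Operation~20 applies), and these components coincide via the shared vertex $v_2$; this chaining makes the ``same vertex'' and ``opposite vertices'' subcases unnecessary.
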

\begin{proof}
Let $C$ be a 4-cycle in ${\cal C}_2$. Further let $v_1$, \ldots, $v_4$ be the vertices of 
$C$ and assume that they appear in $C$ clockwise in this order. By Statement~\ref{stat:p6} 
in Lemma~\ref{lem:prep}, $C$ has at least three ports. Without loss of generality, we may 
assume that $v_1$ through $v_3$ are ports of $C$. Since Operation~20 cannot be performed 
on ${\cal C}_2$, there is a unique connected component $C'$ in ${\cal C}_2$ such that 
$N_G(\{v_1,v_2\}) \setminus V(C)$ is a nonempty subset of $V(C')$. For the same reason, 
$N_G(\{v_2,v_3\}) \setminus V(C)$ is a nonempty subset of $V(C')$. Moreover, if $v_4$ is 
also a port of $C$, then for the same reason, $N_G(\{v_3,v_4\}) \setminus V(C)$ is 
a nonempty subset of $V(C')$. Therefore, in any case, $N_G(C) \setminus V(C) \subseteq 
V(C')$ and hence $C$ is adjacent to only $C'$ in $G$. 
\end{proof}

\begin{lemma}\label{lem:4-cycle4-cycle}
No two 4-cycles of ${\cal C}_2$ are adjacent in $G$. 
\end{lemma}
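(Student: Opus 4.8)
The plan is to argue by contradiction: suppose two $4$-cycles $C$ and $C'$ of ${\cal C}_2$ are adjacent in $G$, say some edge $e=\{v,u\}\in E(G)$ joins $v\in V(C)$ to $u\in V(C')$. The first step is to invoke Lemma~\ref{lem:4-cycle1}, which tells us that $C$ is adjacent in $G$ to at most one other connected component, and likewise for $C'$. Hence $C'$ is \emph{the} unique component adjacent to $C$, and $C$ is \emph{the} unique component adjacent to $C'$; in particular $N_G(C)\setminus V(C)\subseteq V(C')$ and $N_G(C')\setminus V(C')\subseteq V(C)$. So the only way out of $V(C)\cup V(C')$ is through no edge at all — that is, $G[V(C)\cup V(C')]$ is a connected component of $G$, which (since $|V(C)\cup V(C')|=8<9\le|V(G)|$) contradicts the connectedness of $G$ unless $V(G)=V(C)\cup V(C')$.

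So the real work is the case $V(G)=V(C)\cup V(C')$, i.e.\ $G$ has exactly $8$ vertices. But we assumed at the start of Section~\ref{sec:tfpcc} that $|V(G)|\ge 9$ (because Steps~\ref{step:ssafe} and~\ref{step:small} have already been applied). Hence this case is vacuous, and the contradiction is complete. Actually I should double-check that there is no third component hiding: Lemma~\ref{lem:4-cycle1} already rules this out, since it asserts $C$ is adjacent to at most one other component, and a component of ${\cal C}_2$ reachable from $C$ via a $G$-edge would be such a component. Combining, the union of the vertex sets of $C$ and $C'$ is closed under taking $G$-neighbors, so it is a union of connected components of $G$; by connectedness it must be all of $V(G)$, contradicting $|V(G)|\ge 9$.

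The main obstacle I anticipate is making sure that ``adjacent'' in the statement of Lemma~\ref{lem:4-cycle1} is being used consistently: a $4$-cycle could in principle have $G$-edges leading to several components, and I want the conclusion ``$N_G(C)\setminus V(C)$ lies inside a single component $C'$'' in full strength, not merely ``$C$ touches one other 4-cycle.'' Re-reading the proof of Lemma~\ref{lem:4-cycle1}, the argument via Operation~20 does give exactly this stronger statement — all of $N_G(C)\setminus V(C)$ lies in one component $C'$ — so this is fine. The only other subtlety is whether $C$ could be adjacent to $C'$ while $C'$ is adjacent to some \emph{third} component rather than back to $C$; but applying Lemma~\ref{lem:4-cycle1} to $C'$ forces $N_G(C')\setminus V(C')$ into a single component, and that component must be $C$ since it already receives the edge $e$. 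Hence both unions coincide and the closure argument goes through.
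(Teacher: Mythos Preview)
Your proof is correct and follows essentially the same approach as the paper: assume two $4$-cycles are adjacent, apply Lemma~\ref{lem:4-cycle1} to each to conclude $G[V(C)\cup V(C')]$ is a connected component of $G$, and derive a contradiction from $|V(G)|\ge 9$. The paper's argument is identical but much terser; your extra discussion of why Lemma~\ref{lem:4-cycle1} yields the full containment $N_G(C)\setminus V(C)\subseteq V(C')$ (rather than merely ``one adjacent component'') is a helpful elaboration of a step the paper leaves implicit.
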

\begin{proof}
For a contradiction, assume that two 4-cycles $C_1$ and $C_2$ of ${\cal C}_2$ are 
adjacent in $G$. Then, by Lemma~\ref{lem:4-cycle1}, $G[V(C_1)\cup V(C_2)]$ is a 
connected component of $G$. However, this is impossible because $G$ is connected 
and $|V(G)| \ge 9$. 
\end{proof}

\begin{lemma}\label{lem:4-cycle4-path}
No 4-cycle $C$ of ${\cal C}_2$ is adjacent to a 4-path component of ${\cal C}_2$ in $G$. 
\end{lemma}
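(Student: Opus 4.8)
The plan is to argue by contradiction, assuming that a 4-cycle $C$ of ${\cal C}_2$ is adjacent in $G$ to a 4-path component $P$ of ${\cal C}_2$, and to derive a contradiction by showing that one of Operations~16 through~20 (most likely Operation~20 or Operation~17) is still applicable on ${\cal C}_2$. First I would invoke Statement~\ref{stat:p6} of Lemma~\ref{lem:prep} to fix notation: let $v_1,\dots,v_4$ be the vertices of $C$ in clockwise order, and assume without loss of generality that $v_1$, $v_2$, $v_3$ are ports of $C$. Then I would invoke Lemma~\ref{lem:4-cycle1} to conclude that $P$ is the \emph{unique} connected component of ${\cal C}_2$ (other than $C$) to which $C$ is adjacent, so every edge of $G$ leaving $C$ has its other endpoint in $V(P)$; in particular $N_G(\{v_1,v_2\})\setminus V(C)$ and $N_G(\{v_2,v_3\})\setminus V(C)$ are nonempty subsets of $V(P)$.

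Next I would try to set up an application of Operation~20 using the edge $e=\{v_1,v_2\}$ of $C$: we have some $u_1\in N_G(v_1)\setminus V(C)$ and some $u_2\in N_G(v_2)\setminus V(C)$, both lying in $V(P)$. Operation~20 requires $u_1$ and $u_2$ to lie in \emph{different} connected components other than $C$, so this direct route fails when all the relevant neighbors land in the single component $P$. The workhorse should instead be Operation~17 (cycle of length at least~6 meeting a 4-path) — but $C$ has length only~4, so that does not apply either. This is the crux: with $|C|=4$ and only one adjacent component, the ``merge with a big cycle'' operations are all blocked, so I must exploit the \emph{internal structure} of the 4-path $P$ together with the three ports of $C$ to find a forbidden configuration. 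The natural candidate is Operation~23 (a 0-path plus a 4-path plus a third component) or, more simply, to show that $C$ together with $P$ and their connecting edges can be merged into a single good tree directly, contradicting the assumed termination of the second stage; alternatively one shows $G[V(C)\cup V(P)]$ would have to be a whole connected component of $G$, contradicting $|V(G)|\ge 9$ and connectivity exactly as in the proof of Lemma~\ref{lem:4-cycle4-cycle}.

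So the plan is: (i) fix $v_1,v_2,v_3$ as ports of $C$ and let $P$ have vertices $x_1,\dots,x_5$ in path order; (ii) using that $P$ is the only neighbor-component of $C$, locate which vertices of $P$ are hit by edges from $v_1,v_2,v_3$; (iii) case-analyze on these attachment points — if two ports of $C$ attach to $P$ at vertices that allow splicing $C$ into $P$ while keeping/improving the number of internal vertices and avoiding new short cycles, then ${\cal C}_2$ was not terminal, a contradiction; (iv) handle the remaining case by showing the attachment pattern forces $G[V(C)\cup V(P)]$ to be a connected component of $G$ (using that all three ports $v_1,v_2,v_3$ must send their outside-edges into $P$, and $P$'s endpoints are leaves of $G$ by Statement~\ref{stat:p4} of Lemma~\ref{lem:prep}, hence $P$ has no edges to the rest of $G$), which contradicts connectivity of $G$ since $|V(C)\cup V(P)| = 9 < |V(G)|$ is false only when $|V(G)|=9$, in which case $G$ would be disconnected — again a contradiction. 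The hard part will be step~(iii): there are several sub-cases depending on whether a port of $C$ attaches to an endpoint, a near-endpoint, or the center of $P$, and for each I must exhibit a concrete good operation (a combination of deleting one edge of $C$, deleting one edge of $P$, and adding two edges of $G$) that yields a tree satisfying Condition~C2 — so the bookkeeping of $w(\hat T)$, $b(\hat T)$, and $|L(\hat T)|$ in each sub-case, in the style of Lemmas~\ref{lem:op15}--\ref{lem:op23}, is where the real work lies.
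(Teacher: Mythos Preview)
Your proposal has a genuine gap: you are looking in the wrong toolbox. None of the second-stage Operations~15--23 can connect a 4-cycle to a 4-path component (Operation~17 needs $|C|\ge 6$; Operation~20 needs two \emph{distinct} other components; Operation~19 needs a leaf of a good component; Operation~23 needs a 0-path). So step~(iii) of your plan --- exhibiting a still-applicable good operation on ${\cal C}_2$ --- cannot succeed, no matter how carefully you do the $w(\hat T),b(\hat T),|L(\hat T)|$ bookkeeping. Indeed, if you try to merge $C$ and $P$ by hand (delete one edge of $C$, add one $G$-edge to $P$), you get $b(\hat T)=8$ but typically $w(\hat T)\le 6<\frac{4}{5}\cdot 8$, so the result is not good.

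Step~(iv) also fails as written. From ``the endpoints of $P$ are leaves of $G$'' you conclude ``$P$ has no edges to the rest of $G$'', but the three \emph{internal} vertices of $P$ may well have $G$-edges to components other than $C$; nothing you have said rules that out. And even if $G[V(C)\cup V(P)]$ were a connected component, $|V(C)\cup V(P)|=9$ does not contradict $|V(G)|\ge 9$.

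The paper's proof instead goes back to the \emph{preprocessing} and \emph{reduction} operations. Since both $C$ and $P$ are already present in ${\cal C}_0$, the inapplicability of Operation~12 to ${\cal C}_0$ severely restricts which internal vertices $u_2,u_3,u_4$ of $P$ each port $v_i$ of $C$ can see: adjacent ports of $C$ cannot hit adjacent vertices of $P$. A short case analysis on whether $v_2$ sees the center $u_3$ then forces, via $|B|\ge 2$ (Operation~4 inapplicable on $G$), a configuration on which either Operation~12 (on ${\cal C}_0$) or Operation~10 (on $G$) would apply --- the desired contradiction. The missing idea is precisely to use Operation~12 and the reduction rules on $G$, not the second-stage operations.
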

\begin{proof}
For a contradiction, assume that a 4-cycle $C$ of ${\cal C}_2$ is adjacent to a 4-path 
component $P$ of ${\cal C}_2$ in $G$. Let $v_1$, \ldots, $v_4$ be the vertices of $C$ 
and assume that they appear in $C$ clockwise in this order. Let $B$ be the set of all 
$u\in V(G) \setminus V(C)$ such that for some $v_i \in V(C)$, $\{u,v_i\} \in E(G)$. 
Since Operation~4 cannot be performed on $G$, $|B| \ge 2$. 
Moreover, by Lemma~\ref{lem:4-cycle1}, $B \subseteq V(P)$. 

Let $u_1$, \ldots, $u_5$ be the vertices of $P$ and assume that they appear in $P$ in 
this order. Then, $P$ is a dead 4-path component of ${\cal C}_0$. 
So, by Statement~\ref{stat:p4} in Lemma~\ref{lem:prep}, both $u_1$ and $u_5$ are leaves 
of $G$. Thus, $B \subseteq \{u_2,u_3,u_4\}$. Since $|C| = 4$ and $C$ has at least three 
ports (by Statement~\ref{stat:p6} in Lemma~\ref{lem:prep}), there are two consecutive 
edges in $C$ whose endpoints all are ports of $C$. Without loss of generality, 
we assume that $v_1$ through $v_3$ are ports of $C$. 

{\em Case 1:} $\{v_2,u_3\} \in E(G)$. In this case, since both $v_1$ and $v_3$ are ports 
of $C$ and Operation~12 cannot be performed on ${\cal C}_0$, $\{u_2,u_3,u_4\} \cap N_G(v_1) 
= \{u_3\}$ and $\{u_2,u_3,u_4\} \cap N_G(v_3) = \{u_3\}$, and in turn $\{u_2,u_3,u_4\} 
\cap N_G(v_2) = \{u_3\}$ as well. Now, since $|B| \ge 2$, $G$ has an edge $\{v_4,u_j\}$ 
with $j \in \{2, 4\}$. However, we can now see that Operation~12 can be performed on 
${\cal C}_0$, a contradiction. 

{\em Case 2:} $\{v_2,u_3\} \not\in E(G)$. In this case, since $v_2$ is a port of $C$, 
$\{v_2,u_2\} \in E(G)$ or $\{v_2,u_4\} \in E(G)$. So, $\{v_1,u_3\}\not\in E(G)$ and 
$\{v_3,u_3\}\not\in E(G)$, because Operation~12 cannot be performed on ${\cal C}_0$. 
Thus, $N_G(\{v_1,v_2,v_3\}) \cap \{u_2,u_3,u_4\}$ is either $\{u_2,u_4\}$ or $\{u_j\}$ 
for some $j\in\{2,4\}$. In the latter case, since $|B| \ge 2$, $\{v_4,u_3\} \in E(G)$ 
or $\{v_4,u_{6-j}\} \in E(G)$, and hence Operation~10 or~12 can be performed, a 
contradiction. In the former case, if $\{v_4,u_3\} \in E(G)$, then Operation~12 can be 
performed on ${\cal C}_0$, a contradiction; otherwise, $N_G(v_4) \subseteq V(C) \cup 
\{u_2,u_4\}$ and in turn Operation~10 can be performed on $G$, a contradiction. 
\end{proof}

\begin{lemma}\label{lem:4-cycle5-cycle}
No 4-cycle of ${\cal C}_2$ is adjacent to a 5-cycle of ${\cal C}_2$ in $G$. 
\end{lemma}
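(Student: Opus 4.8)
The plan is to argue by contradiction, exactly in the spirit of Lemmas~\ref{lem:4-cycle4-cycle} and~\ref{lem:4-cycle4-path}: assume a 4-cycle $C$ of ${\cal C}_2$ is adjacent in $G$ to a 5-cycle $D$ of ${\cal C}_2$, and derive that some operation (most likely Operation~10, 12, 15, or~16) could still be applied, contradicting the fact that ${\cal C}_2$ is obtained after exhausting all operations of the second stage (and after the refined preprocessing). First I would set up notation: let $v_1,\dots,v_4$ be the vertices of $C$ in cyclic order and $w_1,\dots,w_5$ the vertices of $D$ in cyclic order. By Statement~\ref{stat:p6} in Lemma~\ref{lem:prep}, $C$ has at least three ports, so without loss of generality $v_1,v_2,v_3$ are ports; and by Lemma~\ref{lem:4-cycle1} all of $N_G(C)\setminus V(C)$ lies in a single component, which by our assumption must be $D$, so $N_G(\{v_1,v_2,v_3\})\setminus V(C)\subseteq V(D)$.

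Next I would rule out the possibility that $C$ and $D$ can be merged into one path by Operation~15. Since $|C|+|D| = 9 < 10$, Operation~15 is \emph{not} directly triggered by an edge between $C$ and $D$ — this is precisely why the bound ``$|C_1|+|C_2|\ge 10$'' was chosen in Operation~15, and it explains why the lemma is nontrivial. Instead the contradiction must come from a finer structural fact. The key step is to show that $C$ sends at least \emph{two} edges into $D$, say to vertices $w_i$ and $w_j$ of $D$. If these two edges emanate from the two endpoints of a single edge $\{v_a,v_{a+1}\}$ of $C$ and land on the two endpoints of a single edge of $D$, then Operation~12 merges $C$ and $D$ into one cycle of length $9$, which is still a cycle (triangle-free) — a contradiction with maximality/irreducibility, since Operation~12 cannot be applied on ${\cal C}_0$ and the second-stage operations never create short cycles. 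If instead the two landing points $w_i,w_j$ are not adjacent in $D$, I would break $D$ at an appropriate edge and reattach it to $C$ to build a path or a good tree covering $V(C)\cup V(D)$ using Operation~16 (cycle $C_1=D$ of length $5$, good component $C_2$), or I would find a Hamiltonian path of $G[V(C)\cup V(D)]$ between two chosen vertices and invoke Operation~10. The case analysis is driven by which of $v_1,v_2,v_3$ (and possibly $v_4$, if it is also a port) is adjacent to which $w_k$, mirroring the two-case split (``$\{v_2,u_3\}\in E(G)$ or not'') used in Lemma~\ref{lem:4-cycle4-path}.

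The hard part will be the bookkeeping in the subcase where $C$ has exactly the minimum needed adjacency — e.g., only two ports of $C$ actually reach $D$ and they hit two adjacent vertices of $D$, or they hit $w_k$ and $w_{k+2}$. There I would need to combine the degree constraints (a port of $C$ that is a boundary point forces degree $\ge 3$, cf. the remark after Lemma~\ref{lem:exclude}; vertices of $D$ not hit by $C$ have $d_G\le 2$ unless $D$ has an internal chord or an external neighbor) with the triangle-freeness requirement to show that whatever rewiring I produce is a valid TFTCC and strictly improves, or else is a short cycle forbidden by the preprocessing. In particular I would exploit that no two vertices of $V(C)\cup V(D)$ outside the attachment points can have an external neighbor without Lemma~\ref{lem:4-cycle1} or Operation~4 already being violated, which pins $G[V(C)\cup V(D)]$ down tightly enough that Operation~10 (with the Hamiltonian path of this $9$-vertex induced subgraph, noting $9 > 6$ so we must instead route the argument so that the ``$K$'' of Operation~10 has at most $6$ vertices — here one would peel off at most $6$ of the vertices together with two articulation vertices) or Operation~11/12 becomes applicable. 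I expect the cleanest route is to show the merge always yields either a $9$-cycle (killed by the no-short-cycle invariant of preprocessing/second stage) or a good tree (killed by Operations~16/19 being inapplicable only when no such edge exists), so the mere \emph{existence} of an edge between $C$ and $D$ is contradictory.
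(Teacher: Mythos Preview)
Your setup is correct through the point where $N_G(C)\setminus V(C)\subseteq V(D)$ by Lemma~\ref{lem:4-cycle1}, and you rightly note that Operation~15 does not fire because $|C|+|D|=9<10$. But from there the plan has a genuine gap: you try to derive the contradiction from the internal structure of $G[V(C)\cup V(D)]$ via Operations~10, 12, or~16, and this cannot close. Operation~16 needs one of the two components to be a \emph{good} tree, which neither a 4-cycle nor a 5-cycle is; Operation~10 needs $|V(K)|\le 6$, and your suggestion to ``peel off at most~6 vertices together with two articulation vertices'' has no candidate articulation pair inside $V(C)\cup V(D)$ that isolates such a $K$; and Operation~12 only rules out parallel $C$--$D$ edges, leaving plenty of configurations alive. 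More fundamentally, your claim that $G[V(C)\cup V(D)]$ is ``pinned down tightly'' is false: Lemma~\ref{lem:4-cycle1} constrains only the neighbours of the 4-cycle, not those of the 5-cycle, and indeed the 5-cycle $D$ must have at least two vertices with neighbours \emph{outside} $V(C)\cup V(D)$ (otherwise Operation~4 would apply, since $|V(C)\cup V(D)|-1=8$).

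The missing idea is Operation~20. Let $X\subseteq V(D)$ be the set of vertices of $D$ with a neighbour outside $V(C)\cup V(D)$; as just noted, $|X|\ge 2$. Now apply Operation~20 with the 5-cycle $D$ playing the role of the cycle: whenever two vertices of $D$ that are adjacent along $D$ have neighbours in two different components of ${\cal C}_2$ (one in $C$, one in a third component), Operation~20 would merge all three into a good tree. Since Operation~20 is inapplicable, this forces the vertices of $D$ that receive edges from $C$ to be separated on $D$ from the vertices in $X$, and a short case split on whether $X$ contains two $D$-adjacent vertices then collapses $N_G(C)\setminus V(C)$ to a set of size at most~1 or~2, at which point Operation~4 or Operation~10 delivers the contradiction. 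That interaction between $X$ and the $C$--$D$ attachments, mediated by Operation~20, is the engine of the proof; without it the argument does not go through.
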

\begin{proof}
For a contradiction, assume that a 4-cycle $C_1$ of ${\cal C}_2$ is adjacent to 
a 5-cycle $C_2$ of ${\cal C}_2$ in $G$. For each $i \in \{1,2\}$, let $v_{i,1}$, 
\ldots, $v_{i,|C_i|}$ be the vertices of $C_i$ and assume that they appear in $C_i$ 
clockwise in this order. By Statement~\ref{stat:p6} in Lemma~\ref{lem:prep}, $C_1$ 
has at least three ports, and in turn three ports of $C_1$ appear in $C_1$ 
consecutively because $|C_1| = 4$. Without loss of generality, we assume that 
$v_{1,1}$, $v_{1,2}$, and $v_{1,3}$ are ports of $C_1$. 

By Lemma~\ref{lem:4-cycle1}, $N_G(C_1) \setminus V(C_1) \subseteq V(C_2)$. Since 
$|C_1| + |C_2| - 1 = 8 < 9$ and Operation~4 cannot be performed on $G$, $|X| \ge 2$, 
where $X$ is the set of vertices $v_{2,j} \in V(C_2)$ with $N_G(v_{2,j}) \setminus 
(V(C_1) \cup V(C_2)) \ne \emptyset$. We distinguish two cases as follows. 

{\em Case 1:} $X$ has two vertices adjacent in $C_2$. Without loss of generality, 
we assume that $\{v_{2,4},v_{2,5}\} \subseteq X$. Then, since Operation~20 cannot 
be performed on ${\cal C}_2$, $N_G(C_1) \setminus V(C_1) \subseteq \{v_{2,2}\}$. 
So, Operation~4 can be performed on $G$, a contradiction. 

{\em Case 2:} No two vertices of $X$ are adjacent in $C_2$. In this case, since 
$|C_2| = 5$, $|X| = 2$. Without loss of generality, we assume that $X = \{v_{2,1},
v_{2,3}\}$. Then, since Operation~20 cannot be performed on ${\cal C}_2$, 
$N_G(C_1) \setminus V(C_1) \subseteq X$. Indeed, since Operation~4 cannot be 
performed on ${\cal C}_2$, $N_G(C_1) \setminus V(C_1) = X$. So, 
$|N_G(\{v_{1,1},v_{1,2}\}) \setminus V(C_1)| = 1$ and $|N_G(\{v_{1,2},v_{1,3}\}) 
\setminus V(C_1)| = 1$ because Operation~10 cannot be performed on $G$. Thus, 
$|N_G(\{v_{1,1},v_{1,2},v_{1,3}\}) \setminus V(C_1)| = 1$ because $N_G(v_{1,2}) 
\setminus V(C_1) \ne \emptyset$. Now, since $|N_G(C_1) \setminus V(C_1)| = 2$, 
$|N_G(\{v_{1,3},v_{1,4}\}) \setminus V(C_1)| = 2$ and in turn Operation~10 can 
be performed on $G$, a contradiction.
\end{proof}

Based on the above lemmas in this section, we are now ready to prove the next lemma:

\begin{lemma}\label{lem:stage2}
Suppose that $C$ is a connected component of ${\cal C}_2$. 
Then, $C$ is a 4-cycle, 5-cycle, 0-path, 4-path, or good connected component. 
Moreover, the following statements hold: 
\begin{enumerate}
\item\label{stat:final1} 
	If $C$ is a 0-path, then its unique vertex $u$ satisfies that for a single tree 
	component $C'$ of ${\cal C}_2$, each $v \in N_G(u)$ is an internal vertex of $C'$, 
	and $u$ is a leaf of $G$ if $C'$ is bad. 
\item\label{stat:final2} 
	If $C$ is a 4-path component of ${\cal C}_2$, then its endpoints are 
	leaves of $G$ and each internal vertex $u$ of $C$ satisfies that each neighbor 
	of $u$ in $G$ is a leaf of $G$, a vertex of a 5-cycle of ${\cal C}_2$, or an 
	internal vertex of a 4-path component or a good connected component of ${\cal C}$. 
\item\label{stat:final3} 
	If $C$ is a 4-cycle of ${\cal C}_2$, then each vertex $u$ of $C$ satisfies 
	that each neighbor of $u$ in $G$ is an internal vertex of a good connected 
	component of ${\cal C}_2$. 
\item\label{stat:final4} 
	If $C$ is a 5-cycle of ${\cal C}_2$, then each vertex $u$ of $C$ satisfies 
	that each neighbor of $u$ in $G$ is an internal vertex of a 4-path component 
	of ${\cal C}_2$. 
\item\label{stat:final5} 
	If $C$ is a good connected component but not a Hamiltonian path of $G$, 
	then each leaf $u$ of $C$ satisfies that each neighbor of $u$ in $G$ is 
	an internal vertex of $C$. 
\end{enumerate}
\end{lemma}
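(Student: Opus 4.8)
The plan is to first pin down which connected components ${\cal C}_2$ can have, and then to derive Statements~\ref{stat:final1}--\ref{stat:final5} one at a time, in each case arguing that if the asserted structure failed then one of Operations~15--23 --- or one of the earlier reductions, assumed exhausted on $G$ --- could still be performed. For the classification I would combine Lemma~\ref{lem:type1} with Lemma~\ref{lem:type2}: a component of ${\cal C}_1$ that is neither a path nor a cycle is good; every path component of ${\cal C}_1$ of length $1$, $2$, or $3$ would already have been merged during the first stage (by the construction of $\Gamma'$ together with Statements~\ref{stat:p3} and~\ref{stat:p5} of Lemma~\ref{lem:prep}); every path of length at least $5$ satisfies Condition~C2; and every bad component of ${\cal C}_1$ is a cycle of length at least $4$, a $0$-path, or a $4$-path whose endpoints lie in $L(G)$. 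Since Operations~15--23 are good (Lemmas~\ref{lem:op15}--\ref{lem:op23}) and create neither a new cycle nor a new bad component, the same description carries over to ${\cal C}_2$, so it only remains to exclude a cycle $C$ of length at least $6$. If $V(C)\ne V(G)$, then $C$ is adjacent in $G$ to another component $C'$; this $C'$ cannot be a $0$-path (Operation~6 was exhausted on ${\cal C}_0$ and no new cycles arise), and if $C'$ is a cycle, a good component, or a $4$-path, then Operation~15, 16, or~17 respectively applies, a contradiction. (The remaining possibility $V(C)=V(G)$ means $G$ has a Hamiltonian cycle, a case I would dispose of separately.)

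For Statement~\ref{stat:final1}, let $C=\{u\}$ be a $0$-path of ${\cal C}_2$, so that $u$ is already isolated in ${\cal C}_0$. Because Operation~18 is not applicable, $N_G(u)$ lies inside a single component $C'$ of ${\cal C}_2$; because Operation~6 was exhausted on ${\cal C}_0$ and no new cycles were created, $C'$ is a tree; and every $v\in N_G(u)$ is internal to $C'$, since otherwise Operation~19 would apply (if $C'$ is good) and a leaf of a bad $4$-path would lie in $L(G)$ yet have degree at least $2$ in $G$. For the final clause, suppose $C'$ is bad, hence a $4$-path $u_1u_2u_3u_4u_5$ with $u_1,u_5\in L(G)$, and suppose $d_G(u)\ge 2$. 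Then $u_1,u_5\notin N_G(u)$; since Operation~14 is not applicable, $u$ is adjacent to no two consecutive vertices of $C'$, which forces $N_G(u)=\{u_2,u_4\}$; since Operation~23 is not applicable, $u_3$ has no neighbour outside $C'$, so $N_G(u_3)=\{u_2,u_4\}$ as well; but then Operation~8 applies, with $u$ and $u_3$ as the two vertices of common neighbourhood $\{u_2,u_4\}$ and with $\{u_5\}$ as a connected component of $G-\{u_4\}$ avoiding $u_2$ --- a contradiction. Hence $u$ is a leaf of $G$.

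Statements~\ref{stat:final2}--\ref{stat:final4} follow by the same template. Given an internal vertex $u$ of a $4$-path component $C$ (or a vertex $u$ of a $4$-cycle or $5$-cycle $C$) and a neighbour $w\in N_G(u)\setminus V(C)$, I would examine the type of the component $C'$ of ${\cal C}_2$ containing $w$. The subcases in which $C'$ has a type not appearing in the list claimed for $C$ are excluded by the adjacency Lemmas~\ref{lem:4-cycle1}--\ref{lem:4-cycle5-cycle}, by Operations~15--17 (a $5$-cycle cannot abut a $5$-cycle or a good component, and a cycle of length at least $6$ has already been excluded), or by Statement~\ref{stat:final1} (the vertex of a $0$-path has all its neighbours internal to a tree component, so it cannot abut a cycle, and when $C$ is a $4$-path Statement~\ref{stat:final1} forces $w\in L(G)$). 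In each surviving subcase $w$ must be the right kind of vertex of $C'$: if $C'$ is good, then $w$ is internal to $C'$ because Operation~19 is not applicable, and if $C'$ is a $4$-path, then $w$ cannot be an endpoint, since endpoints of $4$-paths lie in $L(G)$ whereas $d_G(w)\ge 2$. Assembling the survivors gives exactly the lists claimed. Statement~\ref{stat:final5} is then immediate: if $C$ is good but not a Hamiltonian path, $u$ is a leaf of $C$, and $w\in N_G(u)$, then $w\notin V(C)$ would make Operation~19 applicable, while $w\in V(C)$ being a leaf of $C$ would make Operation~22 applicable (if $C$ is not a path) or Operation~21 applicable (if $C$ is a path, which is then dead with adjacent endpoints); so $w$ is internal to $C$.

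I expect the only genuinely delicate point to be the final clause of Statement~\ref{stat:final1} --- that a $0$-path adjacent to a bad $4$-path must actually consist of a leaf of $G$. This is where Operations~14, 23, and~8 have to be chained, and it relies essentially on the endpoints of the $4$-path being leaves of $G$, so that deleting an internal vertex of the path isolates one of its endpoints and thereby supplies the connected component demanded by Operation~8. Everything else is a long but routine enumeration of component types, supported throughout by the adjacency Lemmas~\ref{lem:4-cycle1}--\ref{lem:4-cycle5-cycle}.
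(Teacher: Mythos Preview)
Your proposal is correct and follows essentially the same route as the paper: the classification via Lemmas~\ref{lem:type1}--\ref{lem:type2} plus the goodness of Operations~15--23, the exclusion of long cycles via Operations~15--17, and the case analyses for Statements~\ref{stat:final1}--\ref{stat:final5} using Operations~18--23 together with Lemmas~\ref{lem:4-cycle1}--\ref{lem:4-cycle5-cycle} all mirror the paper's argument, including the chain through Operations~14/23/8 for the final clause of Statement~\ref{stat:final1} (the paper phrases this as ``Operation~8 or~23 can be performed''). The Hamiltonian-cycle edge case you flag is likewise glossed over in the paper.
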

\begin{proof}
By Lemmas~\ref{lem:type2} through \ref{lem:op23}, $C$ is a cycle of length at least~4, 
0-path, 4-path, or good connected component. Indeed, $C$ cannot be a cycle of length~6 
or more, because otherwise Operation~6 could be performed on ${\cal C}_1$ or 
Operation~$i$ could be performed on ${\cal C}_2$ for some $i\in\{15,16,17\}$. 
We next prove the statements separately as follows. 

{\em Statement~\ref{stat:final1}:} Suppose that $C$ is a 0-path. Let $u$ be the unique 
vertex in $C$. Since Operation~18 cannot be performed on ${\cal C}_2$, $N_G(u) \subseteq 
V(C')$ for some connected component $C'$ of ${\cal C}_2$. If $C'$ is not a connected 
component of ${\cal C}_1$, then by Lemmas~\ref{lem:op15} through~\ref{lem:op23}, $C'$ is 
a good connected component of ${\cal C}_2$ and in turn each $v \in N_G(u)$ is an internal 
vertex of $C'$ (because otherwise Operation~19 could be performed on ${\cal C}_2$). 
So, we may assume that $C'$ is also a connected component of ${\cal C}_1$. Then, by 
Lemma~\ref{lem:type2}, $C'$ is a tree component and each $v \in N_G(u)$ is an internal 
vertex of $C'$. For a contradiction, assume that $C'$ is bad but $u$ is not a leaf of $G$. 
Since $C'$ is a bad tree component of ${\cal C}_1$ with internal vertices, 
Lemma~\ref{lem:type2} ensures that $C'$ is a 4-path. 
Let $u_1$, \ldots,$u_5$ be the vertices of $C'$ and assume that they appear in $C'$ in 
this order. Since $u$ is not a leaf of $G$, Lemma~\ref{lem:type2} ensures that $N_G(u) = 
\{u_2,u_4\}$. Now, Operation~8 or~23 can be performed on $G$, a contradiction. 

{\em Statement~\ref{stat:final2}:} Suppose that $C$ is a 4-path component of ${\cal C}_2$. 
Then, $C$ is also a 4-path component of ${\cal C}_1$, because Operation~$i$ does not 
produce a new bad connected component in ${\cal C}$ for each $i\in\{15,\ldots,23\}$. 
So, by Lemma~\ref{lem:type2}, each endpoint of $C$ is a leaf of $G$. Consider an arbitrary 
internal vertex $u$ of $C$ and an arbitrary neighbor $v$ of $u$ in $G$. 
Since Operation~19 cannot be performed on ${\cal C}_2$, $v$ is 
not a leaf of a good connected component of ${\cal C}_2$. So, if $v$ appears in a good 
connected component $C'$ of ${\cal C}_2$, $v$ must be an internal vertex of $C'$. Moreover, 
by Lemma~\ref{lem:4-cycle4-path}, $v$ cannot appear in a 4-cycle of ${\cal C}_2$. Thus, 
to finish the proof, we may assume that $v$ appears in a bad tree component $C''$ of 
${\cal C}_2$. Now, if $v$ appears in a 0-path component of ${\cal C}_2$, then $v$ is 
a leaf of $G$ by Statement~\ref{stat:final1} in this lemma; otherwise, $v$ cannot be 
an endpoint of another 4-path component of ${\cal C}_2$ because each endpoint of 
a 4-path component of ${\cal C}_2$ is a leaf of $G$.  

{\em Statement~\ref{stat:final3}:} Let $C$ be a 4-cycle of ${\cal C}_2$. $C$ cannot be 
adjacent to a 0-path component of ${\cal C}_2$ in $G$, because Operation~6 cannot be 
performed on ${\cal C}_0$ and neither Stage~2 nor Stage~3 produces a new cycle or 
a new 0-path component in ${\cal C}$. So, by Lemmas~\ref{lem:4-cycle4-cycle} 
through~\ref{lem:4-cycle5-cycle}, each $v \in N_G(C)$ appears in a good connected 
component $C'$ of ${\cal C}_2$. Indeed, $v$ must be an internal vertex of $C'$, 
because Operation~19 cannot be performed on ${\cal C}_2$. 

{\em Statement~\ref{stat:final4}:} Let $C$ be a 5-cycle of ${\cal C}_2$. $C$ cannot be 
adjacent to a 0-path component of ${\cal C}_2$ in $G$, because Operation~6 cannot be 
performed on ${\cal C}_0$ and neither Stage~2 nor Stage~3 produces a new cycle or 
a new 0-path component in ${\cal C}$. Moreover, since neither Operation~15 nor 
Operation~16 can be performed on ${\cal C}_2$, $C$ cannot be adjacent to a 5-cycle or 
a good connected component of ${\cal C}_2$ in $G$. So, by Lemma~\ref{lem:4-cycle5-cycle}, 
each $v \in N_G(C)$ appears in a 4-path component $C'$ of ${\cal C}_2$. Indeed, $v$ must 
be an internal vertex of $C'$, because each endpoint of $C'$ is a leaf of $G$. 

{\em Statement~\ref{stat:final5}:} Supppose that $C$ is a good connected component of 
${\cal C}_2$ but not a Hamiltonian path of $G$. Let $u$ be a leaf of $C$. 
Since Operation~$i$ cannot be performed on ${\cal C}_2$ for 
each $i\in\{19,21,22\}$, each $v \in N_G(u)$ is an internal vertex of $C$. 
\end{proof}

Finally, in the third stage, we complete the transformation of ${\cal C}$ 
into a spanning tree of $G$ by further modifying ${\cal C}$ 
by performing the following steps:
\begin{enumerate}
\item For each cycle $C$ of ${\cal C}$, first select an arbitrary edge 
	$e=\{u,v\} \in E(G)$ such that $u \in V(C)$ and $v \in V(G) \setminus V(C)$, 
	then delete one edge incident to $u$ in $C$, and further add $e$. 
	({\em Comment:} Since no two cycles in ${\cal C}_2$ are adjacent in $G$, 
	$v$ appears in a tree component of ${\cal C}$. Moreover, after this step, 
	${\cal C}$ has only tree components.)

\item Arbitrarily connect the connected components of ${\cal C}$ into a tree 
	by adding some edges of $G$. 
\end{enumerate}

It is easy to see that for each $i\in\{15,\ldots,23\}$, Step~$i$ can be done in 
$O(m)$ time. So, the second stage takes $O(nm)$ time. Since the other two stages 
can be easily done in $O(m)$ time, the refined Step~\ref{step:trans} can be done 
$O(nm)$ time.

\section{Performance Analysis}\label{sec:analysis}
Let $g_2$ (respectively, $g_3$) be the number of internal vertices in connected 
components of ${\cal C}_2$ satisfying Condition~C2 (respectively, C3), $b_2$ 
(respectively, $b_3$) be the total number of edges in ${\cal C}_0$ whose endpoints 
appear in the same connected components of ${\cal C}_2$ satisfying Condition~C2 
(respectively, C3), $c_4$ (respectively, $c_5$) be the number of 4-cycles 
(respectively, 5-cycles) in ${\cal C}_2$, and $p_4$ be the number of 4-path 
components in ${\cal C}_2$. 

\begin{lemma}\label{lem:apx}
Let $T_{apx}$ be the spanning tree of $G$ outputted by the refined algorithm. 
Then, the following hold:
\begin{enumerate}
\item\label{stat:apx}
	$w(T_{apx}) \ge 3c_4 + 4c_5 + 3p_4 + g_2 + g_3 \ge 
	3c_4 + 4c_5 + 3p_4 + \frac{4}{5}b_2 + b_3$. 
\item\label{stat:opt1} 
	$opt(G) \le 4c_4 + 5c_5 + 4p_4 + b_2 + b_3$.
\item\label{stat:opt2} 
	$opt(G) \le 3c_4 + 5c_5 + 3p_4 + 2g_2 + 2g_3$. 
\end{enumerate}
\end{lemma}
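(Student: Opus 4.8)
The plan is to prove the three statements in order, reusing the bookkeeping already set up in the preceding sections. For Statement~\ref{stat:apx}, I would trace what Stage~3 does to $\mathcal{C}_2$. Each $4$-cycle of $\mathcal{C}_2$ contributes at least~$3$ internal vertices to $T_{apx}$ (open it into a $4$-path, keeping three internal), each $5$-cycle contributes at least~$4$, and each $4$-path component contributes exactly~$3$; these are precisely the non-good components enumerated in Lemma~\ref{lem:stage2}. A good component $\hat T$ satisfying Condition~C2 keeps all $w(\hat T)$ of its internal vertices when components are merged by adding edges (adding an edge never destroys an internal vertex), so it contributes at least $g_2$ across all such components; similarly Condition~C3 components contribute $g_3$. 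Summing gives $w(T_{apx}) \ge 3c_4 + 4c_5 + 3p_4 + g_2 + g_3$. The second inequality is then immediate from $g_2 \ge \frac{4}{5}b_2$ (Condition~C2: $w(\hat T) \ge \frac{4}{5}b(\hat T)$) and $g_3 \ge b_3$ (Condition~C3: $w(\hat T) \ge b(\hat T) = 4$), noting that $b_2$ and $b_3$ are exactly the sums of the $b(\cdot)$ values.

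For Statement~\ref{stat:opt1}, I would invoke Lemma~\ref{lem:prep}(\ref{stat:p1}): immediately after preprocessing, $\mathcal{C}_0$ is a TFPCC of $G$ with $|E(\mathcal{C}_0)| \ge opt(G)$. Now I count the edges of $\mathcal{C}_0$ by which connected component of $\mathcal{C}_2$ their endpoints land in. Since $\mathcal{C}_2$ is obtained from $\mathcal{C}_0$ only by deleting some edges and adding edges of $G$ (Stages~1 and~2), every edge of $\mathcal{C}_0$ has both endpoints inside a single connected component of $\mathcal{C}_2$ (an edge of $\mathcal{C}_0$ is never split across components, since merging only joins components). A $4$-cycle of $\mathcal{C}_2$ contains at most $4$ edges of $\mathcal{C}_0$, a $5$-cycle at most~$5$, a $4$-path component at most~$4$ (it has exactly $4$ edges, all possibly from $\mathcal{C}_0$), and a $0$-path component contains no edges of $\mathcal{C}_0$; the good components account for exactly $b_2 + b_3$ by definition. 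Hence $opt(G) \le |E(\mathcal{C}_0)| \le 4c_4 + 5c_5 + 4p_4 + b_2 + b_3$.

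Statement~\ref{stat:opt2} is the hard part, and I expect the argument to use Lemma~\ref{lem:up} together with Lemma~\ref{lem:LZW} applied to an OST, rather than counting edges of $\mathcal{C}_0$ directly. The idea is to start from the path cover $\mathcal{P}$ of $G$ with $|E(\mathcal{P})| \ge opt(G)$ guaranteed by Lemma~\ref{lem:up}, but then to exploit the structural facts in Lemma~\ref{lem:stage2}, especially Statements~\ref{stat:final1}--\ref{stat:final5}, which tell us that the ports of bad components of $\mathcal{C}_2$ can only attach to specific places (internal vertices of good components, vertices of $5$-cycles, internal vertices of $4$-paths, etc.). Rewriting $opt(G) \le |E(\mathcal{P})|$ and then bounding how many edges of $\mathcal{P}$ can have endpoints inside, or crossing into, each type of $\mathcal{C}_2$-component should give the coefficients $3c_4$, $5c_5$, $3p_4$, and $2g_2+2g_3$: a $4$-cycle $C$ has $4$ vertices but, because $\mathcal{P}$ is a path cover and $C$ is adjacent to at most one other component whose attachment points are internal vertices of a good component, $\mathcal{P}$ restricted near $C$ can use at most $3$ edges with an endpoint in $V(C)$; similarly a good component $\hat T$ with $w(\hat T)$ internal vertices can absorb at most $2w(\hat T)$ edges of $\mathcal{P}$ touching it, because each internal vertex of $\hat T$ has $\mathcal{P}$-degree at most~$2$ while leaves of $\hat T$ contribute nothing extra by Statement~\ref{stat:final5}. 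The main obstacle is making this crossing-edge count airtight: one must be careful that edges of $\mathcal{P}$ with one endpoint in a bad component and the other in a good component are not double-counted, and that the leaf-attachment restrictions of Lemma~\ref{lem:stage2} are strong enough to force the claimed per-component bounds. I would organize the computation as a sum over components of $\mathcal{C}_2$ of a local charge, verify the charge inequality case by case using Lemma~\ref{lem:stage2}, and conclude by summing.
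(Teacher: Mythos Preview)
Your plans for Statements~\ref{stat:apx} and~\ref{stat:opt1} are correct and match the paper's argument essentially verbatim.

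For Statement~\ref{stat:opt2} the paper takes a different and tighter route than the edge-charging on a path cover $\mathcal{P}$ that you sketch. The paper works directly with an OST $T$: it roots $T$ at a non-leaf, lets $I'$ be the set of vertices having a child, deletes from $T$ every parent--child edge whose two endpoints lie in the same $4$-cycle of $\mathcal{C}_2$, and lets $I''$ be the set of vertices still having a child after this deletion. Setting $J=I'\setminus I''$, one has $|J|\le 3c_4$ because a $4$-cycle contains at most three tree edges. The key claim, proved by running through Lemma~\ref{lem:stage2}, is that every $v\in I'$ which is not in $J$, not one of the $5c_5$ vertices of $5$-cycles, not one of the $3p_4$ internal vertices of $4$-paths, and not one of the $g_2+g_3$ internal vertices of good components, must have a child in the pruned tree that \emph{is} an internal vertex of a good component. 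Since the map ``$v\mapsto$ that child'' is injective (a vertex has a unique parent), the number of such $v$ is at most $g_2+g_3$, and summing gives the bound.

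Your proposed route is in the same spirit (indeed $|E(\mathcal{P})|=w(T)$ for the path cover of Lemma~\ref{lem:LZW}, so counting edges of $\mathcal{P}$ \emph{is} counting non-leaves of $T$), but the specific local bound you state for $4$-cycles is not right: it is not true that $\mathcal{P}$ can use at most $3$ edges with an endpoint in a given $4$-cycle $C$ (four vertices of $\mathcal{P}$-degree~$2$ allow many more incidences). What makes the coefficient $3c_4$ work in the paper is precisely the pruning step defining $J$, after which outgoing edges from $V(C)$ are charged not to $C$ but to their other endpoint, which Statement~\ref{stat:final3} forces to be an internal vertex of a good component; this is what produces the second copy of $g_2+g_3$. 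Your ``absorb at most $2w(\hat T)$ edges'' heuristic for good components is the right order of magnitude, but without the parent--child orientation and the injection it double-counts and does not close. So the idea is salvageable, but only by effectively reproducing the rooted-tree injection the paper uses.
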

\begin{proof}
We prove the statements separately as follows. 

{\em Statement~\ref{stat:apx}}: Obvious.

{\em Statement~\ref{stat:opt1}}: Clear from the fact that 
	$opt(G) \le |E({\cal C}_0)| \le 4c_4 + 5c_5 + 4p_4 + b_2 + b_3$. 

{\em Statement~\ref{stat:opt2}}: For convenience, 
let $T'$ be obtained from $T$ by rooting $T$ at an internal vertex, and $T''$ 
be obtained from $T'$ by removing those edges $(u,v)$ such that some 4-cycle of 
${\cal C}_2$ contains both $u$ and $v$. Further let $I'$ (respectively, $I''$) 
be the set of vertices in $T'$ (respectively, $T''$) that have at least one 
child in $T'$ (respectively, $T''$). Also let $J = I' \setminus I''$. Clearly, 
$w(T) = |I'|$. Moreover, for each 4-cycle $C$ of ${\cal C}_2$, $T'$ can contain 
at most three edges between the vertices of $C$. So, $|J| \le 3c_4$. Furthermore, 
Lemma~\ref{lem:stage2} ensures that each vertex of $I'$ other than 
\begin{itemize}
\item the vertices in $J$, 
\item the $3p_4$ internal vertices of 4-path components of ${\cal C}_2$, 
\item the $5c_5$ vertices of 5-cycles, and 
\item the $g_2 + g_3$ internal vertices of good connected components of ${\cal C}_2$ 
\end{itemize}
must have a child in $T''$ that is an internal vertex of a good connected component of 
${\cal C}_2$. So, $w(T) = |I'| \le 3c_4 + (5c_5 + 3p_4 + g_2 + g_3) + (g_2 + g_3)$.  
\end{proof}

\begin{theorem}\label{th:main}
The algorithm achieves an approximation ratio of $\frac{13}{17}$ and runs in 
$O(n^2m) + t(2n,2m)$ time. 
\end{theorem}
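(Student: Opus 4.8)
The plan is to prove Theorem~\ref{th:main} by induction on $|V(G)|+|E(G)|$, so that the reduction steps are absorbed automatically and the argument reduces to the last branch of the algorithm. For the running time I would simply add up the costs already established in Sections~4--6: the modified Steps~\ref{step:ssafe} and~\ref{step:wsafe} cost $O(n^2m)$, computing a preferred TFPCC costs $t(2n,2m)$ by Lemma~\ref{lem:prefer}, the refined preprocessing and the refined transformation each cost $O(nm)$, and the $|V(G)|\le 8$ case costs $O(1)$; since every recursive call generated by a weakly safe operation is on a strictly smaller instance whose total size does not exceed that of $G$, these costs telescope to $O(n^2m)+t(2n,2m)$.

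For the ratio I would dispose of the reduction cases first. If one of Operations~1, 2, 8, 9, 10 fires, then $opt(G)$ is unchanged, and directly from the operation's definition a spanning tree of the reduced graph is either literally a spanning tree of $G$ of the same weight (edge deletions) or extends to one of at least the same weight by re-attaching the deleted leaf to its unique neighbor; hence the inductive bound is preserved. If one of Operations~3, 4, 11 fires, let $G_1,\dots,G_k$ and $c$ be as in the definition of weakly safe. Then $opt(G)=\sum_i opt(G_i)+c$ and the algorithm returns $\tilde T$ with $w(\tilde T)\ge\sum_i w(T_i)+c$; since $w(T_i)\ge\frac{13}{17}opt(G_i)$ by induction and $c\ge 0$,
\[
  w(\tilde T)\;\ge\;\frac{13}{17}\sum_{i=1}^k opt(G_i)+c\;\ge\;\frac{13}{17}\left(\sum_{i=1}^k opt(G_i)+c\right)\;=\;\frac{13}{17}\,opt(G).
\]
The base case $|V(G)|\le 8$ is exact.

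The heart of the proof is the final branch, where the output $T_{apx}$ is obtained by transforming a preferred TFPCC; here I would combine the three inequalities of Lemma~\ref{lem:apx} linearly. Take $\frac{12}{13}$ times Statement~\ref{stat:opt1} plus $\frac{1}{13}$ times Statement~\ref{stat:opt2}, and then substitute $b_2\le\frac{5}{4}g_2$ and $b_3\le g_3$ (which follow from Conditions~C2 and~C3, and are legitimate since both coefficients are positive):
\[
  opt(G)\;\le\;\frac{51}{13}c_4+5c_5+\frac{51}{13}p_4+\frac{12}{13}b_2+\frac{12}{13}b_3+\frac{2}{13}g_2+\frac{2}{13}g_3\;\le\;\frac{51}{13}c_4+5c_5+\frac{51}{13}p_4+\frac{17}{13}g_2+\frac{14}{13}g_3.
\]
Multiplying by $\frac{13}{17}$ gives $\frac{13}{17}opt(G)\le 3c_4+\frac{65}{17}c_5+3p_4+g_2+\frac{14}{17}g_3\le 3c_4+4c_5+3p_4+g_2+g_3$, and the right-hand side is at most $w(T_{apx})$ by Statement~\ref{stat:apx} of Lemma~\ref{lem:apx}. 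I would note that the weights $\frac{12}{13}$ and $\frac{1}{13}$ are forced: making the $c_4$ (equivalently $p_4$) inequality and the $g_2$ inequality simultaneously tight pins them down exactly, and one must check $\frac{65}{17}\le 4$ and $\frac{14}{17}\le 1$ so that the coefficientwise comparison against Statement~\ref{stat:apx} is valid.

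The main obstacle is really upstream of this theorem, namely Lemma~\ref{lem:apx} and the chain of structural lemmas in Sections~5--6 that feed it; within the proof of Theorem~\ref{th:main} itself the only delicate points are (i) checking that a near-optimal spanning tree of a reduced instance always lifts back without losing weight for every strongly safe operation, so the induction is sound, and (ii) identifying the unique convex combination of the two upper bounds on $opt(G)$ that, after using $g_2\ge\frac{4}{5}b_2$ and $g_3\ge b_3$, produces a coefficient vector dominated by $\frac{17}{13}$ times the lower bound in Statement~\ref{stat:apx}.
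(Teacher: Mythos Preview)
Your proof is correct, and the handling of the recursive structure via induction on $|V(G)|+|E(G)|$ is more explicit than the paper's (which takes the lifting through safe operations as implicit). The core ratio argument, however, differs in form from the paper's.

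The paper proceeds by case analysis on ratios: it sets $r_1=\frac{3c_4+4c_5+3p_4+g_2+g_3}{4c_4+5c_5+4p_4+b_2+b_3}$ and $r_2=\frac{3c_4+4c_5+3p_4+g_2+g_3}{3c_4+5c_5+3p_4+2g_2+2g_3}$, strips out the $c_5$ contribution (which alone gives ratio $\tfrac{4}{5}$), and then argues that if the reduced $r_1'$ falls below $\tfrac{13}{17}$ one obtains $c_4+p_4>17g_2+17g_3-13b_2-13b_3$, which when substituted into $r_2'$ forces $r_2'>\tfrac{13}{17}$ using $g_2\ge\tfrac{4}{5}b_2$ and $g_3\ge b_3$. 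Your approach instead takes the single convex combination $\tfrac{12}{13}\cdot(\text{Stmt~\ref{stat:opt1}})+\tfrac{1}{13}\cdot(\text{Stmt~\ref{stat:opt2}})$ and compares coefficient-wise against $\tfrac{17}{13}$ times Statement~\ref{stat:apx}. This is the LP-dual view of the same inequality system: your weights $(\tfrac{12}{13},\tfrac{1}{13})$ are exactly the dual certificate for the bound $\tfrac{13}{17}$, which is why the $c_4$/$p_4$ and $g_2$ constraints become simultaneously tight. Your route is shorter and makes transparent that $\tfrac{13}{17}$ is the optimal ratio extractable from Lemma~\ref{lem:apx} alone; the paper's case split establishes the same thing but hides the dual multipliers inside the contradiction argument.
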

\begin{proof}
Let $T_{apx}$ be as in Lemma~\ref{lem:apx}, and $r = w(T_{apx}) / opt(G)$. 
By Lemma~\ref{lem:apx}, $r \ge \max\{r_1,r_2\}$, where 
$r_1 = \frac{3c_4 + 4c_5 + 3p_4 + g_2 + g_3}{4c_4 + 5c_5 + 4p_4 + b_2 + b_3}$ 
and $r_2 = \frac{3c_4 + 4c_5 + 3p_4 + g_2 + g_3}{3c_4 + 5c_5 + 3p_4 + 2g_2 + 2g_3}$. 
Note that $r_1 \ge \min\left\{ \frac{4}{5}, r'_1\right\}$ and 
$r_2 \ge \min\left\{ \frac{4}{5}, r'_2\right\}$, where 
$r'_1 = \frac{3c_4 + 3p_4 + g_2 + g_3}{4c_4 + 4p_4 + b_2 + b_3}$ 
and $r'_2 = \frac{3c_4 + 3p_4 + g_2 + g_3}{3c_4 + 3p_4 + 2g_2 + 2g_3}$. 
So, it suffices to show that $\max\{r'_1,r'_2\} \ge \frac{13}{17}$. 
This is done if $r'_1 \ge \frac{13}{17}$. Thus, we assume that $r'_1 < \frac{13}{17}$. 
Then, $c_4 + p_4 > 17g_2 + 17g_3 - 13b_2 - 13b_3$. Hence, 
$r'_2 > \frac{52g_2+52g_3-39b_2-39b_3}{53g_2+53g_3-39b_2-39b_3} \ge 
\min\left\{ \frac{52g_2-39b_2}{53g_2-39b_2}, \, \frac{52g_3-39b_3}{53g_3-39b_3}
\right\}$. Now, since $g_2 \ge \frac{4}{5}b_2$, $\frac{52g_2-39b_2}{53g_2-39b_2}
\ge \frac{13}{17}$. Moreover, since $g_3 \ge b_3$, $\frac{52g_3-39b_3}{53g_3-39b_3}
\ge \frac{13}{14}$. Therefore, $r'_2 > \frac{13}{17}$. 
The running is clearly as claimed.
\end{proof}

Recall that $t(n,m)=O(n^2m^2)$ \cite{Har84}. So, the algorithm takes $O(n^2m^2)$ time.

\end{document}